\newcommand{\xhdr}[1]{\vspace{1mm} \noindent\textbf{#1.}}
\newcommand{\var}{\mathrm{var}}
\newtheorem{remark}{Remark}
\newtheorem{theorem}{Theorem}
\newtheorem{proposition}{Proposition}
\newtheorem{lemma}{Lemma}
\newtheorem{corollary}{Corollary}
\newcommand{\Exp}{\mathbb{E}}
\newcommand{\reals}{\mathbb{R}}
\newcommand{\vx}{\mathbf{x}}
\newcommand{\hy}{\hat{y}}
\newcommand{\vtheta}{\boldsymbol{\theta}}
\newcommand{\zeros}{\textbf{0}}
\newcommand{\cG}{\mathcal{G}}
\newcommand{\cN}{\mathcal{N}}
\newcommand{\cY}{\mathcal{Y}}
\newcommand{\cX}{\mathcal{X}}
\newcommand{\cL}{\mathcal{L}}
\author{}
\author{
   \makebox[.45\textwidth]{Hoda Heidari}\\
   Carnegie Mellon University\\
   \url{hheidari@cmu.edu} \\
   \and
   \makebox[.45\textwidth]{Solon Barocas}\\
   Cornell University\\
   \url{sbarocas@cornell.edu} \\
    \and
   \makebox[.45\textwidth]{Jon Kleinberg}\\
   Cornell University\\
   \url{kleinberg@cornell.edu} \\
      \and
   \makebox[.45\textwidth]{Karen Levy}\\
   Cornell University\\
   \url{karen.levy@cornell.edu} \\
}
\title{Informational Diversity and Affinity Bias\\ in Team Growth Dynamics}
\date{}
\begin{document}

\maketitle

\begin{abstract}
Prior work has provided strong evidence that, within organizational settings, teams that bring a diversity of information and perspectives to a task are more effective than teams that do not.
If this form of \emph{informational diversity} confers performance advantages, why do we often see largely homogeneous teams in practice?  One canonical argument is that the benefits of informational diversity are in tension with \emph{affinity bias}.
To better understand \hhedit{the impact of this tension on the makeup of teams,}\hhdelete{this tension,} we analyze a sequential model of team formation in which individuals care about their team's performance (captured in terms of accurately predicting some future outcome based on a set of features) but experience a cost as a result of interacting with teammates who use different approaches to the prediction task. 
\hhedit{Our analysis of this simple model reveals a set of subtle behaviors that team-growth dynamics can exhibit: (i) from certain initial team compositions, they can make progress toward better performance but then get stuck partway to optimally diverse teams; while (ii) from other initial compositions, they can also move away from this optimal balance as the majority group tries to crowd out the opinions of the minority. The initial composition of the team can determine whether the dynamics will move toward or away from performance optimality, painting a \emph{path-dependent} picture of \emph{inefficiencies} in team compositions.}
\hhdelete{
Through this lens, we \hhedit{observe} that 
more diverse teams perform better but are less happy with their within-team interactions. \hhedit{Subsequently}, our analysis identifies a \hhedit{\emph{path-dependent}} type of \emph{inefficiency} \hhedit{in team compositions}: teams fail to bring on an adequate number of individuals with divergent predictive models to achieve the highest level of performance. }
Our results formalize a fundamental limitation of utility-based motivations to drive informational diversity in organizations and hint at interventions that may improve informational diversity and performance simultaneously.
\end{abstract}

\section{Introduction}

A long line of work in the social sciences has argued that, within organizational settings, groups that bring a diversity of perspectives to a task can be more effective than groups that do not.
The combination of distinct perspectives makes more information available
to a group, and can enable productive synergies among these sources
of information, improving a team's performance \citep{page2008difference, burt2004structural}.
Along with observations of this phenomenon in practice, a set of
mathematical models has sought to formalize these types of
informational advantages in abstract settings in which a group of
agents engage in collective problem-solving \citep{hong2004groups}.

If this form of \emph{informational diversity}\footnote{The literature sometimes refers to this type of diversity as \emph{cognitive} diversity~\citep{page2019diversity}. We use the term \emph{informational} diversity to emphasize that team members are bringing new informational resources to bear on solving problems.} confers performance advantages on teams
within organizations, why do we so often see teams that are largely
homogeneous in practice?
A canonical argument is that the benefits of informational diversity
are in tension with {\em affinity bias}, a human behavioral phenomenon
in which people prefer to interact with others who have similar perspectives.
This tendency is well-documented by prior work in organizational psychology~\citep{huang2019women,mccormick2015real,oberai2018unconscious}.
It is an aggregate effect that can stem from a range of different \hhedit{underlying}
mechanisms: for example, it may arise 
because people have an inherent preference for others
with similar perspectives, or because they have difficulty evaluating
others with different perspectives, or because they prefer teams with
fewer disagreements or those whose aggregate view is closer to their
own.  All of these would produce a version of affinity bias as an
outcome.
For purposes of our discussion here, we will focus on the observable
effects of these mechanisms in the form of affinity bias without restricting ourselves to a specific 
underlying mechanism.

The tension between informational diversity and affinity bias is
the basis for a number of empirical results establishing that 
informationally diverse teams can lead simultaneously to 
higher-quality solutions but also lower group cohesion
\citep{phillips2009pain,milliken1996searching,watson1993cultural}.
Such findings highlight the challenge in building informationally diverse 
teams: expanding a team by adding members with dramatically different
perspectives has the potential to improve the team's performance
but also to reduce the subjective value of the experience for participants
due to their affinity bias.
We are interested in understanding the \hhdelete{fundamental principles}\hhedit{the fundamental phenomena that emerge} from this conflict between informational diversity and affinity bias.
In particular, what are the implications of this tension for the composition of teams \hhedit{that form} as new members are brought on and the team grows in size?

\xhdr{\bf The present work: modeling informational diversity with affinity bias}
In this paper, we develop a model for team formation in the presence
of both informational diversity and affinity bias.
Whereas earlier models of informational diversity formulated
agent-level objective functions in such a way that the agents
should always favor greater diversity (see, e.g.,~\citep{lamberson2012optimal}),
our class of models explicitly captures the
tension between these forces in agents' utilities. 
In particular, in our model, 
agents \hhedit{forming} a team \hhedit{are} faced
with a prediction task: they see instances of a prediction
problem encoded by features, and they must make a prediction about
some future outcome for each instance.
A team of policymakers trying to predict the effect of 
policy interventions, 
a team of investors trying to predict which \hhedit{start-up} companies will be successful, 
a team of doctors faced with complex medical diagnosis, 
or a team of data scientists participating in web-based competitions such as the Netflix Prize and Kaggle competitions
are all among the types of scenarios captured by this framework.

While prior work assumes that agents aim to minimize the overall predictive error of their teams~\citep{lamberson2012optimal}, our approach uses these earlier formalisms as building blocks to produce a more general model in which each agent has an objective function \hhedit{comprised of the} sum of two terms
(capturing the two forces we are considering):
one term is the error rate of the team, and the other term
is their level of dissimilarity to other team members. 
A one-dimensional parameter controls the relative weight of these
two terms in the objective function; this general form for the objective function allows us to study 
extremes in which agents care primarily about team performance
or primarily about team homogeneity.

We are particularly interested in the process by which teams grow over
time, as they decide sequentially which new members to add.
For any configuration of a team, we can ask which types of agents
the team would be willing to add, where the criterion for adding
a new member is that it improves the aggregate utility of the
\hhdelete{current} team members (via the weighted sum of team performance and 
individual disagreement with others). Our main takeaway from the analysis of this model is a path-dependent characterization of inefficiencies in team compositions formed through the above sequential growth dynamics. 
This characterization hints at
organizational interventions that may improve informational diversity and performance simultaneously, including those that help 
reduce the impact of affinity bias on team formation dynamics,
or those that initiate teams from a more informationally diverse
composition (thereby beginning the dynamics at more favorable
initial conditions).

\vspace{-2mm}
\subsection{Model Overview}
We consider a setting in which a team consisting of multiple members is tasked with making complex, non-routine decisions based on the members' collective predictions about some future outcome.
Importantly,  the task is \emph{complex} and not further decomposable into \emph{specialized sub-tasks} that can be \hhedit{accomplished}\hhdelete{performed} independently\hhdelete{ and aggregated readily}.\footnote{\hhedit{Note that in our model, even though agent types rely on different sets of features to reach their predictions, each agent tries to solve the \emph{same} problem (i.e., predicting the outcome for a given state of the world) in its entirety.}} We have mentioned several examples of real-world settings in which these conditions are approximately met. As an example, consider aggregating the diverse forecasts of individual members of a marketing team to predict a new product's expected sales~\citep{lamberson2012optimal}.

\xhdr{Team problem-solving mechanism}
Team members have predictive models of the world (e.g.,  predicting the expected sales of a product).
Given a new case, each member uses their model to predict the outcome of the case. (In different contexts, this model can either be an abstraction of the team member's mental model of the domain, or it could be an actual implementation of a computational model that they have built.  Our model operates at a level of generality designed to address both these scenarios in general terms.)  
For simplicity, we assume individuals belong to one of the two opinion groups or \emph{types}, with those belonging to the same type holding similar predictive models. \hhedit{More precisely, an agent belonging to a given type has access to a \emph{noisy} version of that type's predictive model.}
\hhdelete{For now, let's assume t}the team aggregates its members' opinions into a collective prediction/decision using \hhedit{an aggregation mechanism, such as }simple averaging. 

\xhdr{Team's (dis)utility ($\lambda$)} The team's cost function combines two factors: (1) the expected error rate of the team's predictions; and (2) the dissimilarity among team members' predictors. In particular, the team aims to minimize:
$$\lambda \times \text{level of dissimilarity among team members} + (1-\lambda) \times \text{team's predictive error},$$
where the dissimilarity between two teammates is captured by the expected level of disagreement in their predictions for a randomly drawn case.
Note that various choices for $\lambda$ reflect different team preferences. For example, 
$\lambda= 0$ corresponds to a team which is solely concerned with improving accuracy.
$\lambda= 1$ corresponds to a team which only cares about minimizing internal disagreement.
Intermediate $\lambda$ values (e.g., $\lambda=0.5$) capture teams that weigh both accuracy and similarity.

\xhdr{The effect of team size ($\beta$)}
To capture the relationship between team size and level of dissimilarity among team members, we introduce a parameter $\beta \in [0, 1]$, which at a high-level captures the psychological costs of cooperation~\citep{borocs2010struggles} and managing conflicts~\citep{higashi1993determines} as the team grows in size. In particular, as described in Section~\ref{sec:model}, for a team of size $n$, we assume that the sum of pairwise disagreements among team members is normalized by $1/n^{1+\beta}$. 
This means that when $\beta = 0$ the psychological cost of disagreement grows linearly in the number of teammates who hold conflicting opinions, whereas when $\beta = 1$ the cost depends only on the fraction of agents with conflicting opinions, independent of team size.  Values of $\beta$ strictly between 0 and 1 interpolate between these extremes.

\xhdr{A \hhedit{parametric}\hhdelete{richer} class of aggregation mechanisms ($\alpha$)} \hhedit{While prior work takes simple averaging as the team's approach to aggregating predictions,} to better understand the effect of the aggregation mechanism \hhedit{on team's composition}, \hhdelete{instead of solely focusing on simple averaging of predictions,} we consider a natural class of aggregation functions parameterized by $\alpha \geq 0$. \hhedit{This parametric family is} akin to the Tullock's contest success function~\citep{jia2013contest,skaperdas1996contest}, in which a homogeneous subgroup of size $m$ in the team has its opinion weighted by $m^\alpha$ in the overall team aggregation.
In the case of only two opinion types, $\alpha = 1$ corresponds to the uniform average, and the limit $\alpha \rightarrow \infty$ corresponds to the majority rule (or the median).

\vspace{-2mm}
\subsection{Team Growth Dynamics}
We assume $\lambda$, $\beta$, and $\alpha$ are fixed throughout.
At each time step, a new agent arrives and the current team considers whether to bring them on as a new member. 
We assume this decision is made
by assessing whether the addition of the new agent to the team would reduce its disutility. \hhedit{Note that adding more than one member of each type may be desirable to the team for two distinct reasons: (a) since each agent’s prediction is a \emph{noisy} version of its type, adding multiple members of the same type leads to noise reduction; (b) having more than one member of each type might be necessary for the team to achieve the accuracy-optimal balance between types. (For example, if the accuracy-optimal composition consists of twice as many agents of type $A$ compared to type $B$, a team starting with 1 member of each type may find it beneficial to hire another member of type $A$.) }

For simplicity, \hhedit{in the basic version of our model, }we assume that teams can precisely measure both their internal levels of dissimilarity and their predictive errors. In particular, 
we make the simplifying assumption that a team can perfectly estimate its current accuracy as well as changes in accuracy as a result of bringing on a new member. This assumption is common in prior work (see, e.g., \citep{lamberson2012optimal,hong2020contributions}). 
Our key observation is that even with this optimistic assumption in place, teams fail to raise sufficient informational diversity to optimize accuracy. As discussed in Section~\ref{sec:extension}, if teams underestimate the accuracy gains of increased informational diversity, the incentive to bring on diverse team members would only be further hampered. 
(We will further demonstrate the effect of \hhedit{both the under- and over-estimation of accuracy gains on team growth dynamics} in Section~\ref{sec:extension}.)

\begin{figure}[h]
    \centering
    \includegraphics[width=0.8\textwidth]{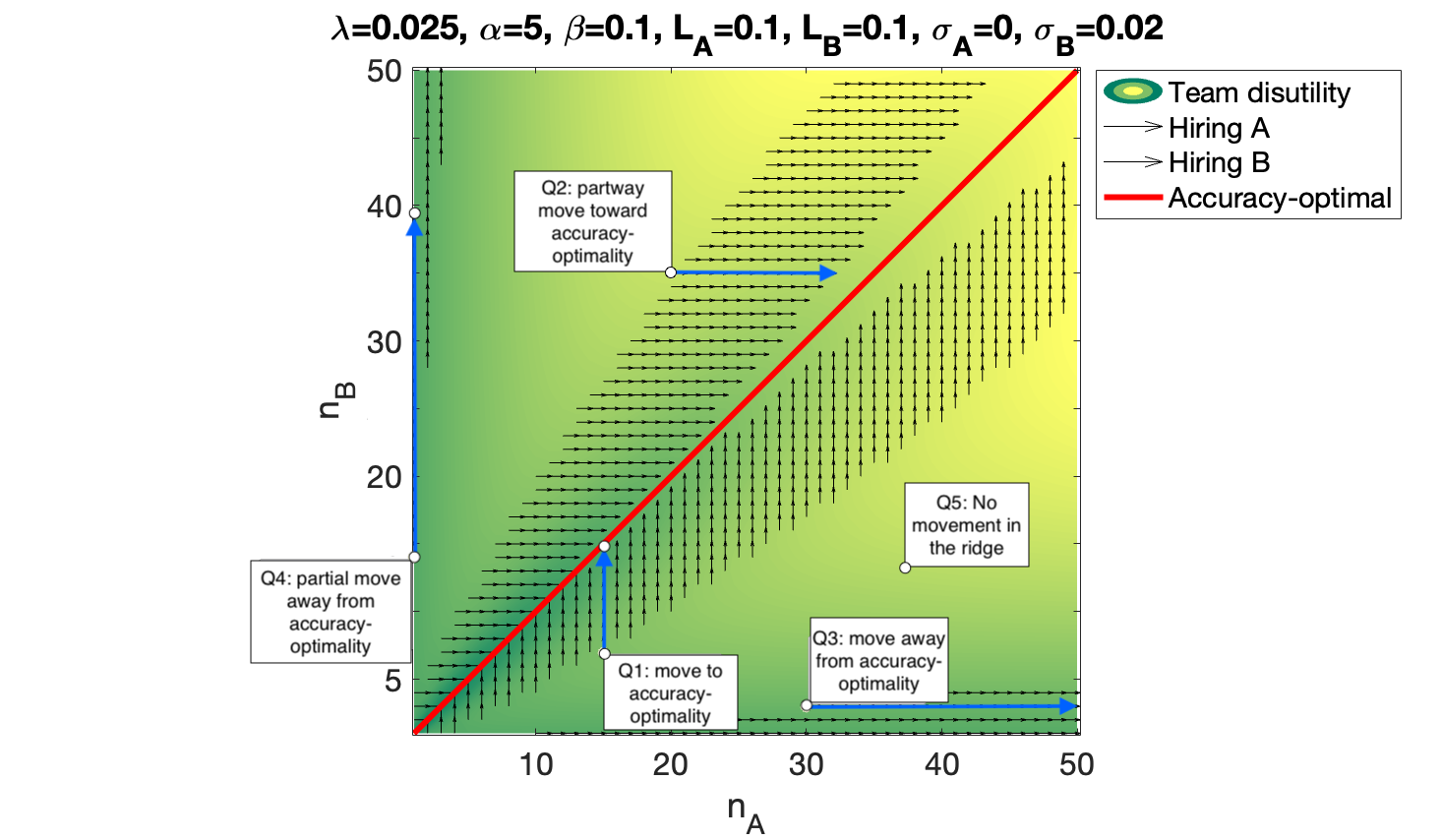}
    \vspace{-4mm}
    \caption{Preview of team growth dynamics. The x-axis specifies the number of type $A$ members  in the team and the y-axis, the number of type $B$ members. Arrows point in the direction of disutility reduction. The eventual composition is highly \emph{path-dependent} and often \emph{inefficient}.}
    \label{fig:preview}
    \vspace{-4mm}
\end{figure}

\vspace{-2mm}
\subsection{Insights from the Analysis}
Our analysis {characterizes} the kind of teams that form as the result of the interplay between predictive accuracy and affinity bias, depending on the three {primary} parameters of the model:
\begin{itemize}
\item \emph{$\lambda$, or the relative impact of affinity bias vs. accuracy on team growth dynamics:} In the extreme cases of $\lambda=0,1$ the team formation dynamics behave as one may expect: When the team only cares about accuracy ($\lambda=0$), it reaches the accuracy (=utility) optimal composition regardless of its initial makeup. If the team solely cares about reducing disagreement, only the initial majority type can bring on more members of its own. 
 For the intermediate values of $\lambda$, however, it is not a priori clear how inefficiency in team's accuracy emerges as $\lambda$ grows. One may expect a tipping point phenomenon, where $\lambda$ has to be larger than a certain threshold to hinder the formation of accuracy optimal teams.  With relatively few assumptions, our analysis shows that the pattern is, in fact, markedly different: For \emph{any} intermediate value of $\lambda$ team formation dynamics get stuck in local utility optima, failing to achieve accuracy-optimal compositions.
\item \emph{$\alpha$, or the mechanism by which individual predictions get aggregated into a team prediction:} 
As $\alpha$ grows, the team's rule for arriving at a collective prediction varies smoothly from pure averaging to a median or majority rule type of function.
In the process, the majority type's prediction has an increasingly dominant effect on the collective prediction as $\alpha$ grows.
This leads to a dynamic in which the prospect of adding new members from the 
less-represented type produces negligible accuracy gains but
non-trivial disagreement cost; as a result, new members from the 
less-represented group will not be added unless their relative size
on the current team is already substantial enough.
\item \emph{$\beta$, or the impact of team size on perceptions of within-team disagreements:} 
As $\beta$ becomes smaller, team size will play a more dominant role in affecting perceptions of dissimilarity/disagreement. 
As a result, the team will never add more than a certain number of the less-represented type. Depending on the initial composition of the team, the majority type may find it beneficial to continue adding new members of its own to drown out the predictions of the other type, and thereby drive down the cost
arising from dissimilarity. \hhedit{When within-type disagreements are non-zero---which can be the case due to the noise in the predictions made by agents of the same type---the majority type may stop expanding itself to avoid increasing within-type disagreements.}
\end{itemize}

Taken together, these principles suggest that team-growth dynamics can exhibit a set of subtle behaviors: (i) from certain initial team compositions, they can make progress toward better performance but then get stuck partway to optimally diverse teams; but (ii) from other initial compositions, they can also move away from this optimal balance as the majority group tries to crowd out the opinions of the minority. The initial composition of the team can determine whether the dynamics will move toward or away from performance optimality. 

It is natural to visualize this process geometrically as taking place in a Cartesian plane where the point $(n_A,n_B)$ represents a team with $n_A$ members of group $A$ and $n_B$ members of group $B$ \hhedit{and arrows initiating from point $(n_A,n_B)$ point to the direction in which growing the team would improve its utility}. Team growth dynamics then correspond to a walk through this space; and the \hhedit{destination}\hhdelete{direction} that this walk \hhedit{heads to}\hhdelete{takes} depends on the point it starts from. Figure~\ref{fig:preview} provides an example for how this analysis operates on a specific instance of the problem. In the figure, the diagonal line shows the optimal team composition, and arrows starting from a point $(n_A,n_B)$ on the plot indicate the direction in which a team consisting of $n_A$ members of groups $A$ and $n_B$ members of group $B$ grows. (The specific instance in the figure is described by parameter values $\alpha=5$, $\beta=0.1$, $\lambda=0.025$ using the notation from earlier; and both opinion types, $A$ and $B$, have the same error rate of $0.1$ ($\cL^A = \cL^B = 0.1$). \hhedit{As will be described in Section~\ref{sec:model}},\hhdelete{Recall that} in our model, we assume these similar error rates are achieved using different predictive attributes; therefore, teams consisting of both types achieve higher accuracy than homogeneous ones.) 

The figure illustrates in a concrete example the set of underlying principles that are formalized by our results. Specifically, looking at how the arrows for team growth point in different parts of the plane, we see that the space decomposes into a set of different regions with distinct behaviors. There is a {\em valley} near the diagonal: a subset of points close to the \hhedit{accuracy }optimum where growth dynamics will move the team toward\hhdelete{ the diagonal}. Some of these, like point $Q_1$ in the figure, will iterate all the way to \hhedit{accuracy} optimality, while others, like point $Q_2$ in the figure, will move partway to the optimum and then get stuck. There is also a {\em downslope} near each axis: points like $Q_3$ that are sufficiently close to the axis will actually move away from the \hhedit{accuracy} optimum and further out along the axis, corresponding to teams that add more of the majority type to reduce average dissimilarity. \hhedit{The expansion of the majority group through growth dynamic may stop if within-team disagreements become non-negligible (see, e.g., the dynamics initiating at point $Q_4$)}. Finally, there is a {\em ridge} that separates the central valley from the outer downslope; which side of the ridge a point is on determines whether it iterates in the direction of optimality or away from it. Points that actually lie on this ridge, like $Q_5$, do not move at all. 

Thus, our results suggest that there can be a critical level of team heterogeneity in the process: once the team passes this level of heterongeneity, then the growth dynamics will improve its performance; but if it falls short of this level of heterogeneity, then the growth dynamics may cause it to unravel toward greater homogeneity and lower performance. 

The type of analysis outlined above, while stylized in the context of our model, suggests several broader insights that can be actionable. First, the positive impact of diversity on a team's performance alone will not incentivize high-performing teams to form. Second, the analysis highlights some of the levers available to the planner to influence the team growth dynamics. Some of these are visible in Figure~\ref{fig:preview}, like the choice of initial team composition. Others are implicit in the choices of parameters --- for example, in the choice of aggregation mechanism (corresponding to $\alpha$) for resolving conflicts of opinion among team members. 

\vspace{-2mm}
\section{Related Work}
\xhdr{Optimal forecasting teams} Combining multiple predictors to achieve better predictive accuracy is a common and well-studied approach in machine learning, operations research, and economics \citep{clemen1989combining,armstrong2001principles}. Prior work has showed that combining a diverse set of predictors often improves performance \citep{batchelor1995forecaster,hong2004groups,lobo1990combining}. Motivated by the empirical evidence, prior work has proposed formal models of forecast aggregating teams~\citep{lamberson2012optimal,davis2015composition}. For example, \citet{lamberson2012optimal} focus on the role of team size on determining its optimal composition for making predictions. While our model closely follows \citep{lamberson2012optimal}, the question we are interested in is fundamentally different. It is also worth noting that the team formation process in our model can be viewed as a variant of ensemble learning in machine learning---with the key difference that there exists a cost associated with combining diverse models.

\xhdr{Comparison with \citep{lamberson2012optimal}}
Our work extends the model proposed by \citeauthor{lamberson2012optimal}, who study the optimal composition of teams making combined forecasts. Similar to our model, \hhedit{accuracy serves as a proxy for teams’ problem-solving abilities; the aggregation mechanism is fixed ahead of time;} agents belong to one of the two predictive types, $A$ and $B$, and a positive and fixed covariance exists between the errors made by any two agents of the same type. The key question is ``what composition of types minimizes the team's mean squared error?''. \citeauthor{lamberson2012optimal}'s key finding is that for large teams, the optimal composition is mainly comprised of the type with the lowest error covariance, even if the type is not the most accurate. In contrast, in small groups, the highest accuracy type will be in the majority.
Our major point of departure from \citep{lamberson2012optimal} is the team's objective function: Instead of assuming teams solely aim to maximize accuracy, we also account for the effect of affinity bias. Additionally, while \citeauthor{lamberson2012optimal}'s analysis focuses on uniform averaging of forecasts across team members, we study a richer class of aggregation mechanisms. Finally, unlike the prior contribution, which investigates the effect of within-type error covariance on accuracy-optimal compositions, we fix the error covariance of types and instead focus on teams' growth dynamics as the tensions between accuracy and affinity bias play out.

\xhdr{Diversity in team performance and dynamics} 
A substantial body of empirical and theoretical research has investigated the impact of diversity on teams' performance and dynamics. A significant part of this literature studies diversity with respect to demographic characteristics such as race, gender, and age/generation~\citep{guillaume2017harnessing, pelled1996demographic,elsass1997demographic}. Other scholars have focused on diversity in \emph{job-related}\footnote{According to \citet{pelled1996demographic}, ``job-relatedness is the extent to which the variable directly shapes perspectives and skills related to cognitive tasks.''} characteristics such as education level or tenure~\citep{sessa1995diversity,milliken1996searching,pelled1999exploring}. Our work is closer to the latter category of diversity. Importantly, our contributions do not directly apply to demographic diversity in organizations. 

Empirical work has investigated the impact of diversity on group performance and effectiveness. 
Some of the prior work argues that diversity can be a ``double-edged sword,'' meaning that it can lead to higher-quality solutions, while reducing group cohesion~\citep{phillips2009pain,milliken1996searching,lauretta1992effects,watson1993cultural,o1989work}. The goal of our analysis is to understand why and under what conditions diversity acts this way.

%
Aside from performance, empirical studies have established that groups consisting of dissimilar individuals leads to less attraction and trust among peers \citep{chattopadhyay1999beyond}, less frequent communication \citep{zenger1989organizational}, lower group commitment and psychological attachment \citep{tsui1992being}, lower task contributions \citep{kirchmeyer1993multicultural,kirchmeyer1992multicultural}, and lower perceptions of organizational fairness and inclusiveness \citep{mor1998organizational}. 
Compared to homogeneous groups, heterogeneous groups are found to have reduced cohesiveness \citep{terborg1975longitudinal}, more conflicts and misunderstandings \citep{jehn1997agree} which, in turn, lowers members' satisfaction, decreases cooperation \citep{chatman2001influence}, and increases turnover \citep{jackson1991some}. These empirical findings are reflected through an inherent taste for agreement among team members in our model.

\xhdr{Affinity bias} Affinity bias or homophily is the tendency of individuals to gravitate toward or associate with others whom they consider similar to themselves. The similarity could be in terms of demographic characteristics (such as race, ethnicity, age, or gender), social status (e.g., job title), values (e.g., political affiliation), or beliefs. A substantial body of empirical work has established the existence of homophilly in social networks
~\citep{mcpherson2001birds}. Affinity bias in organizational processes has been documented and discussed extensively~\citep{huang2019women,mccormick2015real,oberai2018unconscious}.

\xhdr{Hedonic games} Hedonic games model the formation of coalitions (or teams) of players in settings where players have preferences over coalitions~\citep{bogomolnaia2002stability}. Existing work in the area focuses on the stability of game outcomes (e.g., by evaluating whether the outcome of the game belongs to the core).
Similar to hedonic games, in our setting, each agent's payoff depends on the other members of her team. However, unlike hedonic games, we are not interested in how society partitions itself into disjoint coalitions. Instead, we study a team that evolve sequentially when current members get to decide who joins next.

\xhdr{Wisdom of crowds and prediction markets} The wisdom of crowds~\citep{surowiecki2005wisdom,mannes2012social} capture the idea that groups of people often perform better at prediction tasks compared to individuals. This idea has been the basis of ``prediction markets'' where agents can buy or sell securities whose payoff correspond to future events. The market prices can indicate the crowd's collective belief about the probability of the event of interest~\citep{wolfers2004prediction,arrow2008promise}. In prediction markets, traders do not generally form groups or coalitions; rather, they bet against each other. A trader receives the highest possible payoffs only if their prediction about the future state of the world is correct and only a small subset of other traders have made their bet according to the correct prediction. Unlike prediction markets, our model assumes individual members of a team are concerned with the \emph{overall} performance of their team. Additionally, we set aside incentive considerations to focus on the interplay between accuracy and affinity bias in team growth dynamics.

\vspace{-1mm}
\section{The Basic Model}\label{sec:model}
Let $\cX$ denote the set of all possible states of the world distributed according to a probability distribution $P$. We assume each state of the world is described by a feature vector, $\vx = (x_1, \cdots, x_r) \in \cX$, consisting of uncorrelated attributes $x_1, \cdots, x_r$, that is, $\text{cov}(x_i, x_j) = 0$ for all $j \neq i$. Each state of the world, $\vx$, leads to an outcome $y \in \cY$. We assume there exists a true outcome function $f^*$, such that for any $\vx \in \cX$, $y = f^*(\vx)$ is the true outcome of $\vx$. For simplicity and unless otherwise specified, we assume $f^*$ is deterministic, $\cX = \reals^r$, and $\cY = \reals$.

Consider a set of agents all capable of making predictions about the true outcome given the state of the world.  
An agent $i$ has a \emph{fixed} predictive model of the world, denoted by $f_i : \cX \longrightarrow \cY$, which maps each possible state of the world, $\vx \in \cX$, to a predicted outcome, $\hy_i = f_i(\vx)$. We will use $\cL_i$ to denote the accuracy loss of $i$'s predictions. More precisely, given a loss function $\ell: \cY \times \cY \longrightarrow \reals$,
$$\cL_ i = \Exp_{\vx \sim P} \left[ \ell(\hy_i, y) \right].$$
As an example, $\ell$ can be the squared error, that is, $\ell(\hy_i, y) = (\hy_i - y)^2$.

For any two predictive models $f_i,f_j$, we define the level of disagreement between them through a distance metric, $\delta: \cY \times \cY \longrightarrow \reals^+$. 
$$d_{i,j} = \Exp_{\vx \sim P} \left[ \delta(f_i(\vx), f_j(\vx)) \right].$$
As an example, $\delta$ can be the squared $L_2$ norm, that is, $\delta(f_i(\vx), f_j(\vx)) = (f_i(\vx)- f_j(\vx))^2$.
To simplify the analysis, we will first focus on simple quadratic loss ($\ell$) and distance ($\delta$) functions, i.e., $\ell(y,y') = \delta(y,y') = (y-y')^2$. Later in Section~\ref{sec:extension}, we show that our results extend to a larger family of distance metrics and loss functions.

\xhdr{Agent types} For simplicity and following prior work~\citep{lamberson2012optimal}, we assume there are two \emph{types} of agents, $A$ and $B$, each with a type-specific predictive model of the world. In particular, individuals of each type base their predictions on a type-specific subset of features, and their predictions are a noisy version of the highest accuracy predictor on those features. (The assumption of individuals utilizing the highest-accuracy predictor available to them is common in prior work. See, e.g., ~\citep{hong2020contributions}.) \emph{Without loss of generality},\footnote{Suppose the intersection of the two types' feature sets is non-empty. Since we assume both teams train the highest-accuracy predictor on their feature set, they will weigh the common features similarly. By subtracting the portion of their predictors corresponding to common features from $f^*$, $f_A$ and $f_B$, we can equivalently assume types have access to disjoint features.} suppose an individual of type $A$ only takes features $x_1, \cdots, x_k$ into consideration, whereas an individual of type $B$ utilizes features $x_{k+1} ,\cdots , x_r$ to make a prediction about $\vx$. 

We assume the true function $f^*$ is linear in the feature vector $\vx$. Therefore, it can be decomposed into two components---corresponding to the two types' feature sets. In particular, $\forall \vx=(x_1,\cdots,x_r) \in \cX:$
\begin{eqnarray}\label{eq:team_cost}
f^*(\vx) &&=  \vtheta^* \cdot \vx  =  \theta^*_1 x_1   + \cdots + \theta^*_k x_k  + \theta^*_{k+1} x_{k+1}  + \cdots + \theta^*_r x_r \\
&&= \vtheta^*_{1, \cdots, k} \cdot (x_1, \cdots , x_k) + \vtheta^*_{k+1, \cdots, r} \cdot (x_{k+1} ,\cdots , x_r)
\end{eqnarray}
We will refer to $\vtheta^*_{1, \cdots, k}$ as $\vtheta^A$ (since it's the accuracy-optimal weights on $A$'s features) and $\vtheta^*_{k+1, \cdots, r}$ as $\vtheta^B$, so that $f^*(\vx) = \vtheta^A \cdot \vx + \vtheta^B \cdot \vx$. Given an instance $\vx$, individuals of each type can produce a noisy prediction according to their type's highest-accuracy predictive model. More specifically, an individual of type $A$ predicts 
$$f^A(\vx) = \vtheta^A \cdot \vx + \theta^A_0+ \epsilon_A$$ for $\vx$, where $\epsilon_A$ is an i.i.d. noise sampled from a mean-zero Gaussian distribution with variance $\sigma^2_A$, and $\theta^A_0 = \Exp[\vtheta^B \cdot \vx]$ is the intercept. Similarly, an individual of type $B$ predicts $f^B(\vx) = \vtheta^B \cdot \vx + \theta^B_0 + \epsilon_B$. 
To avoid having to carry the dot-product notation, we define the shorthand functions $\theta^A(\vx) := \vtheta^A \cdot \vx$ and $\theta^B(\vx) := \vtheta^B \cdot \vx$. We will also use $\cL^A, \cL^B$ to refer to the (noise-less) accuracy loss of each type's predictive model (i.e., $\cL^c = \Exp_{\vx \sim P} \left[ \ell(\theta^c \cdot \vx, y) \right]$ for $c \in\{A,B\}$).
Additionally, for simplicity, we assume $\Exp[\vx] = \zeros$\footnote{We can ensure this condition by standardizing all features.} so that the above intercepts are both 0. (Note that since $\Exp [\vx] = \zeros$, and $f^*, f^A$, and $f^B$ are all linear in $\vx$, we have that $\Exp_{\vx \sim P} [f^*(\vx)] = \Exp_{\vx \sim P} [f^A(\vx)] = \Exp_{\vx \sim P} [f^B(\vx)] = 0$). 

\xhdr{The aggregation mechanism} A team $T$ is a set of agents who combine their predictions according to a given aggregation function $\cG_T$. For any $\vx \in \cX$, the aggregation function $\cG_T$ receives the predictions made for input $\vx$ by all members of $T$, and output a collective/team prediction for $\vx$. 
Given the team $T=\{1,2, \cdots, |T|\}$, we use $\cG_T(\vx)$ to refer to the aggregated prediction of team members for state $\vx$. More precisely,
$$\cG_T(\vx) = \cG(f_1(\vx), \cdots, f_{|T|}(\vx)).$$
Note that we assume the functional form of $\cG_T$---denoted by $\cG$ in the above equation---is exogenously chosen and fixed throughout, and is independent of the team's composition.
As a concrete example, $\cG$ can be the mean prediction across all team members, that is:
$$\cG(f_1(\vx), \cdots, f_{n}(\vx))= \frac{1}{n} \sum_{i=1}^n f_i(\vx) = \frac{1}{n} \sum_{i=1}^n \hy_i .$$
(This particular form of $\cG$ is reasonable to assume in environments where the only acceptable aggregation function is giving all team members' opinions equal weight.)
We will consider a general class of aggregation functions inspired by Tullock's contest success function~\citep{jia2013contest,skaperdas1996contest}, defined as follows:
Given a team consisting of $n_A$ individuals of type $A$ and $n_B$ individuals of type $B$, we define the following parametric class of aggregation functions:
\begin{equation}\label{eq:tullock}
 \forall \alpha \in [0, \infty): \indent \cG^\alpha_{n_A,n_B}(\vx) = \left(\frac{n^\alpha_A}{n^\alpha_A + n^\alpha_B}\right)  f^A(\vx) + \left(\frac{n^\alpha_B}{n^\alpha_A + n^\alpha_B}\right) f^B(\vx).  
\end{equation}
When $n_A, n_B$ are clear from the context, we drop the subscript and use $\cG^\alpha$ to refer to the aggregation function. 
Note that if $\alpha=1$, the above simplifies to the simple average, and in the limit of $\alpha \rightarrow \infty$, $\cG^\alpha$ becomes close the median. 

\xhdr{Disutility and cost}
Individual agents can be added to a team to reduce the team's overall \emph{disutility} or \emph{cost}. 
The cost an agent $i$ incurs as a member of team $T$ is defined as a combination of (a) their level of disagreement with other team members, and (b) the team's overall accuracy loss. More precisely,
\vspace{-2mm}
{
$$c_i(T) =  \lambda \times \frac{1}{|T|^\beta}\sum_{j \in T}  \Exp_{\vx \sim P}\left[ \delta(f_j(\vx), f_i(\vx)) \right]  + (1-\lambda) \times \Exp_{\vx \sim P}\left[ \ell(\cG_{T}(\vx), y) \right].$$
}
Note that the parameter $\beta \in [0,1]$ specifies how the level of disagreement experienced by individual $i$ is impacted by the size of the team. In particular, it captures how perceptions of disagreement scale with absolute vs. relative size of the opposing type. As an example to illustrate this, consider a hypothetical case where $\Exp_{\vx \sim P}\left[ \delta(f_j(\vx), f_i(\vx)) \right]$ is fixed to some constant value $\delta$ for all $j \neq i$. When $\beta=0$, $i$'s experienced disagreement with team members is equal to $(|T-1|\delta)$, and it grows linearly with team size $|T|$. That is, a larger team increases $i$'s perception of disagreement with teammates. For $\beta=1$, though, $i$'s experienced disagreement level ($|T-1|\delta/|T|$) remains roughly constant. 

\xhdr{Growth dynamics}
We assume 
a team $T$ would be willing to accept new member if it reduces the team's average cost across its current members:
{
\begin{equation}\label{eq:general_disutility}
    \lambda \times \frac{1}{|T|^{1+\beta}}\sum_{i,j \in T}  \Exp_{\vx \sim P}\left[ \delta(f_j(\vx), f_i(\vx)) \right]  + (1-\lambda) \times \Exp_{\vx \sim P}\left[ \ell(\cG_{T}(\vx), y) \right].
\end{equation}
}

The team growth dynamics proceeds as follows: Team $T$ initially consists of $n_A$ individuals of type $A$ and $n_B$ individuals of type $B$. New individual candidates arrive one at a time over steps $t=1,2,\cdots$. Let us refer to the $t$'th individual as $i_t$, and denote their type by $s_t \in \{A,B\}$. The team brings on $i_t$ if and only if it reduces the current team's disutility (\ref{eq:general_disutility}). 
\hhedit{As will be discussed in Section~\ref{sec:analysis}, there are two distinct reasons for the team to bring on more than one member of a given type: 
\begin{itemize}
    \item When the accuracy-optimal team composition favors one type, hiring more than one member of each type might be necessary to achieve the accuracy-optimal balance. 
    \item When each agent’s prediction is a noisy version of its type, multiple members of the same type reduces noise. 
\end{itemize}
}

\section{Analysis}\label{sec:analysis}
In this section, we describe the dynamics of team formation under three separate regimes of $\lambda$: one in which $\lambda=0$ (only accuracy matters); another in which $\lambda$ is close to $1$ (the importance of accuracy is negligible), and finally settings in between where $0 < \lambda <1$.

\subsection{Minimum $\lambda$ value ($\lambda = 0$)}\label{sec:lambda_0}
When $\lambda = 0$, the team adds new members if and only if the new member reduces the team's mean squared error. To see under what conditions a new member will be brought on, suppose the current team consists of $n_A$ individuals of type $A$ and $n_B$ individuals of type $B$. 
Recall that the team's collective predictive model can be written as 
$
\cG^\alpha(\vx) = \frac{n^\alpha_A}{n^\alpha_A + n^\alpha_B} \theta^A(\vx) + \frac{n^\alpha_B}{n^\alpha_A + n^\alpha_B} \theta^B(\vx) +  \frac{n^\alpha_A \epsilon_A + n^\alpha_B\epsilon_B }{n^\alpha_A+n^\alpha_B}.
$
It is easy to show that the team's mean-squared error can be decomposed into bias and variance terms.
\hhedit{
\begin{lemma}[Team's Error Decomposition]\label{lem:accuracy_decomp} 
Consider a team with a composition of $n_A$ members of type $A$ and $n_B$ members of type $B$. Then:
\begin{equation}\label{eq:decomp_lambda0}
\Exp \left[ \left( \cG_T(\vx) - y \right)^2 \right]  =  
\frac{n_B^{2\alpha}}{(n^\alpha_A + n^\alpha_B)^2}  \cL^B + \frac{n_A^{2\alpha}}{(n^\alpha_A + n^\alpha_B)^2}   \cL^A + \frac{n^{2\alpha}_A \sigma^2_A + n^{2\alpha}_B \sigma^2_B}{(n^\alpha_A + n^\alpha_B)^2},
\end{equation}
where the expectation is with respect to $(\vx,y) \sim P$ and $\epsilon_c \sim \cN(0,\sigma^2_c)$ for $c \in \{A,B\}$.
\end{lemma}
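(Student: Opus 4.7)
The plan is to directly compute the mean squared error by starting from the decomposition of the aggregated predictor given immediately before the lemma statement,
$\cG^\alpha(\vx) = w_A \theta^A(\vx) + w_B \theta^B(\vx) + w_A \epsilon_A + w_B \epsilon_B$, where $w_A = n_A^\alpha/(n_A^\alpha + n_B^\alpha)$ and $w_B = n_B^\alpha/(n_A^\alpha + n_B^\alpha)$, so that $w_A + w_B = 1$. Since the ground truth factors as $y = f^*(\vx) = \theta^A(\vx) + \theta^B(\vx)$ by the linear decomposition into disjoint feature sets, I would rewrite the residual as
$\cG^\alpha(\vx) - y = -\bigl(w_B \theta^A(\vx) + w_A \theta^B(\vx)\bigr) + \bigl(w_A \epsilon_A + w_B \epsilon_B\bigr)$,
which separates cleanly into a ``bias'' term depending only on $\vx$ and a ``noise'' term depending only on $\epsilon_A, \epsilon_B$.

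After squaring, the bias--noise cross term vanishes in expectation because the noises are mean-zero and independent of $\vx$. The noise-squared expectation is immediate: independence of $\epsilon_A, \epsilon_B$ and $\var(\epsilon_c) = \sigma_c^2$ yield $w_A^2 \sigma_A^2 + w_B^2 \sigma_B^2$. For the bias-squared piece, the crucial step is showing that the cross term $\Exp[\theta^A(\vx)\theta^B(\vx)]$ equals zero. This is where the modeling assumptions on the features enter: since $\theta^A$ is a linear combination of $x_1,\ldots,x_k$ while $\theta^B$ is a linear combination of the disjoint set $x_{k+1},\ldots,x_r$, and the model specifies $\cov(x_i,x_j)=0$ for $i\neq j$ together with $\Exp[\vx] = \zeros$, every monomial $\Exp[x_i x_j]$ with $i\leq k < j$ in the expanded product is zero. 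The bias-squared expectation therefore reduces to $w_B^2 \Exp[\theta^A(\vx)^2] + w_A^2 \Exp[\theta^B(\vx)^2]$.

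The last step is to identify $\Exp[\theta^A(\vx)^2]$ and $\Exp[\theta^B(\vx)^2]$ in terms of $\cL^A$ and $\cL^B$. Using the same orthogonality, $\cL^A = \Exp[(\theta^A(\vx) - f^*(\vx))^2] = \Exp[\theta^B(\vx)^2]$ and symmetrically $\cL^B = \Exp[\theta^A(\vx)^2]$. Substituting these identities, restoring $w_A, w_B$ in terms of $n_A,n_B,\alpha$, and collecting over the common denominator $(n_A^\alpha + n_B^\alpha)^2$ yields exactly~(\ref{eq:decomp_lambda0}). The whole argument is a routine bias--variance decomposition; the only genuine step worth emphasizing is the orthogonality $\Exp[\theta^A(\vx)\theta^B(\vx)] = 0$, which is the formal manifestation of the ``disjoint feature sets'' assumption built into the definition of the two agent types.
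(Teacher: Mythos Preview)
Your proposal is correct and follows essentially the same route as the paper's own proof: both rewrite the residual as $-(w_B\theta^A(\vx)+w_A\theta^B(\vx)) + (w_A\epsilon_A+w_B\epsilon_B)$, expand the square, kill the cross terms via $\Exp[\theta^A(\vx)\theta^B(\vx)]=0$ and noise independence, and then identify $\Exp[\theta^A(\vx)^2]=\cL^B$, $\Exp[\theta^B(\vx)^2]=\cL^A$. Your use of the shorthand $w_A,w_B$ and your explicit mention that the bias--noise cross term vanishes make the write-up slightly cleaner, but the argument is the same.
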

}
\begin{proof}
We can write:
{\footnotesize
\begin{eqnarray*}
&&\Exp \left[ \left( \cG_T(\vx) - y \right)^2 \right]  =  \Exp \left[ \left( \cG_T(\vx) - f^*(\vx) \right)^2 \right] \\
&= & \Exp \left[ \left(\frac{n^\alpha_A}{n^\alpha_A + n^\alpha_B} \theta^A(\vx) + \frac{n^\alpha_B}{n^\alpha_A + n^\alpha_B} \theta^B(\vx) + \frac{n^\alpha_A \epsilon_A + n^\alpha_B\epsilon_B }{n^\alpha_A+n^\alpha_B} - f^*(\vx) \right)^2 \right] \\
&= & \Exp \left[ \left(\frac{-n^\alpha_B}{n^\alpha_A + n^\alpha_B} \theta^A(\vx) + \frac{-n^\alpha_A}{n^\alpha_A + n^\alpha_B} \theta^B(\vx) + \frac{n^\alpha_A \epsilon_A + n^\alpha_B\epsilon_B }{n^\alpha_A+n^\alpha_B}\right)^2 \right] \indent \text{(Using \ref{eq:team_cost})}\\
&=& \frac{n_B^{2\alpha}}{(n^\alpha_A + n^\alpha_B)^2}  \Exp \left[  \theta^A(\vx)^2  \right] + \frac{n_A^{2\alpha}}{(n^\alpha_A + n^\alpha_B)^2}   \Exp \left[  \theta^B(\vx)^2  \right] + \frac{2 n^\alpha_A n^\alpha_B}{(n^\alpha_A + n^\alpha_B)^2}  \Exp \left[  \theta^A(\vx) \theta^B(\vx)   \right] 
 + \frac{1}{(n^\alpha_A + n^\alpha_B)^2} \Exp[(n^\alpha_A\epsilon_A + n^\alpha_B\epsilon_B)^2]
\end{eqnarray*}
}
Next, using the fact that the two types don't have access to common features, and the fact that $\text{cov}(x_i, x_j) = 0$ for all $j \neq i$, we know $\Exp \left[  \theta^A(\vx) \theta^B(\vx)   \right] = 0$. Additionally, $\Exp \left[\epsilon_A \epsilon_B \right] = 0$ due to the assumption of independent noises. Therefore, the above equation can be simplified to:
{\footnotesize
\begin{eqnarray*}
&=& \frac{n_B^{2\alpha}}{(n^\alpha_A + n^\alpha_B)^2}  \Exp \left[  \left(f^*(\vx) - \theta^B(\vx) \right)^2  \right] + \frac{n_A^{2\alpha}}{(n^\alpha_A + n^\alpha_B)^2}   \Exp \left[  \left(f^*(\vx) - \theta^A(\vx) \right)^2   \right] 
+ \frac{n^{2\alpha}_A}{(n^\alpha_A + n^\alpha_B)^2} \Exp[\epsilon_A^2] + \frac{n^{2\alpha}_B}{(n^\alpha_A + n^\alpha_B)^2} \Exp[\epsilon_B^2] \\
&=& \frac{n_B^{2\alpha}}{(n^\alpha_A + n^\alpha_B)^2}  \cL^B + \frac{n_A^{2\alpha}}{(n^\alpha_A + n^\alpha_B)^2}   \cL^A + \frac{n^{2\alpha}_A \sigma^2_A + n^{2\alpha}_B \sigma^2_B}{(n^\alpha_A + n^\alpha_B)^2}.
\end{eqnarray*}
}
\end{proof}
A team consisting of $(n_A, n_B)$ members of $A,B$ respectively will only add a new member of type $A$ if (\ref{eq:decomp_lambda0}) evaluated at $(n_A+1, n_B)$ is smaller than that at $(n_A, n_B)$. A similar logic applies to adding a new team member of type $B$.
Fixing the number of team members belonging to one type (say $A$), the following proposition specifies the accuracy-optimal number of members belonging to the other group (here, $B$).
\begin{proposition}[Accuracy-optimal composition]\label{prop:accuracy_optimal}
Consider a team with an initial composition of $n_A>0$ members of type $A$ and no member of type $B$. 
The optimal number of type $B$ members whose addition minimizes the team's accuracy loss in (\ref{eq:decomp_lambda0}) is equal to $n^*_B = n_A \left(\frac{\cL^A + \sigma_A^2}{\cL^B + \sigma_B^2}\right)^{1/\alpha}$.
\end{proposition}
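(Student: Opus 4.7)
The plan is to treat the expression in Lemma~\ref{lem:accuracy_decomp} as a function of $n_B$ (with $n_A$ held fixed), reduce it to a clean one-variable rational form, and solve a standard first-order optimality condition.

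First, I would group the $\cL$- and $\sigma^2$-terms by type. Writing $L_A := \cL^A + \sigma_A^2$ and $L_B := \cL^B + \sigma_B^2$, the error in (\ref{eq:decomp_lambda0}) collapses to
\[
E(n_A,n_B) \;=\; \frac{n_A^{2\alpha}\, L_A + n_B^{2\alpha}\, L_B}{\bigl(n_A^\alpha + n_B^\alpha\bigr)^2}.
\]
Next I would perform the substitution $a := n_A^\alpha$, $b := n_B^\alpha$, so that the problem becomes: minimize $E = (a^2 L_A + b^2 L_B)/(a+b)^2$ over $b \ge 0$ with $a$ fixed. This substitution is legitimate because $n_B \mapsto n_B^\alpha$ is monotone for $\alpha > 0$, so the minimizer in $b$ corresponds to a unique minimizer in $n_B$.

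Then I would differentiate $E$ with respect to $b$. Using the quotient rule,
\[
\frac{\partial E}{\partial b} \;=\; \frac{2\bigl[b L_B(a+b) - (a^2 L_A + b^2 L_B)\bigr]}{(a+b)^3} \;=\; \frac{2a\,\bigl(b L_B - a L_A\bigr)}{(a+b)^3}.
\]
Setting the numerator to zero gives $b L_B = a L_A$, i.e.\ $b/a = L_A/L_B$. Undoing the substitution yields
\[
n_B^\alpha \;=\; n_A^\alpha \cdot \frac{\cL^A + \sigma_A^2}{\cL^B + \sigma_B^2}, \qquad \text{hence } n_B^* \;=\; n_A \left(\frac{\cL^A + \sigma_A^2}{\cL^B + \sigma_B^2}\right)^{1/\alpha}.
\]

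Finally I would verify this critical point is indeed a minimum. The factorization above shows $\partial E/\partial b < 0$ for $b < a L_A/L_B$ and $\partial E/\partial b > 0$ for $b > a L_A/L_B$, so $E$ is strictly decreasing and then strictly increasing in $b$; the critical point is the unique global minimizer on $(0,\infty)$. I do not anticipate any real obstacle here: the algebra is routine once the $L_A$, $L_B$ grouping is spotted, and the main subtlety worth flagging is simply that $n_B$ is being treated as a continuous variable, with the integer-valued growth dynamics then rounding toward this ratio, which is consistent with how Proposition~\ref{prop:accuracy_optimal} is stated.
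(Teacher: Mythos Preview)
Your proof is correct and follows essentially the same approach as the paper: both start from the decomposition in Lemma~\ref{lem:accuracy_decomp}, differentiate in $n_B$, solve the first-order condition, and verify the minimum via the sign of the derivative. Your substitution $a=n_A^\alpha$, $b=n_B^\alpha$ is a cosmetic streamlining of the algebra (the paper differentiates directly in $n_B$ and then factors), but the argument is the same.
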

\begin{proof}
According to Lemma~\ref{lem:accuracy_decomp}, the team's accuracy can be written as follows:
{\footnotesize
\begin{equation*}
\Exp \left[ \left( \cG_T(\vx) - y \right)^2 \right] = \frac{n_B^{2\alpha}}{(n^\alpha_A + n^\alpha_B)^2}  (\cL^B + \sigma^2_B) + \frac{n_A^{2\alpha}}{(n^\alpha_A + n^\alpha_B)^2}   (\cL^A+\sigma^2_A).
\end{equation*}
}
Taking the derivative of the right hand side with respect to $n_B$, we obtain:
{\footnotesize
\begin{eqnarray}
&& \frac{2 \alpha n^{2\alpha-1}_B(n^\alpha_A + n^\alpha_B)^2 - 2 \alpha n^{3\alpha - 1}_B  (n^\alpha_A+n^\alpha_B)}{(n^\alpha_A + n^\alpha_B)^4}  (\cL^B + \sigma^2_B) 
+ \frac{-2 \alpha(n^\alpha_A + n^\alpha_B) n_A^{2\alpha} n_B^{\alpha-1}}{(n^\alpha_A + n^\alpha_B)^4}   (\cL^A+\sigma^2_A) \nonumber \\
&=&  \frac{2 \alpha n^{\alpha-1}_B}{(n^\alpha_A + n^\alpha_B)^3}  \left( \left( n^{\alpha}_B(n^\alpha_A + n^\alpha_B) - n^{2\alpha}_B \right)  (\cL^B + \sigma^2_B) -  n^{2\alpha}_A (\cL^A+\sigma^2_A) \right) \nonumber \\
&=&  \frac{2 \alpha n^{\alpha-1}_B n^\alpha_A}{(n^\alpha_A + n^\alpha_B)^3}  \left( n^\alpha_B(\cL^B + \sigma^2_B) -  n^\alpha_A (\cL^A+\sigma^2_A) \right).\label{eq:derivative_lambda0}
\end{eqnarray}
}
To obtain the zero of the derivative, we can write:
{\footnotesize
$$n^\alpha_B(\cL^B+\sigma^2_B) -  n^\alpha_A (\cL^A+\sigma^2_A) = 0  \Leftrightarrow   n_B =  n_A   \left(\frac{\cL^A+\sigma^2_A}{\cL^B + \sigma^2_B}\right)^{1/\alpha}.
$$
}
\hhedit{It is trivial to see that for $n_B <  n_A   \left(\frac{\cL^A+\sigma^2_A}{\cL^B + \sigma^2_B}\right)^{1/\alpha}$, the expression above (\ref{eq:derivative_lambda0}) is negative, and for $n_B >  n_A   \left(\frac{\cL^A+\sigma^2_A}{\cL^B + \sigma^2_B}\right)^{1/\alpha}$ it is positive. Therefore, $n_B =  n_A   \left(\frac{\cL^A+\sigma^2_A}{\cL^B + \sigma^2_B}\right)^{1/\alpha}$ is indeed the minimum of the accuracy loss.
} This finishes the proof. 
\end{proof}
The Appendix contains several illustrations of team growth dynamics for $\lambda=0$ under several different regimes of $\alpha$ and $(\cL^A, \cL^B)$. Figures~\ref{fig:plain_lambda_0_alpha_1}, \ref{fig:plain_lambda_0_alpha_1_noise} demonstrate that regardless of the initial composition of the team, the growth dynamics converge to the accuracy/utility-optimal compositions. \hhedit{Additionally, we observe that it is never simultaneously beneficial for a team to add a member type $A$ and type $B$. These trends remain unchanged even if agents' predictions are noisy.}  
\hhedit{
\begin{remark}[The effect of $\sigma_A, \sigma_B$ on growth dynamics]
   Note that due to the symmetry of Equation \ref{eq:decomp_lambda0} in $A$ and $B$, the partial derivatives of the team's accuracy with respect to $n_A$ and $n_B$ always have opposing signs, therefore, at any $n_A, n_B \geq 0$, it is either beneficial to add a new member of type A or a new member of type $B$, but never both.  
\end{remark}
}

\subsection{Maximum $\lambda$ value ($\lambda = 1$)}\label{sec:lambda_1}
When $\lambda = 1$, the team's disutility simply corresponds to the disagreement term among team members. 
The following Lemma shows that the rate of disagreement can be decomposed.
\hhedit{
\begin{lemma}[Team's Disagreement Decomposition]\label{lem:distance_decomp}
Consider a team consisting of $n_A$ individuals of type $A$ and $n_B$ individuals of type $B$. Then
{\footnotesize
\begin{eqnarray} \label{eq:disutility_lambda1}
\Exp \left[ \frac{1}{(n_A + n_B)^{1+\beta}} \sum_{i,j \in T} d(i,j) \right] &=&  \frac{2n_A n_B}{(n_A + n_B)^{1+\beta}} \left( \cL^A + \cL^B + \sigma_A^2 + \sigma_B^2\right)+\nonumber \\
&+& \frac{2n_A (n_A -1)}{(n_A + n_B)^{1+\beta}} \sigma_A^2 + \frac{2n_B (n_B-1)}{(n_A + n_B)^{1+\beta}} \sigma_B^2,
\end{eqnarray}
}
where the expectation is with respect to $(\vx,y) \sim P$ and $\epsilon_c, \epsilon'_c \sim \cN(0,\sigma^2_c)$ for $c \in \{A,B\}$.
\end{lemma}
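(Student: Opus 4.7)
The plan is to split the double sum $\sum_{i,j\in T} d(i,j)$ into three cases according to the types of agents $i$ and $j$, compute each pairwise disagreement in closed form, and then count the number of ordered pairs in each case. Concretely, I would write each agent's predictor as $f^A_i(\vx) = \theta^A(\vx) + \epsilon_{A,i}$ or $f^B_j(\vx) = \theta^B(\vx) + \epsilon_{B,j}$, with all noise terms drawn independently across agents.

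First I would handle the within-type pairs. For two distinct agents $i \neq i'$ both of type $A$, the shared signal $\theta^A(\vx)$ cancels, so $\delta(f^A_i(\vx), f^A_{i'}(\vx)) = (\epsilon_{A,i} - \epsilon_{A,i'})^2$. Taking expectations and using independence of the noise terms yields $d(i,i') = 2\sigma_A^2$; symmetrically $d(j,j') = 2\sigma_B^2$ for within-$B$ pairs. There are $n_A(n_A-1)$ and $n_B(n_B-1)$ ordered such pairs respectively, which produces the last two summands of the claimed decomposition after dividing by $(n_A+n_B)^{1+\beta}$.

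Second, for a cross-type pair consisting of agents $i$ of type $A$ and $j$ of type $B$, I would expand
\begin{equation*}
d(i,j) = \Exp\!\left[(\theta^A(\vx) - \theta^B(\vx) + \epsilon_{A,i} - \epsilon_{B,j})^2\right].
\end{equation*}
Cross terms between the feature-dependent part and the noise vanish because the noise is mean zero and independent of $\vx$, and $\Exp[\epsilon_{A,i}\epsilon_{B,j}]=0$ by independence of noises across types, leaving $\Exp[(\theta^A(\vx)-\theta^B(\vx))^2] + \sigma_A^2 + \sigma_B^2$. The key identity needed here is that $\Exp[\theta^A(\vx)\theta^B(\vx)] = 0$, which follows from the assumption that the two types use disjoint feature subsets and the features are pairwise uncorrelated (with zero mean). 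Combined with the decomposition $f^*(\vx) = \theta^A(\vx) + \theta^B(\vx)$, this same identity gives $\cL^A = \Exp[(\theta^A(\vx)-f^*(\vx))^2] = \Exp[\theta^B(\vx)^2]$ and similarly $\cL^B = \Exp[\theta^A(\vx)^2]$. Substituting, I get $d(i,j) = \cL^A + \cL^B + \sigma_A^2 + \sigma_B^2$ for each cross-type pair, of which there are $2n_A n_B$ ordered pairs.

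Summing the three contributions and dividing by $(n_A+n_B)^{1+\beta}$ yields exactly \eqref{eq:disutility_lambda1}. The only non-routine step is recognizing the reindexing $\cL^A = \Exp[\theta^B(\vx)^2]$ (and vice versa), which is what lets the within-type error losses enter the cross-type disagreement. Everything else is bookkeeping of ordered pairs and routine expansions that rely on the independence of noises across agents and the orthogonality of the two types' feature blocks.
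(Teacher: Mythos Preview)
Your proposal is correct and follows essentially the same route as the paper: split the ordered-pair sum by type, use independence of the noise terms for the within-type contributions, and expand the cross-type square using the orthogonality $\Exp[\theta^A(\vx)\theta^B(\vx)]=0$. You are in fact more explicit than the paper about the identity $\Exp[(\theta^A(\vx)-\theta^B(\vx))^2]=\cL^A+\cL^B$ (via $\cL^A=\Exp[\theta^B(\vx)^2]$ and $\cL^B=\Exp[\theta^A(\vx)^2]$), which the paper's proof uses but does not spell out.
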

}
\begin{proof}
We can write the left hand side of (\ref{eq:disutility_lambda1}) as follows:
{\footnotesize
\begin{eqnarray*}
&&\frac{1}{(n_A + n_B)^{1+\beta}} \sum_{i,j \in T} \Exp \left[d(i,j) \right]= \frac{1}{(n_A + n_B)^{1+\beta}} \sum_{i,j \in T} \Exp \left[ (\theta_i(\vx) + \epsilon_i - \theta_j(\vx) - \epsilon_j)^2 \right]\\
&=&  \frac{2n_A n_B}{(n_A + n_B)^{1+\beta}} \Exp\left[ (\theta^A(\vx) + \epsilon^A - \theta^B(\vx) - \epsilon^B)^2 \right] + \frac{n_A (n_A-1)}{(n_A + n_B)^{1+\beta}} \Exp\left[ (\epsilon_A - \epsilon'_A)^2 \right] \\
&& + \frac{n_B (n_B-1)}{(n_A + n_B)^{1+\beta}} \Exp\left[ (\epsilon_B - \epsilon'_B)^2 \right] \\
&=&  \frac{2n_A n_B}{(n_A + n_B)^{1+\beta}} \Exp\left[(\theta^A(\vx)- \theta^B(\vx))^2 \right] + \frac{2n_A n_B}{(n_A + n_B)^{1+\beta}} \Exp\left[ (\epsilon^A - \epsilon^B)^2 \right] \\
&&+ \frac{n_A (n_A-1)}{(n_A + n_B)^{1+\beta}} \Exp\left[ (\epsilon_A - \epsilon'_A)^2 \right] + \frac{n_B (n_B-1)}{(n_A + n_B)^{1+\beta}} \Exp\left[ (\epsilon_B - \epsilon'_B)^2 \right]\\
&=&  \frac{2n_A n_B}{(n_A + n_B)^{1+\beta}} \Exp\left[(\theta^A(\vx)- \theta^B(\vx))^2 \right]  \\
&& + \frac{ 2n_A n_B + 2n_A (n_A-1)}{(n_A + n_B)^{1+\beta}} \Exp\left[ \epsilon_A^2 \right] + \frac{ 2n_A n_B + 2n_B (n_B-1)}{(n_A + n_B)^{1+\beta}} \Exp\left[ \epsilon_B^2 \right]\\
&=& \frac{2n_A n_B}{(n_A + n_B)^{1+\beta}} \left( \cL^A + \cL^B + \sigma_A^2 + \sigma_B^2\right) + \frac{2n_A (n_A -1)}{(n_A + n_B)^{1+\beta}} \sigma_A^2 + \frac{2n_B (n_B-1)}{(n_A + n_B)^{1+\beta}} \sigma_B^2.
\end{eqnarray*}
}
\end{proof}

If $\sigma_A = \sigma_B = 0$, Equation \ref{eq:disutility_lambda1} simplifies to the following:
{\footnotesize
\begin{equation}\label{eq:disutility_lambda1_noiseless}
\frac{2n_A n_B}{(n_A + n_B)^{1+\beta}} \left( \cL^A + \cL^B \right)
\end{equation}
}
To characterize the conditions under which adding a new member of type $B$ will be beneficial to the team, we can inspect the derivative. The derivative of (\ref{eq:disutility_lambda1_noiseless}) with respect to $n_B$ is equal to 
{\footnotesize
$$
2\frac{n_A (n_A + n_B)^{1+\beta} - (1+\beta)n_A n_B(n_A + n_B)^\beta}{(n_A + n_B)^{2+2\beta}} \left( \cL^A + \cL^B \right) = 
2\frac{n_A(n_A - \beta n_B)}{(n_A + n_B)^{2+\beta}} \left( \cL^A + \cL^B \right).$$
}
Note that for any $\beta \geq 0$, the derivative amounts to zero at $n_B = \frac{n_A}{\beta}$. Additionally, the derivative is strictly positive for $n_B < \frac{n_A}{\beta}$ and is strictly negative for $n_B > \frac{n_A}{\beta}$). This implies that adding a type $B$ agent to the team is beneficial if and only if $n_B \geq \frac{n_A}{\beta}$. 
In the Appendix, we illustrate team growth dynamics for $\lambda=1$, $\sigma_A=\sigma_B=0$ and several different values of $\beta$.

\hhedit{
The noisy case is slightly more nuanced. As formalized in the following Proposition, if the noisy type is in the majority (e.g., type $B$ is the majority in all above-the-diagonal compositions demonstrated in Figure~\ref{fig:plain_lambda_1_noisy}), adding more $B$-type members allows the majority to drowns out disagreements by $A$-type members. But after a while when when the disagreement with $A$-types has been suppressed sufficiently, it is no longer beneficial to add $B$ members because of the impact of noise on disagreements among same-type members.
\begin{proposition}
Consider a team with an initial composition of $n_A>0$ members of type $A$ and $n_B$ members of type $B$. Suppose $\sigma_B >0$ and $\beta<1$. Then there exist $n^\text{lower}_B, n^\text{upper}_B \in \mathbb{R}$ such that adding a type $B$ member reduces the team's disagreement in (\ref{eq:disutility_lambda1}) if and only if $n^\text{lower} \leq n_B \leq n^\text{upper}$.
\end{proposition}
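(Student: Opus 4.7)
The plan is to treat $n_B$ as a continuously varying nonnegative real, with $n_A$ fixed, and analyze the sign of $\partial D/\partial n_B$, where $D(n_B)$ denotes the expression in~(\ref{eq:disutility_lambda1}). Writing $D(n_B)=N(n_B)/(n_A+n_B)^{1+\beta}$ with $N(n_B) := 2n_A n_B \gamma + 2n_A(n_A-1)\sigma_A^2 + 2n_B(n_B-1)\sigma_B^2$ and $\gamma := \cL^A + \cL^B + \sigma_A^2 + \sigma_B^2$, the quotient rule gives
\begin{equation*}
\frac{\partial D}{\partial n_B} = \frac{(n_A+n_B)\,N'(n_B) - (1+\beta)\,N(n_B)}{(n_A+n_B)^{2+\beta}},
\end{equation*}
so $\partial D/\partial n_B$ has the same sign as $Q(n_B) := (n_A+n_B)\,N'(n_B) - (1+\beta)\,N(n_B)$.

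The crucial observation is that $Q$ is a quadratic in $n_B$ whose leading coefficient is $2(1-\beta)\sigma_B^2$, which is strictly positive precisely because $\beta<1$ and $\sigma_B>0$. Indeed, $N'(n_B) = 2n_A\gamma + 2(2n_B-1)\sigma_B^2$, so multiplying by $(n_A+n_B)$ contributes $4\sigma_B^2$ to the $n_B^2$ coefficient, while subtracting $(1+\beta)N(n_B)$ contributes $-2(1+\beta)\sigma_B^2$; the net $n_B^2$ coefficient is $2(1-\beta)\sigma_B^2$. Under the hypotheses, $Q$ is therefore a convex, upward-opening parabola in $n_B$.

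A convex upward-opening quadratic is negative exactly on the open interval between its two real roots (or nowhere, if its discriminant is negative). Hence the set $\{n_B : \partial D/\partial n_B<0\}$ is either an open interval or empty; defining $n^{\mathrm{lower}}_B$ and $n^{\mathrm{upper}}_B$ to be the two roots of $Q$ when they exist (and any pair with $n^{\mathrm{lower}}_B>n^{\mathrm{upper}}_B$ otherwise) yields the claimed iff characterization in the continuous sense. The discrete statement about the actual addition of one member follows from the fundamental theorem of calculus: $D(n_A,n_B+1)-D(n_A,n_B) = \int_{n_B}^{n_B+1}\partial D/\partial n_B(s)\,ds$, so the (positive, negative, positive) sign pattern of the derivative propagates to the discrete difference, with threshold locations shifted by at most one.

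The main obstacle is purely bookkeeping: one must verify that the $n_B^2$ contributions collect to $2(1-\beta)\sigma_B^2$ without further cancellation. This is exactly the positivity that the two hypotheses are engineered to deliver; if either $\beta=1$ or $\sigma_B=0$, the quadratic degenerates to (at most) a linear function and the clean interval characterization can fail, which is why these hypotheses are imposed.
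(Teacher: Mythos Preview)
Your proposal is correct and follows essentially the same approach as the paper: both compute the derivative of the disagreement term with respect to $n_B$, clear the positive denominator $(n_A+n_B)^{2+\beta}$, and observe that the resulting numerator is a quadratic in $n_B$ with leading coefficient $2(1-\beta)\sigma_B^2>0$, so the derivative is negative precisely between the two roots. Your write-up is slightly more careful in two respects---you explicitly handle the case of a negative discriminant (empty interval) and you address the passage from the continuous derivative to the discrete one-step difference---neither of which the paper's proof spells out, but the core argument is identical.
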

\begin{proof}
The derivative of (\ref{eq:disutility_lambda1}) with respect of $n_B$ is equal to 
{\footnotesize
\begin{eqnarray*}
&& 2\frac{n_A(n_A - \beta n_B)}{(n_A + n_B)^{2+\beta}} \left( \cL^A + \cL^B + \sigma_A^2 + \sigma_B^2\right) \\
&-& 2\sigma_A^2 (1+\beta) \frac{n_A(n_A-1)}{(n_A + n_B)^{2+\beta}}\\
&+& 2\sigma_B^2 \frac{(1-\beta) n_B^2 + n_B(2 n_A +\beta) - n_A}{(n_A + n_B)^{2+\beta}}
\end{eqnarray*}
}
Note that setting the derivative to zero, is equivalent to solving the roots of the following function:
{\footnotesize
\begin{eqnarray*}
0 &=& 2n_A(n_A - \beta n_B) \left( \cL^A + \cL^B + \sigma_A^2 + \sigma_B^2\right)
- 2\sigma_A^2 (1+\beta) n_A(n_A-1)
+ 2\sigma_B^2 (1-\beta) \left(n_B^2 + n_B(2 n_A +\beta) - n_A \right)\\
&=& 2\sigma_B^2(1-\beta) n_B^2 \\
&&+ \left( -2\beta n_A\left( \cL^A + \cL^B + \sigma_A^2 + \sigma_B^2\right)  + 2\sigma_B^2 (1-\beta) (2n_A + \beta)\right) n_B \\
&&+ \left( 2n_A^2 \left( \cL^A + \cL^B + \sigma_A^2 + \sigma_B^2\right) - 2\sigma_A^2 (1+\beta) n_A(n_A-1) - 2\sigma_B^2 (1-\beta) n_A\right)
\end{eqnarray*}
}
Note that since $\sigma_B > 0$ and $(1-\beta)>0$, the above is a quadratic polynomial in $n_B$ with a positive leading coefficient (i.e., $2\sigma_B^2(1-\beta)$). Let $n^\text{lower}, n^\text{upper}$ denote the roots of this polynomial. Since the leading coefficient is positive, for any $n_B \in [n^\text{lower}, n^\text{upper}]$, the derivative of the disagreement term is negative, indicating that adding new members of type $B$ will reduce the disagreement. Similarly, outside this range, the derivative is positive indicating that new type $B$ members will only worsen the team's disagreement. This finishes the proof.
\end{proof}
}

\begin{figure}[h!]
    \centering
     \includegraphics[width=0.45\textwidth]{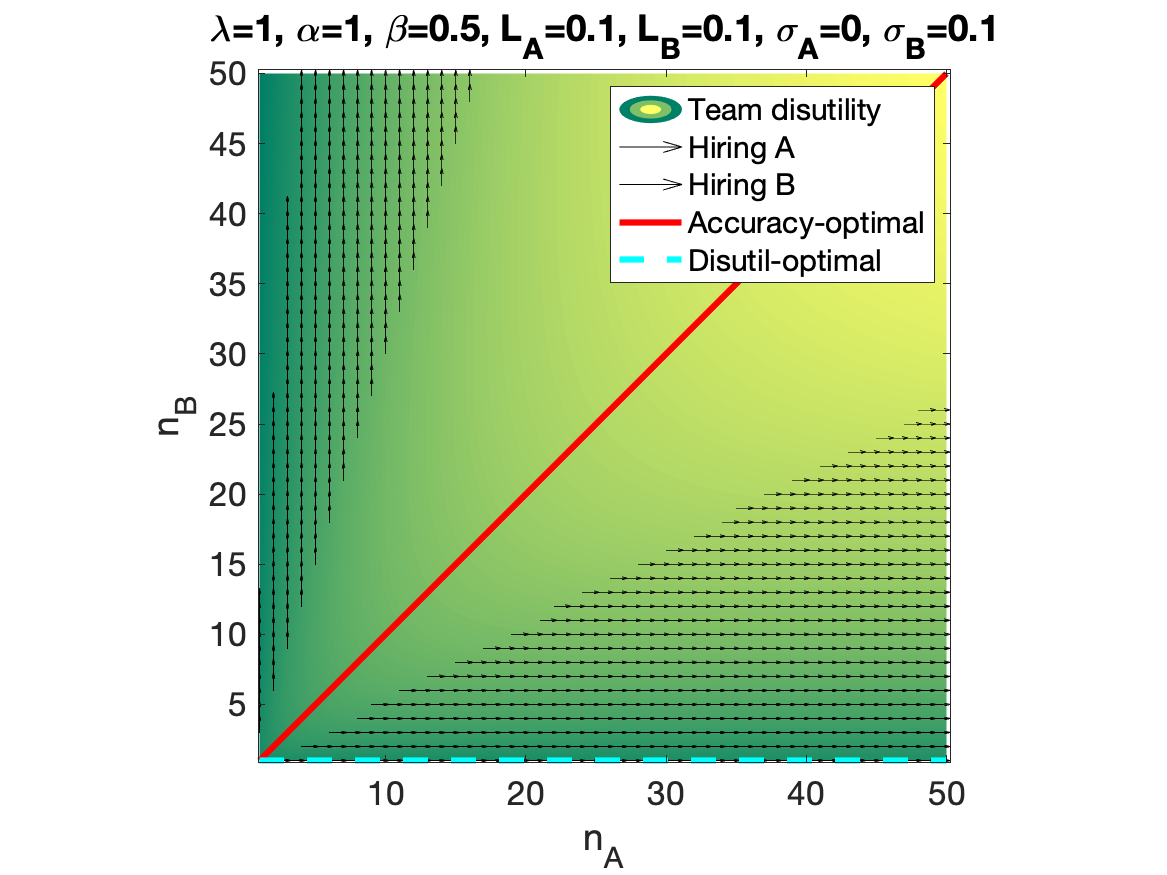}
    \caption{Visualization of team formation dynamics for $\lambda=1$ when $ \sigma_A =0, \sigma_B>0$. Adding a new member of  type $B$ is only beneficial in a specific range of $n_B$ ($n_B \in [n^\text{lower}, n^\text{upper}]$) determined by the roots of a degree-two polynomial. 
    }\label{fig:plain_lambda_1_noisy}
    \vspace{-4mm}
\end{figure}

\subsection{Intermediate $\lambda$ values}\label{sec:lambda_mid}
For $0 < \lambda < 1$, the team's disutility can be written as:
{\footnotesize
\begin{eqnarray}\label{eq:utility_quadratic}
&&(1-\lambda) \left(  \frac{n_B^{2\alpha}}{(n^\alpha_A + n^\alpha_B)^2}  \cL^B + \frac{n_A^{2\alpha}}{(n^\alpha_A + n^\alpha_B)^2}   \cL^A  +  \frac{n^{2\alpha}_A \sigma^2_A + n^{2\alpha}_B \sigma^2_B}{(n^\alpha_A + n^\alpha_B)^2}\right) \nonumber \\
 &&  + \lambda \left( \frac{2 n_A n_B}{(n_A + n_B)^{1+\beta}} \left( \cL^A + \cL^B \right) + \frac{ 2 n_A n_B + 2n_A (n_A-1)}{(n_A + n_B)^{1+\beta}} \sigma^2_A + \frac{ 2 n_A n_B + 2n_B (n_B-1)}{(n_A + n_B)^{1+\beta}} \sigma^2_B \right) 
\end{eqnarray}
}

Next, we address the following question: for a given $0<\lambda<1$, how and to what extent does a team with initial composition $(n_A, 0)$ grow? And how does the resulting composition compare with the accuracy-optimal team? We observe that for any strictly positive value of $\lambda$,  the team fails to add the appropriate number of type $B$ members, leading to accuracy inefficiencies. For ease of exposition, throughout this section we assume $\sigma_A = \sigma_B = 0$.

\hhedit{\xhdr{Outline of the analysis} Our theoretical analysis focuses on deriving \emph{closed-form solutions} for the edge cases of $\alpha=1$ and $\beta \in \{0,1\}$. These particular settings are natural to study, because $\alpha=1$ corresponds to a meaningful, common aggregation mechanism \hhedit{(i.e., simple averaging, which is often utilized in practice and has been advocated as a good rule of thumb (Makridakis-Winkler’1983, Clemen-Winkler’1986, Clemen’1989, Armstrong’2001).} $\beta \in \{0,1\}$ capture whether perceptions of conflict within the team depend on the \emph{relative} or the \emph{absolute} size of the types. The analysis of these extremes offers several non-trivial observations, as will be stated shortly. For other values of $\alpha$ and $\beta$, we provide simulation results (Figure~\ref{fig:plain_lambda_mid_alpha_high}) showing that the effects observed at the edge cases continue to hold, but they interact with each other in potentially interesting ways. }

\begin{theorem}[Utility-optimal composition for $\alpha=1, \beta=0$]\label{prop:utility_optimal_beta_0}
Consider a team with an initial composition of $n_A>0$ members of type $A$ and no member of type $B$. Fix $\lambda$ for the team. The optimal number of type $B$ members whose addition maximizes the team's utility is equal to 
$n^*_B = \frac{\lambda(\cL^A + \cL^B)n_A^2 - (1-\lambda)\cL^A n_A}{-\lambda(\cL^A + \cL^B )n^A  -  (1-\lambda)\cL^B}$.
\end{theorem}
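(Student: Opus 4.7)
The plan is to specialize the disutility in Equation~(\ref{eq:utility_quadratic}) to the setting $\alpha=1$, $\beta=0$, $\sigma_A=\sigma_B=0$, treat $n_B$ as a continuous variable with $n_A>0$ fixed, and locate the minimizer of disutility (equivalently, maximizer of utility) via the first-order condition $\partial U/\partial n_B = 0$. This reduces the claim to a one-variable calculus exercise.

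After substitution, the disutility becomes
\[
U(n_B) = (1-\lambda)\,\frac{n_B^2 \cL^B + n_A^2 \cL^A}{(n_A+n_B)^2} + \lambda\,\frac{2 n_A n_B(\cL^A + \cL^B)}{n_A+n_B}.
\]
I would differentiate both terms with the quotient rule. After simplification, the accuracy term contributes $\frac{2 n_A(n_B \cL^B - n_A \cL^A)}{(n_A+n_B)^3}$ (the cross terms involving $n_B^2$ cancel), and the disagreement term contributes $\frac{2 n_A^2(\cL^A + \cL^B)}{(n_A+n_B)^2}$. Setting the $\lambda$-weighted sum to zero and multiplying through by the strictly positive quantity $(n_A+n_B)^3/(2n_A)$ collapses the first-order condition into a \emph{linear} equation in $n_B$:
\[
(1-\lambda)(n_B \cL^B - n_A \cL^A) + \lambda n_A (\cL^A + \cL^B)(n_A + n_B) = 0.
\]
Grouping terms with $n_B$ on one side and solving produces the claimed closed form directly.

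The main obstacle is careful bookkeeping during the quotient-rule simplification so that the $(n_A+n_B)^{-3}$ factor cleanly emerges, after which the algebra is routine. I would also carry out two sanity checks that strengthen the result: first, in the limit $\lambda \to 0$, the formula must collapse to $n_B^* = n_A \cL^A/\cL^B$, which matches Proposition~\ref{prop:accuracy_optimal} at $\alpha=1$ with zero noise; second, I would verify the second-order condition (or inspect the sign of $U'$ on either side of the critical point) to confirm that the critical point is a minimizer rather than a maximizer or inflection point. Finally, I would note the qualitative consequence that for sufficiently large $\lambda$ the closed-form expression becomes negative; this reflects the regime in which the team gains no benefit from adding any member of type $B$, so the integer-valued optimum is simply $n_B=0$ and the team remains homogeneous---a phenomenon consistent with the path-dependent inefficiencies highlighted throughout the paper.
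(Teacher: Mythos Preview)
Your proposal is correct and follows essentially the same route as the paper: specialize the disutility to $\alpha=1$, $\beta=0$, $\sigma_A=\sigma_B=0$, differentiate the accuracy and disagreement terms separately (obtaining exactly the expressions you wrote), clear denominators to reduce the first-order condition to a linear equation in $n_B$, and solve. Your added sanity checks (the $\lambda\to 0$ limit recovering Proposition~\ref{prop:accuracy_optimal} and an explicit second-order verification) go slightly beyond the paper's proof, which only argues informally that the critical point lies to the left of the accuracy optimum.
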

\begin{proof}
Recall that when $\sigma^2_A = \sigma^2_B = 0$, the disagreement term in the team's objective (\ref{eq:utility_quadratic}) simplifies to:
{\footnotesize
\begin{equation*}
\frac{1}{(n_A + n_B)} \sum_{i,j \in T} d(i,j) = \frac{2 n_A n_B}{(n_A + n_B)} \left( \cL^A + \cL^B \right)
\end{equation*}
}
Taking the derivative of the right hand side with respect to $n_B$, we obtain:
$
2\left( \cL^A + \cL^B \right) \frac{n_A(n_A+n_B) -  n_A n_B}{(n_A + n_B)^2}
=  2\left( \cL^A + \cL^B \right) \frac{n_A^2}{(n_A + n_B)^2}.
$ 
Note that the above is always positive, and decreasing in $n_B$. So if the cost function (the $\lambda$-weighted sum of disagreement and loss) has a zero, it must be before the zero of accuracy derivative, that is, before $n_A   \frac{\cL^A}{\cL^B}$. To see where precisely the zero lies, we can write:

{\footnotesize
\begin{eqnarray*}
&&  2 \lambda \left( \cL^A + \cL^B \right) \frac{n_A^2}{(n_A + n_B)^2} + (1-\lambda) \left( \frac{2n_Bn_A}{(n_A + n_B)^3}  \cL^B + \frac{-2n_A^2}{(n_A + n_B)^3}   \cL^A \right)  = 0\\
&\Leftrightarrow &   2 \lambda \left( \cL^A + \cL^B \right) n_A(n_A + n_B) +  (1-\lambda) \left(2n_B  \cL^B -2n_A  \cL^A \right)  = 0\\
&\Leftrightarrow &   n_B = \frac{2\lambda(\cL^A + \cL^B)n_A^2 -2 (1-\lambda)\cL^A n_A}{-2\lambda(\cL^A + \cL^B )n^A  - 2 (1-\lambda)\cL^B}.
\end{eqnarray*}
}

\end{proof}

Figure~\ref{fig:plain_lambda_mid_alpha_1_beta_0} shows the team growth dynamics for $\lambda=0.02$ when $\alpha=1$, $\beta=0$, and under several different regimes of $(\cL^A, \cL^B)$. Note that team formation dynamics can get stuck in local utility optima, failing to achieve accuracy-optimal compositions. Additionally, the team composition is highly sensitive to its initial composition---which can be thought of as a form of path-dependence\hhedit{, as formalized through the corollary below}.

\hhedit{
\begin{corollary}[Convergence of team growth dynamics for $\alpha=1, \beta=0$]
Consider a team with the initial composition of $(n_A,n_B)$. Without loss of generality, we assume $n_B \leq n_A$. Let $n^*_B = \frac{2\lambda(\cL^A + \cL^B)n_A^2 -2 (1-\lambda)\cL^A n_A}{-2\lambda(\cL^A + \cL^B )n^A  + 2 (1-\lambda)\cL^B}$. Regardless of the order in which agents of each type arrive, the team growth dynamics converges to $(n'_A, n'_B)$ where
\begin{itemize}
    \item $n'_A = n_A$ and $n'_B = n^*_B$ if $n_B \leq n^*_B$ (utility-optimal composition).
    \item $n'_A = n_A$ and $n'_B = n_B$ otherwise (sub-optimal composition).
\end{itemize}
\end{corollary}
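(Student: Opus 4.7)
My plan is to characterize the team's acceptance decision separately for a type $B$ arrival and a type $A$ arrival, and then combine these characterizations into a convergence statement that is manifestly order-independent.

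First, for the type $B$ dynamics, I would invoke the derivative computation inside the proof of Theorem~\ref{prop:utility_optimal_beta_0}: for any fixed $n_A > 0$ the continuous derivative $\partial D / \partial n_B$ has a unique zero at $n_B^*$, is strictly negative on $[0, n_B^*)$, and is strictly positive on $(n_B^*, \infty)$, so $D(n_A, \cdot)$ is strictly quasi-convex in $n_B$ with minimizer $n_B^*$. Translating this to the integer dynamics: a type $B$ arrival at state $(n_A, n_B)$ is accepted iff the current $n_B$ lies below the discrete threshold induced by $n_B^*$, and once this threshold has been crossed no subsequent $B$-arrival is ever accepted. In particular, repeated $B$-offers drive $n_B$ monotonically up to (essentially) $n_B^*$ and then leave it frozen.

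Second, for the type $A$ dynamics, I would exploit the $A \leftrightarrow B$ symmetry of the disutility in (\ref{eq:utility_quadratic}) when $\sigma_A = \sigma_B = 0$: the zero of $\partial D / \partial n_A$ is a function $n_A^*(n_B)$ obtained from the closed form for $n_B^*$ by swapping the labels $A$ and $B$ throughout. The crucial sub-claim is that whenever $n_B \leq n_A$, we have $n_A^*(n_B) \leq n_A$, so $\partial D / \partial n_A \geq 0$ and every arriving $A$ is rejected. I expect this step to be the main obstacle; after cross-multiplying by the positive denominator of $n_A^*(n_B)$ it reduces to a polynomial inequality in $(n_A, n_B, \lambda, \cL^A, \cL^B)$, whose verification uses the hypothesis $n_B \leq n_A$ essentially (together with the fact that the affinity term in $D$ grows in $n_A n_B$).

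Finally, the convergence claim follows by stitching the two pieces together. Starting at $(n_A, n_B)$ with $n_B \leq n_A$, every type $A$ arrival is rejected by the second step, so the first coordinate is invariant; every type $B$ arrival either increments $n_B$ by one (bringing it closer to $n_B^*$) or is rejected. The resulting sequence of $n_B$-values is monotone and bounded above by $\lceil n_B^* \rceil$, hence converges. The limit is $(n_A, n_B^*)$ when $n_B \leq n_B^*$ and $(n_A, n_B)$ otherwise. Because each acceptance/rejection decision is a function only of the current state, and not of the history or type of previous arrivals, the limiting composition does not depend on the order in which the candidates arrive.
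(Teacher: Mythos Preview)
The paper states this corollary without proof, so there is no argument of the paper's to compare against; I evaluate your plan on its own merits.

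Your overall architecture (analyze $B$-acceptance via the quasi-convexity established in Theorem~\ref{prop:utility_optimal_beta_0}, analyze $A$-acceptance by the $A\leftrightarrow B$ symmetry, then combine) is the right shape. But the sub-claim you identify as the main obstacle---``$n_B\le n_A$ implies $n_A^*(n_B)\le n_A$''---is both insufficient and, without further hypotheses, false.

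\emph{Insufficiency.} The inequality $n_B\le n_A$ is assumed only of the \emph{initial} state; it is not an invariant of the dynamics. When $\cL^A>\cL^B$ and $\lambda,n_A$ are small one has $n_B^*(n_A)>n_A$ (check the sign of $(1-\lambda)(\cL^A-\cL^B)-2\lambda(\cL^A+\cL^B)n_A$), so accepted $B$-arrivals eventually push the state into the region $n_B>n_A$, where your sub-claim is silent.

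\emph{Falsity.} When $\cL^A<\cL^B$ the polynomial inequality you anticipate does not follow from $n_B\le n_A$. For $\lambda=0.01$, $\cL^A=0.05$, $\cL^B=0.1$, $(n_A,n_B)=(11,10)$ one computes
\[
\lambda(\cL^A+\cL^B)\,n_B(n_A+n_B)+(1-\lambda)(n_A\cL^A-n_B\cL^B)=0.315-0.4455<0,
\]
so $\partial D/\partial n_A<0$ and a type-$A$ arrival \emph{is} accepted. Since here $n_B=10>n_B^*(11)\approx 3.14$, this also contradicts the second bullet of the corollary as literally stated; some extra hypothesis (e.g.\ $\cL^A=\cL^B$, or $n_B\le n_A\cL^A/\cL^B$) appears to be implicit.

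For the first bullet there is a cleaner route that avoids the sub-claim entirely: from the two bracketed derivative expressions, $\partial D/\partial n_B<0$ forces $(1-\lambda)(n_A\cL^A-n_B\cL^B)>\lambda(\cL^A+\cL^B)n_A(n_A+n_B)\ge 0$, and adding the nonnegative term $\lambda(\cL^A+\cL^B)n_B(n_A+n_B)$ then gives $\partial D/\partial n_A>0$. Hence whenever a $B$ is beneficial an $A$ is not, which is exactly what you need while $n_B<n_B^*$. The second bullet, however, genuinely requires more than $n_B\le n_A$.
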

}

\begin{figure*}[t!]
    \centering
    \begin{subfigure}[b]{0.32\textwidth}
        \includegraphics[width=\textwidth]{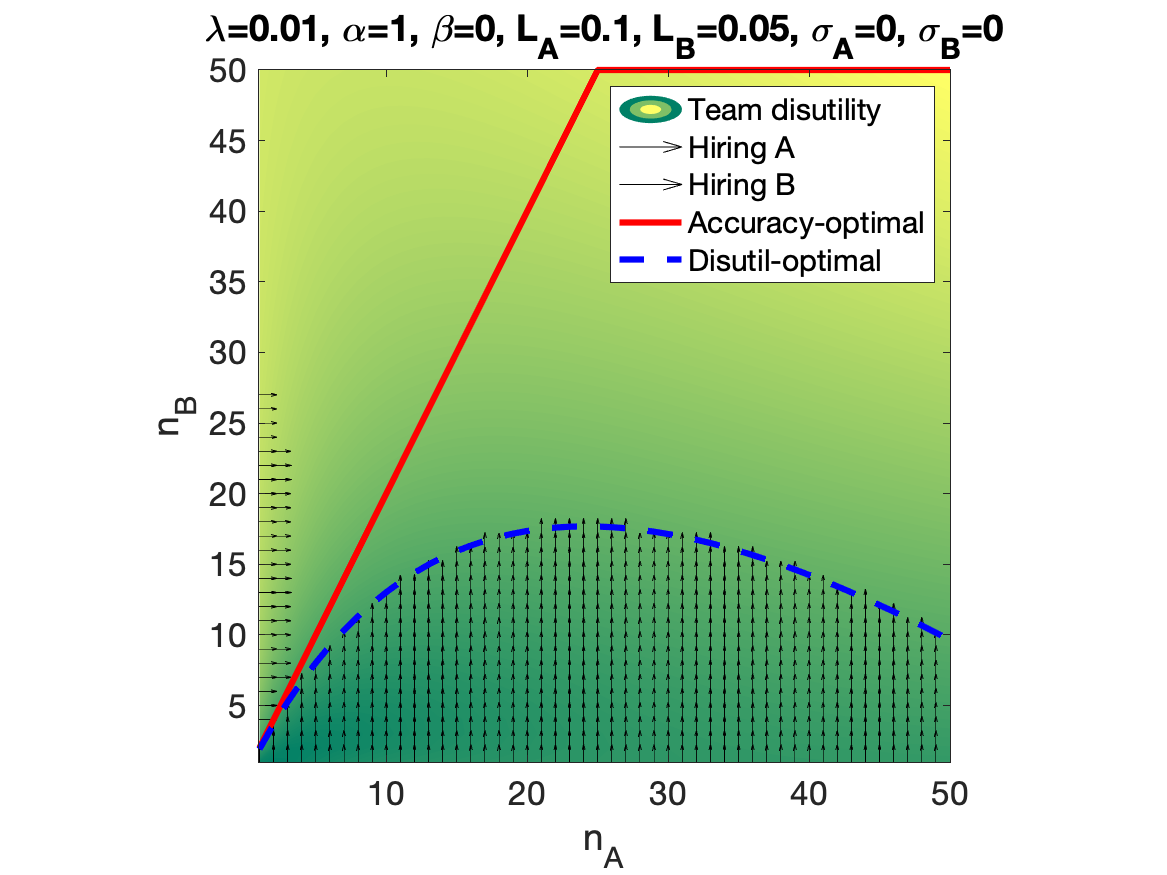}
    \end{subfigure}
    \begin{subfigure}[b]{0.32\textwidth}
        \includegraphics[width=\textwidth]{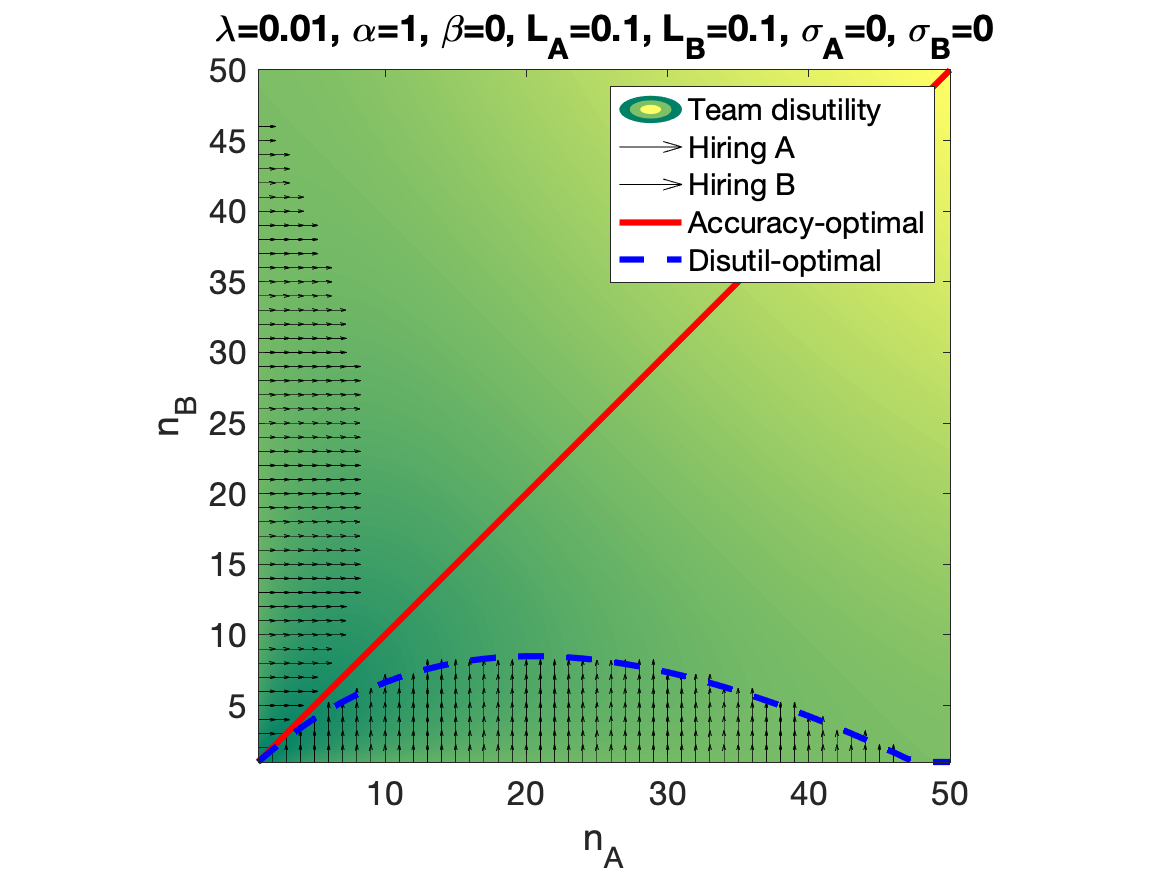}
    \end{subfigure}
    \begin{subfigure}[b]{0.32\textwidth}
        \includegraphics[width=\textwidth]{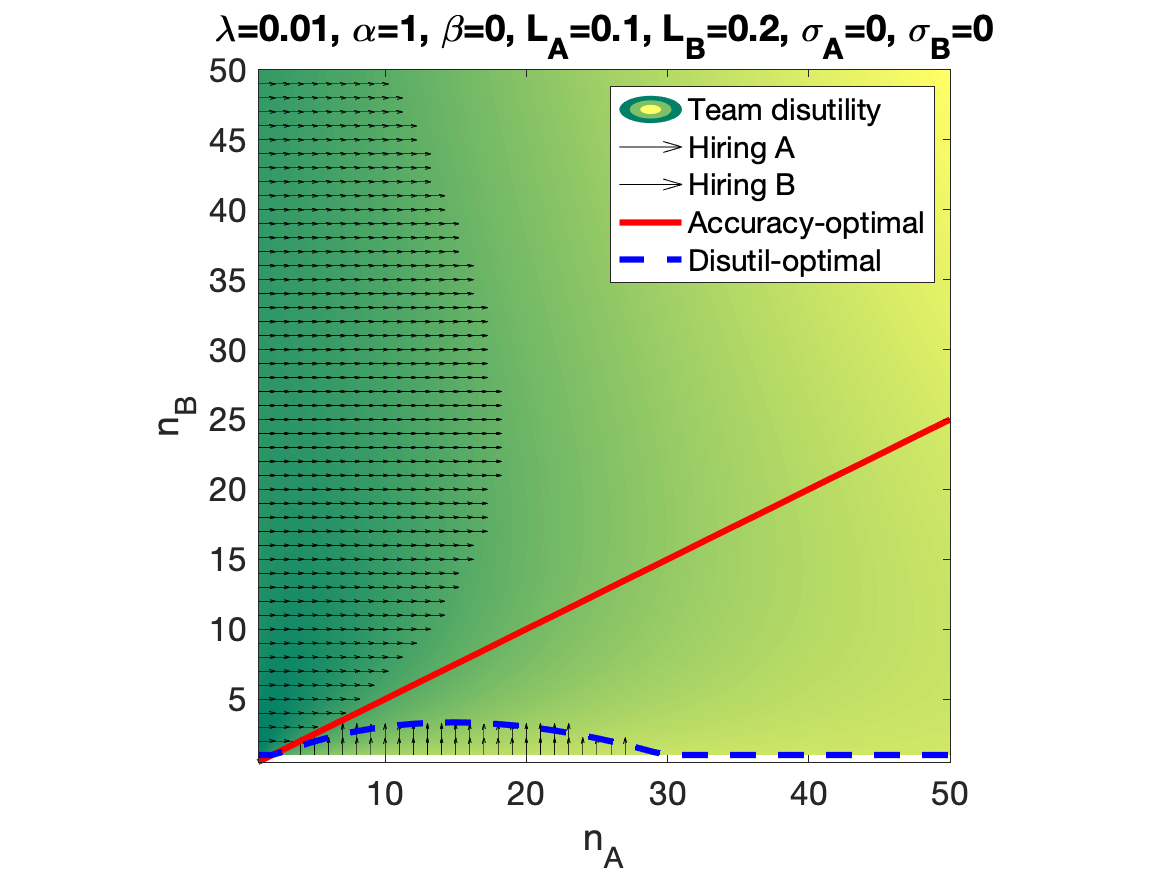}
    \end{subfigure}
    \caption{Visualization of team formation dynamics for an intermediate value of $\lambda$ ($\lambda=0.01$). $\alpha=1$, $\beta=0$, $\cL^A = 0.1$, and $\cL^B = \{0.05, 0.1, 0.2\}$. Team growth dynamics can get stuck in local optima, failing to achieve accuracy-optimal compositions.}\label{fig:plain_lambda_mid_alpha_1_beta_0}
    \vspace{-3mm}
\end{figure*}

\begin{theorem}[Utility-optimal composition for $\alpha=1, \beta=1$]\label{prop:utility_optimal_beta_1}
Consider a team with an initial composition of $n_A>0$ members of type $A$ and no member of type $B$. Fix $\lambda$ for the team. The optimal number of type $B$ members whose addition maximizes the team's utility is equal to $n^*_B  = n_A \frac{\lambda(\cL^A + \cL^B) -(1-\lambda) \cL^A}{\lambda(\cL^A + \cL^B )  -(1-\lambda)\cL^B}$.
\end{theorem}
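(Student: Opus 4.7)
The plan is to mirror the structure of the proof of Theorem~\ref{prop:utility_optimal_beta_0}, specializing the general disutility expression (\ref{eq:utility_quadratic}) to the current regime $\alpha=1$, $\beta=1$, $\sigma_A=\sigma_B=0$. With these values, both the accuracy term and the disagreement term share the common denominator $(n_A+n_B)^2$, so the disutility can be written as a single rational function
\[
U(n_B) \;=\; \frac{(1-\lambda)\bigl(n_A^2\,\cL^A + n_B^2\,\cL^B\bigr) \,+\, 2\lambda\, n_A n_B\,(\cL^A+\cL^B)}{(n_A+n_B)^2}.
\]
This clean common-denominator structure is the main conceptual point: it is exactly why the optimal $n^*_B$ turns out to be proportional to $n_A$ (in contrast with the $\beta=0$ regime, where the two terms have mismatched denominators and the answer becomes more nonlinear in $n_A$).

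Next I would differentiate $U(n_B)$ with respect to $n_B$, holding $n_A$ fixed, using the quotient rule. Setting the numerator of $U'(n_B)$ to zero yields, after multiplying through by $(n_A+n_B)/2$, the identity
\[
\bigl[(1-\lambda)n_B\cL^B + \lambda n_A(\cL^A+\cL^B)\bigr](n_A+n_B) \;=\; (1-\lambda)\bigl(n_A^2\cL^A + n_B^2\cL^B\bigr) + 2\lambda n_A n_B(\cL^A+\cL^B).
\]
Expanding both sides, the $(1-\lambda)n_B^2\cL^B$ terms cancel, as does one copy of $\lambda n_A n_B(\cL^A+\cL^B)$, so after dividing through by $n_A$ one is left with a linear equation in $n_B$:
\[
n_B\bigl[(1-\lambda)\cL^B - \lambda(\cL^A+\cL^B)\bigr] \;=\; n_A\bigl[(1-\lambda)\cL^A - \lambda(\cL^A+\cL^B)\bigr].
\]
Solving for $n_B$ and flipping the sign of both numerator and denominator gives exactly the claimed expression $n^*_B = n_A \cdot \frac{\lambda(\cL^A+\cL^B) - (1-\lambda)\cL^A}{\lambda(\cL^A+\cL^B) - (1-\lambda)\cL^B}$.

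Finally, to confirm this is truly the minimizer of disutility (and not a maximizer or inflection point), I would mimic the remark at the end of Proposition~\ref{prop:accuracy_optimal}: check the sign of $U'(n_B)$ on either side of $n^*_B$, using the fact that the quadratic numerator obtained after clearing denominators is linear in $n_B$ with a coefficient whose sign is determined by $(1-\lambda)\cL^B - \lambda(\cL^A+\cL^B)$. The main obstacle in this proof is not conceptual but bookkeeping: keeping track of the cancellations cleanly enough that the answer comes out in the exact symmetric form stated. Because the algebra is otherwise routine expansion and simplification once the common-denominator observation is made, I expect this to be the only real care-point in the write-up.
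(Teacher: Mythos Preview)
Your proposal is correct and follows essentially the same approach as the paper: both differentiate the disutility with respect to $n_B$, clear the common denominator $(n_A+n_B)^3$, and solve the resulting linear equation. The only cosmetic difference is that the paper differentiates the disagreement and accuracy terms separately before combining them, whereas you first write $U(n_B)$ over a single denominator and then apply the quotient rule; the algebra and the critical-point equation are identical.
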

\begin{proof}
Recall that when $\sigma^2_A = \sigma^2_B = 0$, the disagreement term in the team's objective (\ref{eq:utility_quadratic}) simplifies to $\frac{2n_A n_B}{(n_A + n_B)^2} \left( \cL^A + \cL^B \right)$. 
Taking the derivative of this function with respect to $n_B$, we obtain:
{\footnotesize
\begin{equation*}
2\left( \cL^A + \cL^B \right) \frac{n_A(n_A+n_B)^2 - 2 n_A n_B (n_A + n_B)}{(n_A + n_B)^4}
=  2\left( \cL^A + \cL^B \right) \frac{n_A(n_A-n_B)}{(n_A + n_B)^3}
\end{equation*}
}
Note that the above is always positive, and decreasing in $n_B$ as long as as $n_B \leq n_A$. So if the cost function (the $\lambda$-weighted sum of disagreement and loss) has a zero, it must be before $n_A   \frac{\cL^A}{\cL^B}$. To see where precisely the zero lies, we can write:
{\footnotesize
\begin{eqnarray*}
&&  2\lambda \left( \cL^A + \cL^B \right) \frac{n_A(n_A - n_B)}{(n_A + n_B)^3} + (1-\lambda) \left( \frac{2n_Bn_A}{(n_A + n_B)^3}  \cL^B + \frac{-2n_A^2}{(n_A + n_B)^3}   \cL^A \right)  = 0\\
&\Leftrightarrow &   2\lambda \left( \cL^A + \cL^B \right) n_A(n_A - n_B) +  (1-\lambda) \left(2n_Bn_A  \cL^B -2n_A^2  \cL^A \right)  = 0\\
&\Leftrightarrow &   2\lambda\left( \cL^A + \cL^B \right) (n_A - n_B) +  (1-\lambda) \left(2n_B  \cL^B -2n_A \cL^A \right)  = 0\\
&\Leftrightarrow &   n_B = n_A \frac{2\lambda(\cL^A + \cL^B) -2 (1-\lambda) \cL^A}{2\lambda(\cL^A + \cL^B )  - 2 (1-\lambda)\cL^B}.
\end{eqnarray*}
}
\end{proof}

Figure~\ref{fig:plain_lambda_mid_alpha_1_beta_1} illustrates the team growth dynamics for $\lambda=0.02$ when $\alpha=1$, $\beta=1$, and under several different regimes of $(\cL^A, \cL^B)$. Team formation dynamics continue to converge to utility optima, failing
to achieve accuracy-optimal compositions. Note, however, the different patterns of inefficiency in the case of $\beta=0$ and $\beta=1$.

\hhedit{
\begin{corollary}[Convergence of team growth dynamics for $\alpha=1, \beta=1$]
Consider a team with the initial composition of $(n_A,n_B)$. Let $n^*_B = n_A \frac{\lambda(\cL^A + \cL^B) -(1-\lambda) \cL^A}{\lambda(\cL^A + \cL^B )  - (1-\lambda)\cL^B}$. Regardless of the order in which agents of each type arrive, the team growth dynamics converges to $(n'_A, n'_B)$ where
\begin{itemize}
    \item $n'_A = n_A$ and $n'_B = n^*_B$ if $n_B \leq n^*_B$.
    \item $n'_A = n_B \frac{\lambda(\cL^B + \cL^A) -(1-\lambda) \cL^B}{\lambda(\cL^A + \cL^B )  - (1-\lambda)\cL^A}$ and $n'_B = n_B$ otherwise.
\end{itemize}
\end{corollary}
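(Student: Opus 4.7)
The plan is to exploit the fact that when $\alpha=1$ and $\beta=1$, the team's disutility function is homogeneous of degree zero in $(n_A, n_B)$, so the entire objective depends only on the ratio $r = n_B / n_A$. Indeed, the accuracy term $\frac{n_A^2 \cL^A + n_B^2 \cL^B}{(n_A + n_B)^2}$ and the disagreement term $\frac{2 n_A n_B}{(n_A+n_B)^{2}}(\cL^A + \cL^B)$ are both invariant under the rescaling $(n_A, n_B) \mapsto (c n_A, c n_B)$, which reduces the problem to a one-dimensional optimization in $r$.

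First, I would invoke Theorem \ref{prop:utility_optimal_beta_1} to conclude that the disutility, viewed as a function of $n_B$ with $n_A$ fixed, is strictly unimodal with unique minimum at $n^*_B = n_A \cdot r^*$, where $r^* = \frac{\lambda(\cL^A + \cL^B) - (1-\lambda)\cL^A}{\lambda(\cL^A + \cL^B) - (1-\lambda)\cL^B}$. By the homogeneity observation above, the disutility viewed as a function of $r$ alone is strictly decreasing on $[0, r^*]$ and strictly increasing on $[r^*, \infty)$. Applying the same theorem with the roles of $A$ and $B$ swapped yields the symmetric minimizer $r^{**} = 1/r^*$, which matches exactly the expression given for $n'_A$ in the second case of the corollary.

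Next I would split on the initial ratio $r_0 = n_B / n_A$. In Case 1 ($r_0 \leq r^*$), any $A$-arrival pushes $r$ further from $r^*$ toward zero, strictly increasing the disutility, and is therefore rejected; any $B$-arrival moves $r$ toward $r^*$ and is accepted as long as the ratio remains on the decreasing side, so the team accumulates $B$-members until its composition reaches $(n_A, n^*_B)$. Case 2 ($r_0 > r^*$) is entirely symmetric: $B$-arrivals are rejected and $A$-arrivals are accepted until $r$ returns to $r^*$, yielding $(n'_A, n_B)$ with $n'_A = n_B \cdot r^{**}$. Since acceptance or rejection is determined solely by the current ratio and the type of the arrival—never by the order in which past arrivals occurred—the limiting composition is independent of the arrival sequence.

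The main subtlety lies in the discrete-to-continuous gap: $n^*_B$ is real-valued while team size increments by one, so the process technically halts at the largest integer $n_B$ for which a further $B$-arrival would still lower disutility. I will adopt the paper's convention of treating the corollary as a statement about the limiting ratio, and note that the strict unimodality on each side of $r^*$ makes this stopping point well-defined—this monotonicity is precisely what underwrites the order-independence claim.
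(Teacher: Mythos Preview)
Your proposal is correct and, in fact, more explicit than what the paper offers: the corollary is stated in the paper without a separate proof, as an immediate consequence of Theorem~\ref{prop:utility_optimal_beta_1}. Your observation that the disutility is homogeneous of degree zero when $\alpha=\beta=1$ is the clean way to justify the order-independence claim and is only implicit in the paper (where it appears as the fact that $n^*_B/n_A$ is a constant ratio). One small point worth tightening: Theorem~\ref{prop:utility_optimal_beta_1} locates the zero of the derivative but does not, on its face, assert strict unimodality; you should note that the bracketed factor in the derivative is \emph{linear} in $n_B$, so the derivative changes sign exactly once, which is what gives you the monotone behavior on each side of $r^*$ that your case analysis relies on.
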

}

\begin{figure*}[t!]
    \centering
    \begin{subfigure}[b]{0.32\textwidth}
        \includegraphics[width=\textwidth]{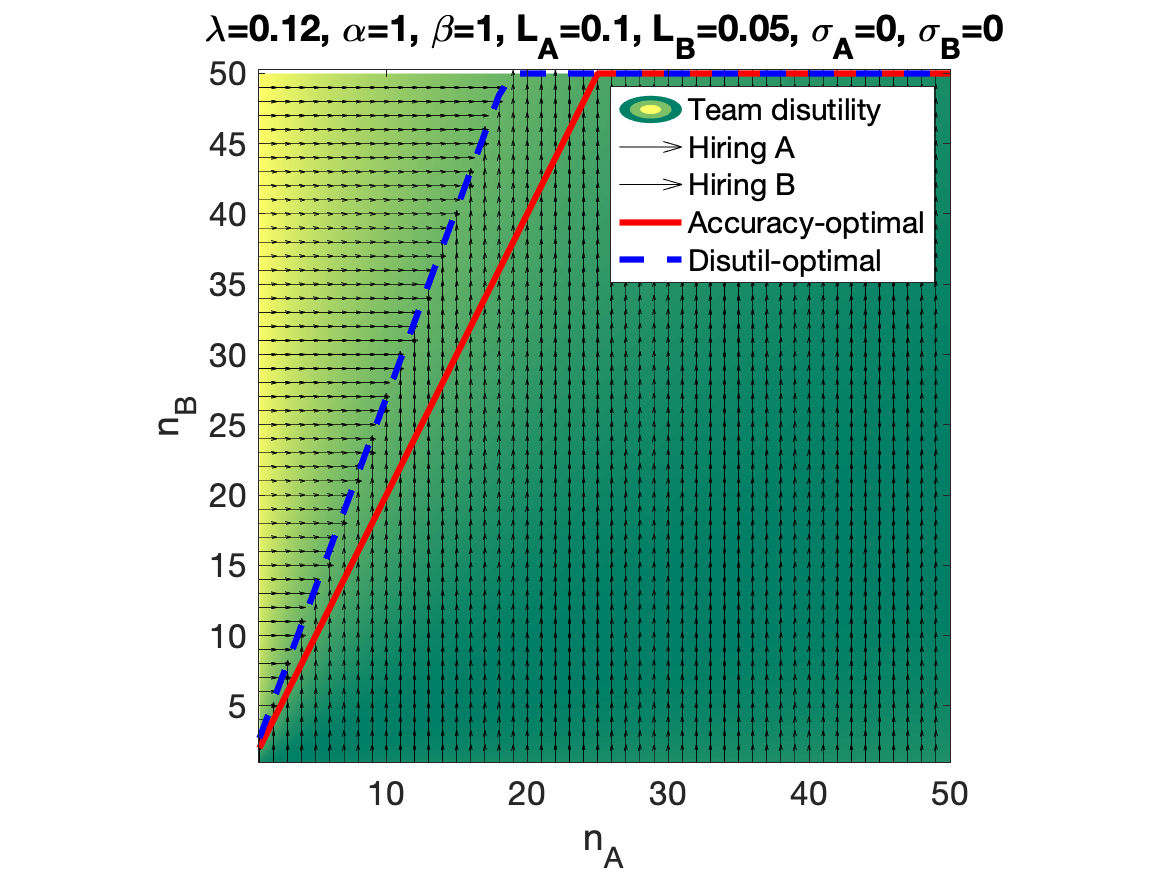}
    \end{subfigure}
    \begin{subfigure}[b]{0.32\textwidth}
        \includegraphics[width=\textwidth]{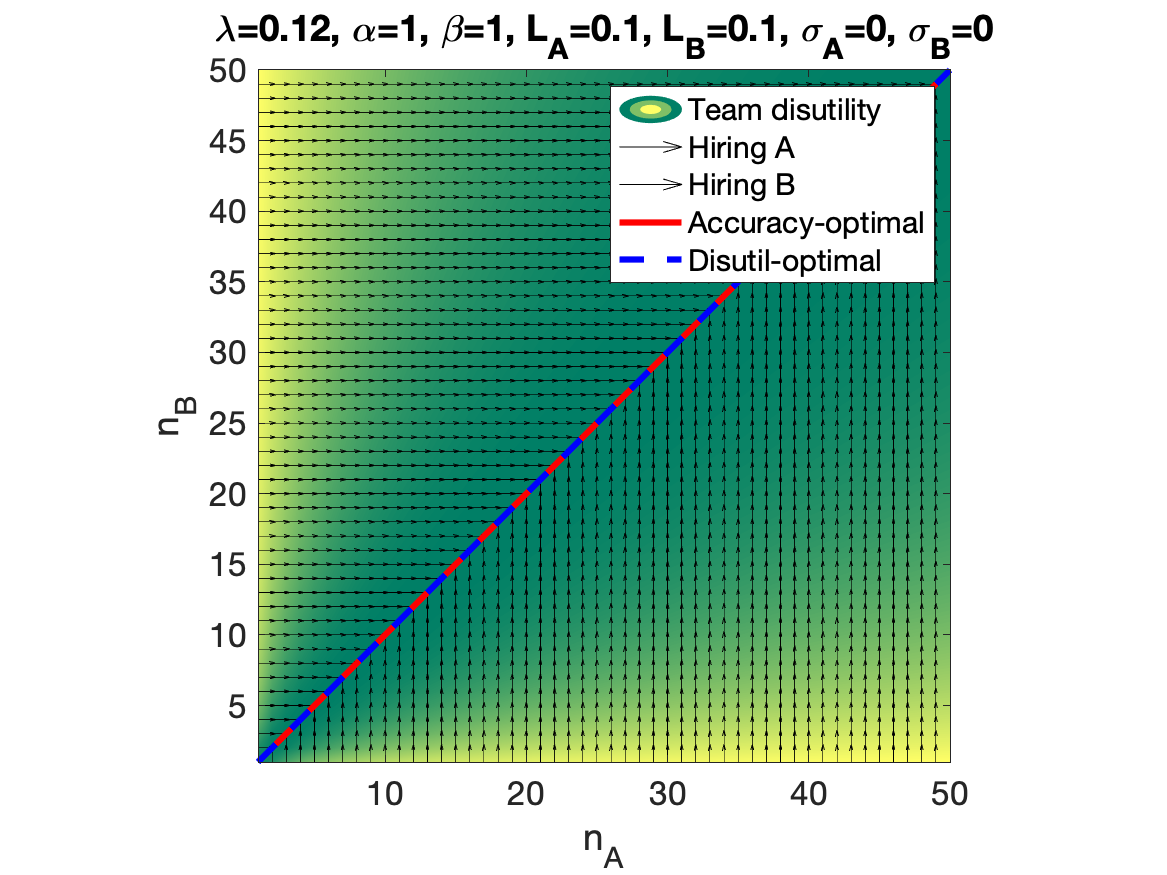}
    \end{subfigure}
    \begin{subfigure}[b]{0.32\textwidth}
        \includegraphics[width=\textwidth]{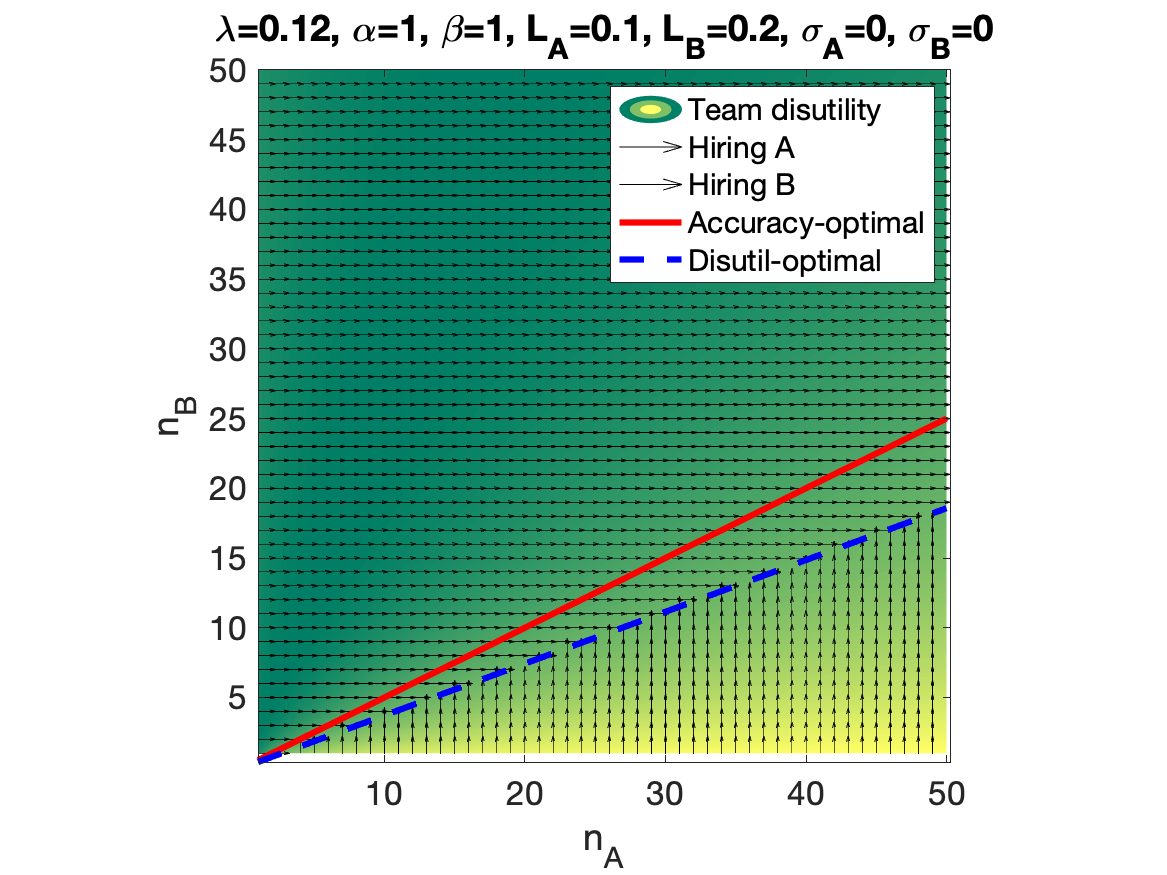}
    \end{subfigure}
    \caption{Visualization of team formation dynamics for an intermediate value of $\lambda$ ($\lambda=0.12$). $\alpha=1$, $\beta=1$, $\cL^A = 0.1$, $\cL^B = \{0.05, 0.1, 0.2\}$ and $\var(\epsilon) = 0$. Team formation dynamics converge to utility optima, failing to achieve accuracy-optimal compositions.}\label{fig:plain_lambda_mid_alpha_1_beta_1}
    \vspace{-4mm}
\end{figure*}

Figure~\ref{fig:plain_lambda_mid_alpha_high} illustrates the team growth dynamics for $\lambda=0.05$ when $\alpha$ is high. In general, we observe that high $\alpha$ induces a lower bound on the number of less-represented type members needed to make increasing the type's representation in the team beneficial. Additionally, larger $\beta$ values encourage a dominant majority to bring on more members of its own to reduce disagreement---even if that comes at the cost of accuracy.

\subsection{Takeaways from the Analysis}
\xhdr{The path-dependent nature of inefficiencies} Through the analysis in this Section, we observe that the initial composition of the team plays an important role in its eventual composition. As an illustrative example of the different effects at work, consider Figure~\ref{fig:plain_lambda_mid_alpha_high} (b). The initial composition of the team dictates whether (a) the team remains at its initial makeup, (b) it adds members to the less-represented type to move toward greater accuracy, or (c) it continues adding to the more-represented type in a way that overpowers the less-represented type.  While the exact dynamics are specific to our model, this general family of observations has important implications for teams in organizations more generally: that the initial composition can have a significant effect on the direction in which the team grows. 

\xhdr{The role of the aggregation mechanism}
It is also interesting to note the ways in which varying the aggregation parameter $\alpha$ has an effect on the team growth dynamics and incentives for the team to add members of each group.  This suggests more generally some of the mechanisms whereby aggregation can influence decisions about group composition.  There are interesting analogies to other contexts that exhibit a link between aggregation mechanisms and the dynamics of new membership.  For example, while legislative bodies are distinct from problem-solving teams, there is an interesting analogy to issues such as the way in which the prospect of statehood for entities like the District of Columbia and Puerto Rico play out differently in the U.S. House of Representatives, where aggregation is done proportionally to population, and the U.S. Senate, where aggregation is done uniformly across states.
This is precisely a case of the difference in aggregation mechanism implying differences in the politics of new membership (in this case via statehood).

\begin{figure*}[t!]
    \centering
    \begin{subfigure}[b]{0.32\textwidth}
        \includegraphics[width=\textwidth]{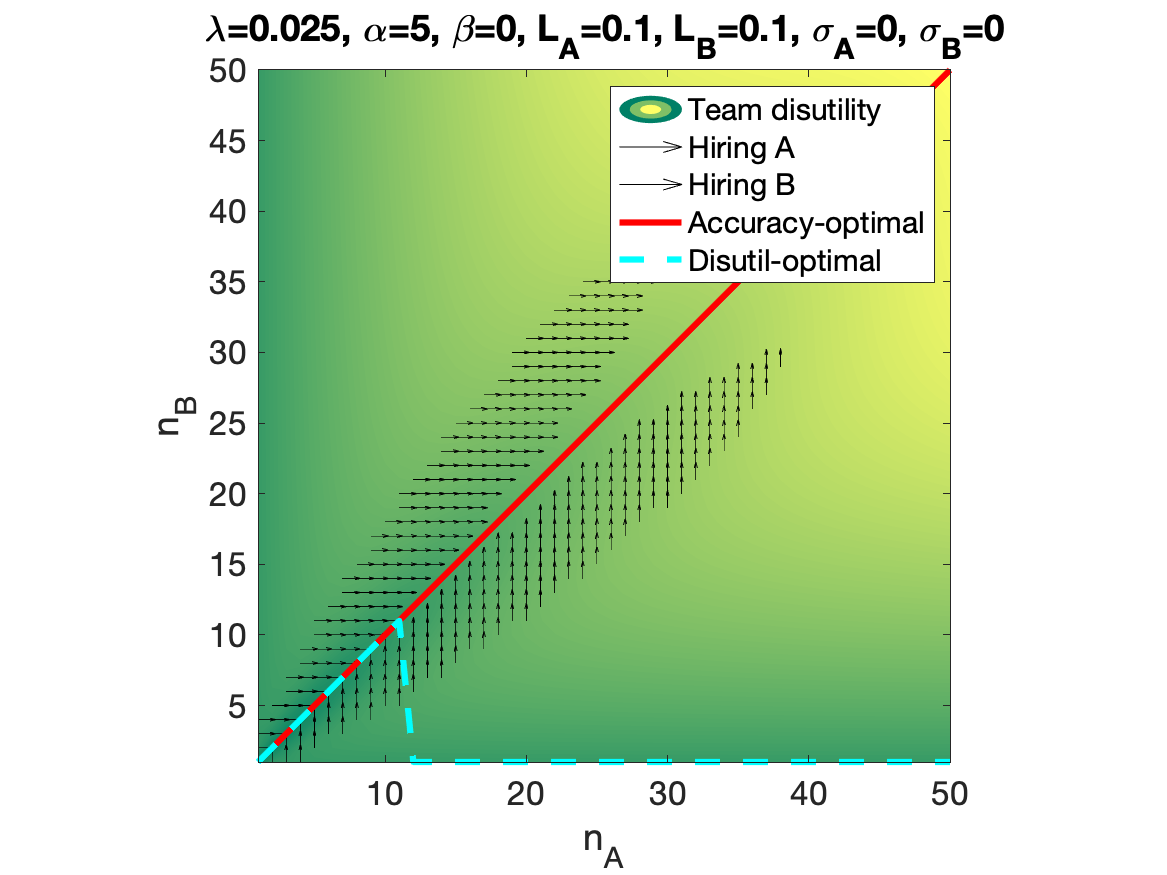}
    \end{subfigure}
    \begin{subfigure}[b]{0.32\textwidth}
        \includegraphics[width=\textwidth]{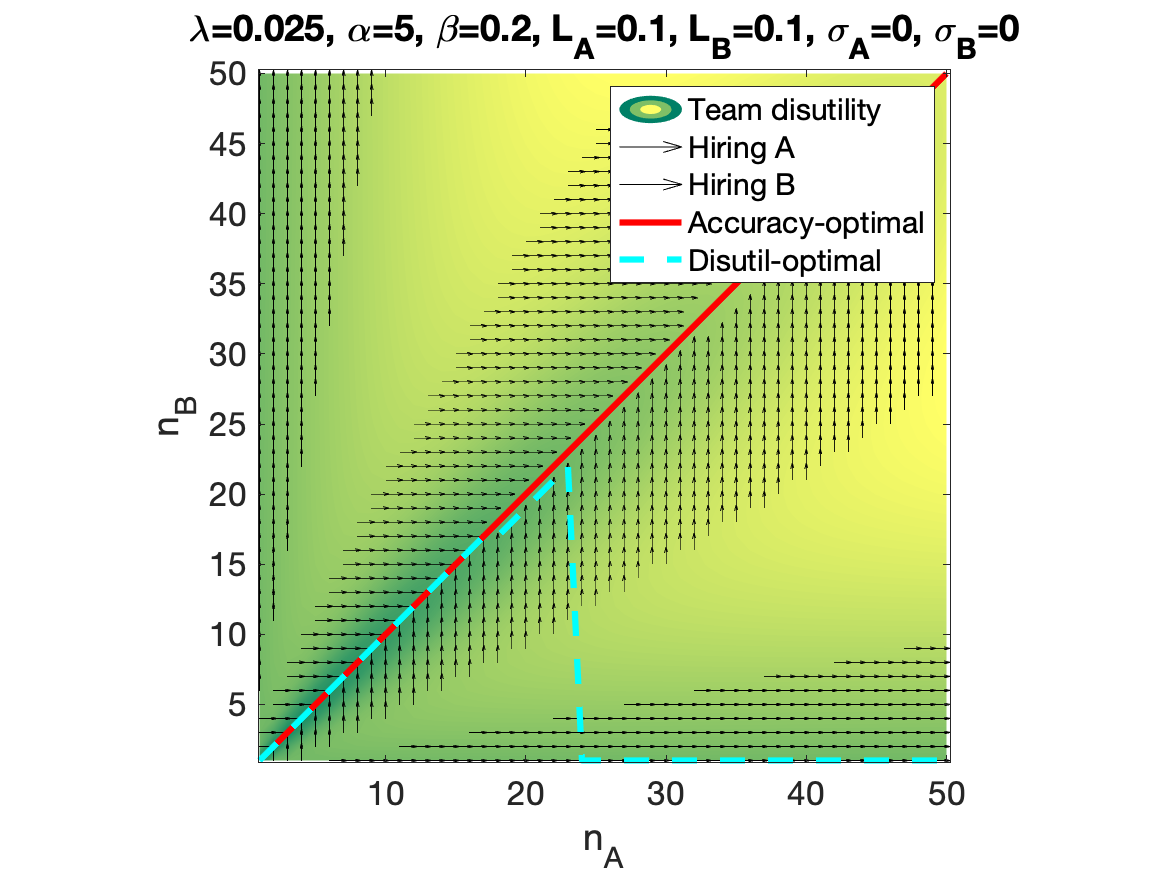}
    \end{subfigure}
    \begin{subfigure}[b]{0.32\textwidth}
        \includegraphics[width=\textwidth]{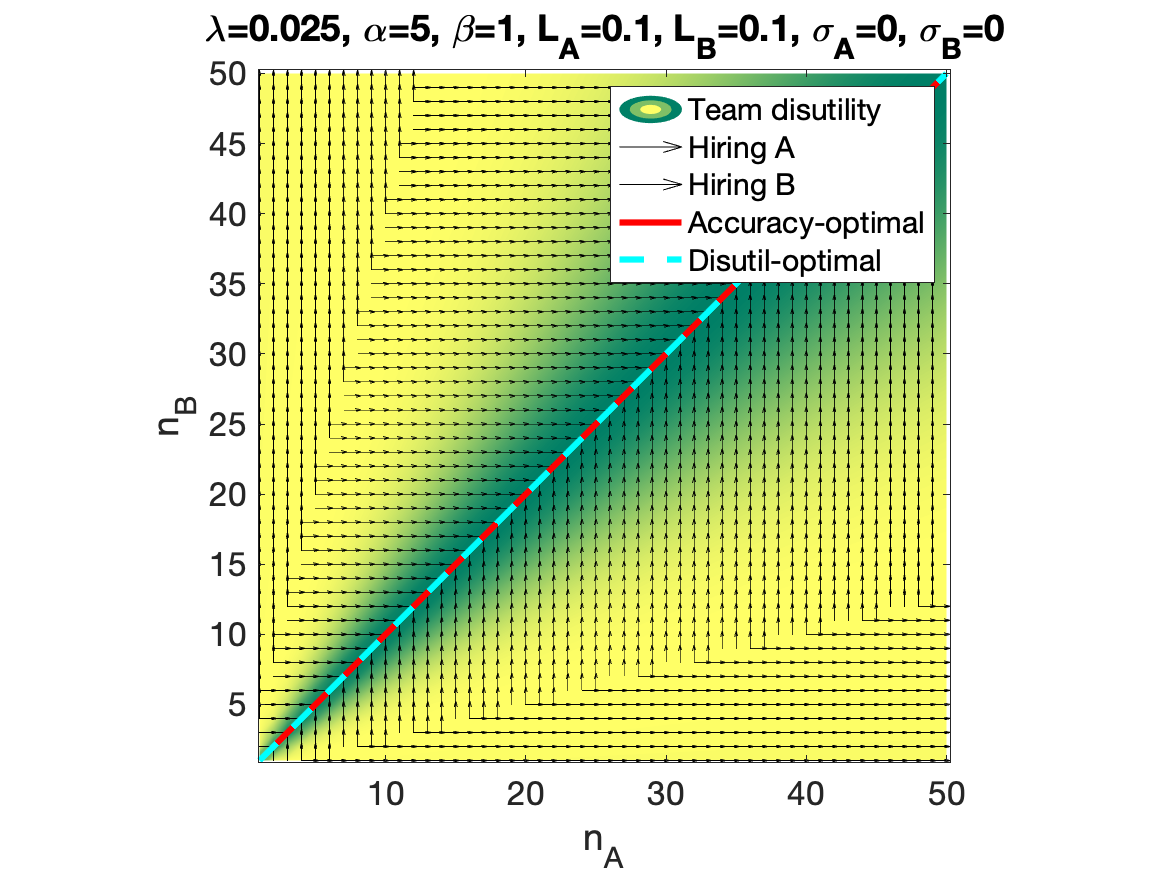}
    \end{subfigure}
    \caption{Visualization of team formation dynamics for an intermediate value of $\lambda$ ($\lambda=0.025$), $\alpha=5$, $\beta=\{0, 0.2, 1\}$, $\cL^A = 0.1$, $\cL^B = 0.1$ and $\sigma_A = \sigma_B = 0$. When $\alpha$ is high, adding a member of the less-represented type is only beneficial if it has a non-negligible impact on accuracy, hence the lower bound on the number of the type's members for their addition to start. (a) since $\beta=0$, there is an upper bound on the number of less-represented group members. (b,c) for larger $\beta$ values, if the majority is sufficiently dominant, hiring more majority members reduces the disagreement, which is beneficial (even if it degrades the accuracy). }\label{fig:plain_lambda_mid_alpha_high}
    \vspace{-3mm}
\end{figure*}

\section{Extensions}\label{sec:extension}

\xhdr{Alternative notions of distance and accuracy loss}
Throughout our analysis, we assumed that the distance and the loss functions take on simple quadratic forms. It is easy to show that our main result (that team composition initially trends toward improving accuracy but stops short of achieving the optimal performance) holds for more generic functional forms.
Consider a distance metric $\delta(.,.)$ capturing disagreements between team members, and a loss function $\ell(.,.)$ capturing the team's predictive loss. For simplicity, let's assume both $\ell$ and $d$ are continuous and differentiable.
Let's define the following pieces of notation for convenience:
{
$$\tilde{\delta}(n_A,n_B) := \lambda \times \frac{1}{(n_A + n_B)^{\beta}}  \Exp_{\vx \sim P}\left[ \delta(f_A(\vx), f_B(\vx)) \right] \text{\indent and \indent}\tilde{\ell}(n_A,n_B) := \Exp_{\vx \sim P}\left[ \ell(\cG_{n_A,n_B}(\vx), y) \right].$$
}
With similar reasoning as that presented in the proof of Proposition~\ref{prop:utility_optimal_beta_0}, we can show the following:

\begin{proposition}[informal statement]
Consider a team with an initial composition
of $n_A > 0$ members of type A and no member of type B. Fix $\lambda$ for the team. 
Suppose $\tilde{\delta}(.,.)$ and $\tilde{\ell}(.,.)$ are both differentiable, and the following conditions hold:
\begin{enumerate}
\item $\tilde{\delta}(n_A,.)$ is \emph{concave} and \emph{increasing} in the number of the less-represented group members, $n_B$. 
\item $\tilde{\ell}(n_A,.)$, is initially \emph{decreasing} and \emph{convex},  but becomes and remains \emph{increasing} thereafter.
\end{enumerate}
Then the optimal number of type B
members whose addition maximizes the team’s utility is strictly less than $n_A$.
\end{proposition}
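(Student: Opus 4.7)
The plan is to study the total cost $C(n_B) := \lambda\, \tilde{\delta}(n_A, n_B) + (1-\lambda)\, \tilde{\ell}(n_A, n_B)$ as a function of $n_B$ alone (with $n_A$ held fixed), and to show that every minimizer $n_B^*$ satisfies $n_B^* < n_A$. By hypothesis~(2) and differentiability of $\tilde{\ell}(n_A,\cdot)$, there is a unique transition point $n_B^{\mathrm{acc}}$ at which $\partial_{n_B}\tilde{\ell}(n_A,\cdot)$ vanishes; this is precisely the accuracy-optimal composition, which the informal statement implicitly identifies with $n_A$ (as happens in the symmetric case of Proposition~\ref{prop:accuracy_optimal}).

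First I would rule out minimizers in the region $(n_B^{\mathrm{acc}}, \infty)$: there hypothesis~(2) gives $\partial_{n_B}\tilde{\ell} > 0$ and hypothesis~(1) gives $\partial_{n_B}\tilde{\delta} > 0$, so $\partial_{n_B} C > 0$ and $C$ is strictly increasing on this region. Second, I would rule out $n_B = n_B^{\mathrm{acc}}$ itself: by the transition assumption, $\partial_{n_B}\tilde{\ell}(n_A, n_B^{\mathrm{acc}}) = 0$, while strict monotonicity from hypothesis~(1) yields $\partial_{n_B}\tilde{\delta}(n_A, n_B^{\mathrm{acc}}) > 0$, so
$$\partial_{n_B} C(n_B^{\mathrm{acc}}) \;=\; \lambda\,\partial_{n_B}\tilde{\delta}(n_A, n_B^{\mathrm{acc}}) \;>\; 0.$$
Hence a small leftward move from $n_B^{\mathrm{acc}}$ strictly decreases $C$, ruling out $n_B^{\mathrm{acc}}$ as a minimizer. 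Existence of a global minimum then follows from continuity of $C$ on the compact interval $[0, n_B^{\mathrm{acc}}]$, and combining the three steps places every minimizer strictly in $[0, n_B^{\mathrm{acc}})$, giving $n_B^* < n_A$.

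The concavity of $\tilde{\delta}$ and convexity of $\tilde{\ell}$ on $[0, n_B^{\mathrm{acc}}]$ are not strictly necessary for the bound itself; their role is to make the first-order condition $\lambda\,\partial_{n_B}\tilde{\delta} = -(1-\lambda)\,\partial_{n_B}\tilde{\ell}$ a balance between a strictly positive, non-increasing left-hand side and a strictly positive, non-increasing right-hand side on $[0, n_B^{\mathrm{acc}})$, from which one can argue about the location and uniqueness of the interior critical point. The main obstacle I anticipate is not technical but interpretational: the conclusion ``$n_B^* < n_A$'' is crispest when $n_B^{\mathrm{acc}} = n_A$, so the formal statement should either invoke a symmetry assumption mirroring Proposition~\ref{prop:accuracy_optimal} or be rephrased as ``strictly less than the accuracy-optimal $n_B^{\mathrm{acc}}$''. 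In either phrasing, the essential content---that any positive weight on affinity bias pulls the utility-optimum strictly short of accuracy-optimality---is captured by the strictly positive derivative of $C$ at the accuracy optimum.
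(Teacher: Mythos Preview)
Your argument is correct and follows essentially the same route as the paper's proof sketch: both observe that $\partial_{n_B}\tilde{\delta}>0$ everywhere while $\partial_{n_B}\tilde{\ell}\ge 0$ at and beyond the accuracy optimum, so the combined derivative is strictly positive there and any minimizer of $C$ must lie strictly to the left. Your interpretational caveat is well taken---the paper's own proof sketch in fact only establishes ``strictly less than the accuracy-optimal $n_B^{\mathrm{acc}}$,'' not ``strictly less than $n_A$,'' so your suggested rephrasing is the more accurate conclusion.
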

\begin{proof}[Proof sketch]
Note that the first condition implies $\frac{\partial}{\partial n_B}\tilde{\delta}(n_A,.)$ is positive and decreasing (with $c \geq 0$ as its potential asymptote).  Additionally,  the second condition implies that the $\frac{\partial}{\partial n_B}\tilde{\ell}(n_A,.)$ is initially negative, but reaches zero at some point and remains positive thereafter.  The derivative of the sum is equal to the sum of derivatives, so the derivative of the team's objection function with respect to $n_B$ is equal to 
$ \lambda \times \frac{\partial}{\partial n_B}\tilde{\delta}(n_A,.) + (1-\lambda) \times \frac{\partial}{\partial n_B}\tilde{\ell}(n_A,.).$
Therefore, if the above derivative has a zero, it must lie before the zero of the accuracy loss term. 
\end{proof}
\hhedit{We remark that with the appropriate choice of $\tilde{\delta}$ and $\tilde{\ell}$, the utility-maximizing number of type $B$ members can be arbitrarily close to the number needed for optimizing accuracy. }

\xhdr{More than two predictive types}
In the analysis in Section~\ref{sec:analysis}, we assumed that agents belong to one of the two types: $A$ or $B$. As we show in Appendix~\ref{sec:three_types}, our derivations readily generalize to three types ($A$, $B$, and $C$), where $f^*(\vx) = \theta^A(\vx) + \theta^B(\vx) + \theta^C(\vx)$. Figure~\ref{fig:three_types} below illustrates the team growth dynamics for three equally accurate predictive types ($\cL^A = \cL^B =\cL^C =0.1$) where $\lambda= 0.025$, $\alpha=5$, and $\beta=0.1$.

\begin{figure}[h!]
    \centering
    \vspace{-4mm}
     \includegraphics[width=0.6\textwidth]{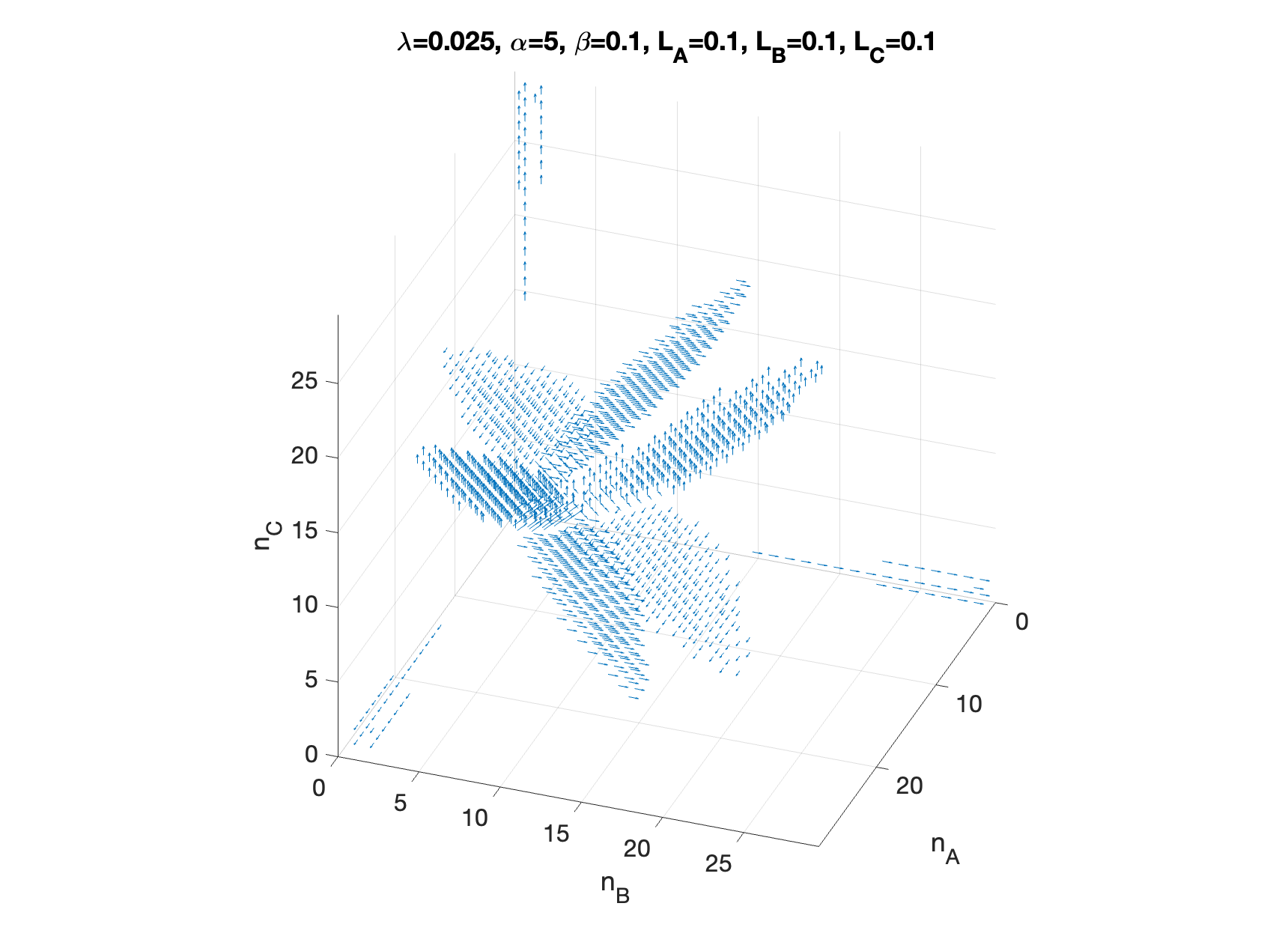}
    \caption{Visualization of team formation dynamics for three types. Trends are similar to that of two types.
    }\label{fig:three_types}
    \vspace{-4mm}
\end{figure}

\xhdr{The role of biased accuracy-gain assessments}
We assumed throughout that teams could perfectly (i.e., without bias and noise) estimate the accuracy gains of adding a new member of each type. While this is a common assumption in prior work, in reality, such assessment may be biased (e.g., optimistic or pessimistic) and noisy. Here, let's consider the biased case, as demonstrated via the example in Figure~\ref{fig:biased_assessment}. (Figure~\ref{fig:noisy_biased_assessment} in the Appendix illustrates the case where assessments are both biased and noisy). In the absence of bias (Figure~\ref{fig:biased_assessment}, (b)), we observed that once teams reach an accuracy-optimal composition, they cease to grow any further. Additionally, adding a new member of type A and a new member of type B could not simultaneously improve the team's utility. When accuracy gain assessments are over-estimated (Figure~\ref{fig:biased_assessment}, (c)), however, the same teams may continue to grow beyond accuracy optimal compositions, and they may find themselves in situations where adding a new member of \emph{any} type is beneficial. As an example, compare the dynamics at $(40,40)$. Conversely, when accuracy gain assessments are under-estimated (Figure~\ref{fig:biased_assessment}, (a)), teams dynamics get stuck in accuray sub-optimal compositions.

\begin{figure*}[t!]
    \centering
    \begin{subfigure}[b]{0.32\textwidth}
        \includegraphics[width=\textwidth]{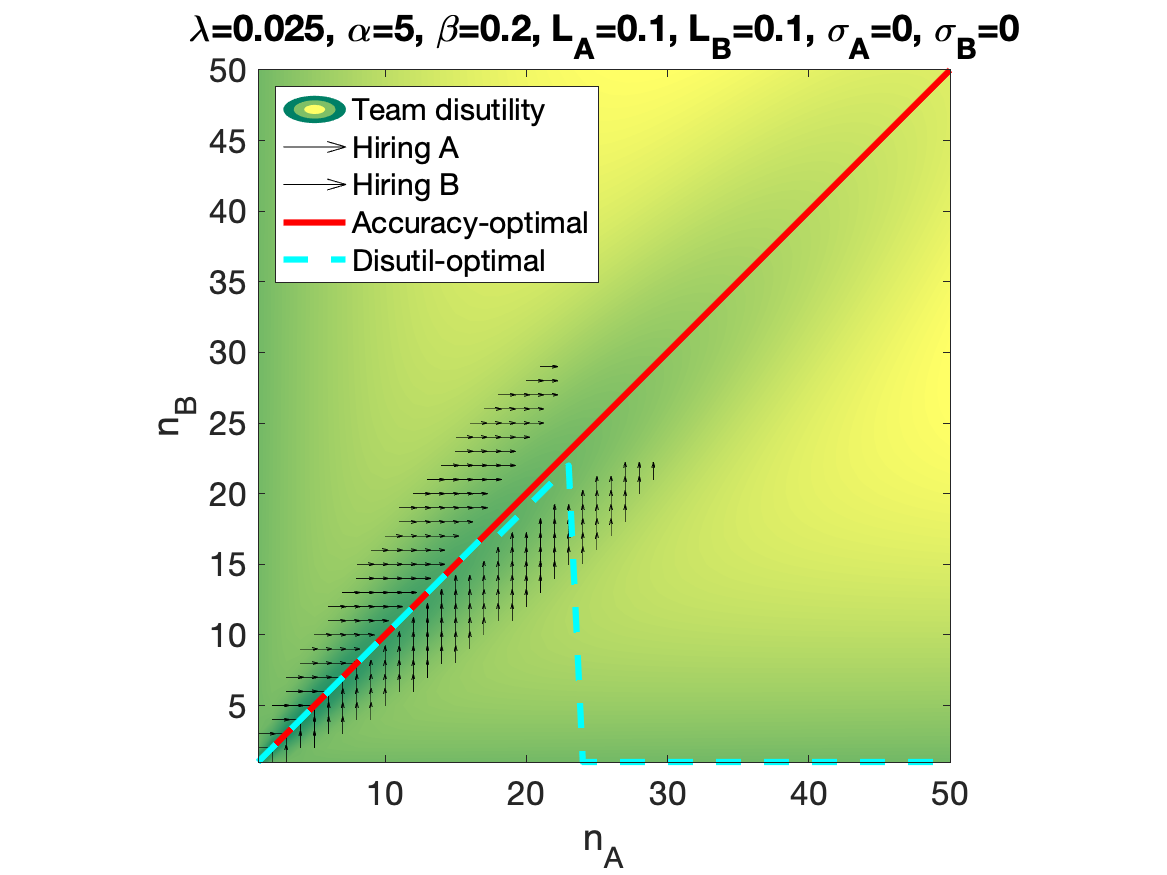}
    \end{subfigure}
    \begin{subfigure}[b]{0.32\textwidth}
        \includegraphics[width=\textwidth]{Figures/lambda_2.5alpha_50_beta_2_LA_10_LB_10_varA_0_varB_0}
    \end{subfigure}
    \begin{subfigure}[b]{0.32\textwidth}
        \includegraphics[width=\textwidth]{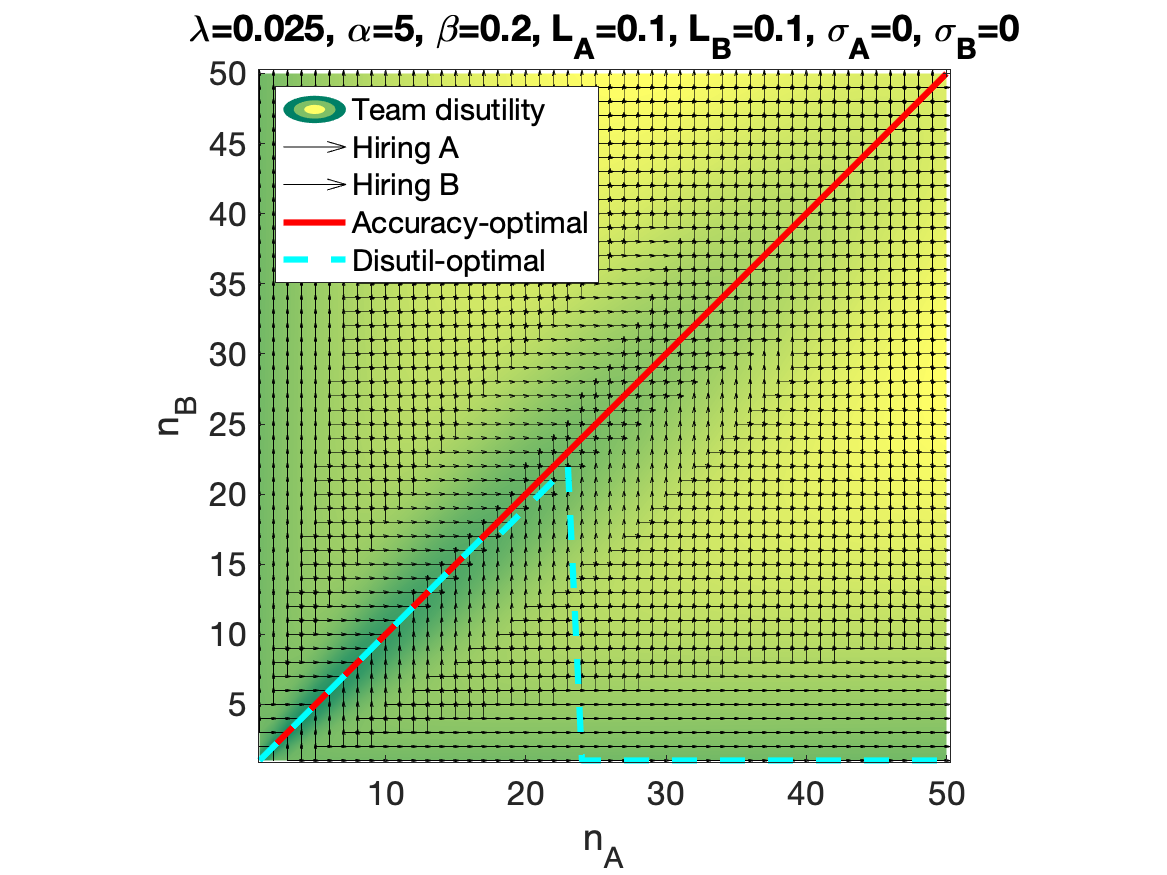}
    \end{subfigure}
    \caption{Visualization of team growth dynamics when assessments of utility gains are biased---as captured by an additive bias term equal to: (a) $0.12$ (under-estimation of gains); (b) $0$ (unbiased estimate); (c) $-0.12$ (over-estimation). (a) The team may not reach accuracy optimality, or (c) it may grow beyond it.}\label{fig:biased_assessment}
    \vspace{-5mm}
\end{figure*}

\vspace{1mm}
\section{Conclusion} \label{sec:conclusion}
This work offered a stylized model of team growth dynamics in the presence of a tension
between informational diversity and affinity bias. Our analysis provides several key observations about the effect of affinity bias on team composition inefficiencies (even an arbitrarily small positive weight on affinity bias leads to inefficiency) and the moderating role of the aggregation mechanism and team size. 
It also shows how the growth dynamics of a team can lead toward optimality for some starting compositions and away from it for others.
\hhedit{Our findings present several actionable insights to improve team growth dynamics. In particular, it shows that awareness of the positive impact of diversity on a team's performance alone will not incentivize high-performing teams to form. But the social planner can positively influence team growth dynamics by adjusting the initial team composition or the aggregation mechanism used to resolve conflicts of opinion.}

We conclude with a discussion of limitations and outline of important directions for future work.

\hhdelete{
\xhdr{Analogy to the Dunning-Kruger effect} The effect of affinity bias in our team formation dynamics creates an indirect yet intriguing analogy to the Dunning-Kruger effect in psychology~\citep{schlosser2013unaware}: the tendency of people with low ability at a given task to overestimate their performance, and the tendency of high performing individuals to underestimate their ability and skills. In other words, there is a tension between performance and self-assessment.  In our work, we identify a different kind of tension between performance and self-assessment, operating on teams rather than individuals: As we saw through the analysis, more diverse teams in our model perform better (in terms of accuracy) but are not as satisfied with their team (in terms of utility), and this dichotomy matches known empirical findings \citep{phillips2009pain,milliken1996searching,lauretta1992effects,watson1993cultural,o1989work}.
}

\vspace{-1mm}
\xhdr{Strategic considerations} Our model considers the incentives of the overall team to improve total utility, but does not account for other kinds of incentives, including \emph{individual} ones. The decision of an individual agent considering whether to join a team or not may be impacted by the proportion of current team members of the same type. For instance, a type $B$ agent may refuse to join if the current number of type $B$ members of the team is below a certain threshold. Agents who already belong to a team may have incentive to exaggerate their opinion in anticipation of their opinions getting aggregated. Under such circumstances, it may be beneficial to utilize
non-uniform/weighted voting schemes both to improve team's accuracy and promote truthfulness. We leave the exploration of such incentives as an important direction for future work.

\vspace{-1mm}
\xhdr{Opinion formation processes} 
Our model does not provide a micro-foundation of opinion dynamics and consensus formation as a function of team members communicating with one another and deliberating. While some of the aggregation functions we study (e.g., uniform average) can be thought of as the outcome of simple opinion formation dynamics, we leave the integration of a more detailed account of opinion evolution in teams for future work.

Last but not least, our analysis relies on a range of additional simplifying abstractions of the team formation process, including (1) the restriction to independent predictive models of the world across the different types of agents; (2) taking accuracy as an appropriate measure of team performance; (3) assuming that teams can accurately estimate the performance gains of increased diversity; and (4) assuming that $\lambda, \alpha,$ and $\beta$ are fixed across types and team compositions. While it would be interesting to see future work that relaxes some of these assumptions, the simplicity of our model enables it to serve as a useful conceptual metaphor capturing an inherent limitation of utility-centric motivations for improved informational diversity: for diverse teams to form and thrive, acknowledging the performance gains of informational diversity alone will not carry the day.

\pagebreak
\bibliographystyle{ACM-Reference-Format}
\bibliography{biblio}
\pagebreak

\appendix
\section{Omitted Technical Material}\label{app:derivations}

\begin{figure*}[t!]
    \centering
    \begin{subfigure}[b]{0.32\textwidth}
        \includegraphics[width=\textwidth]{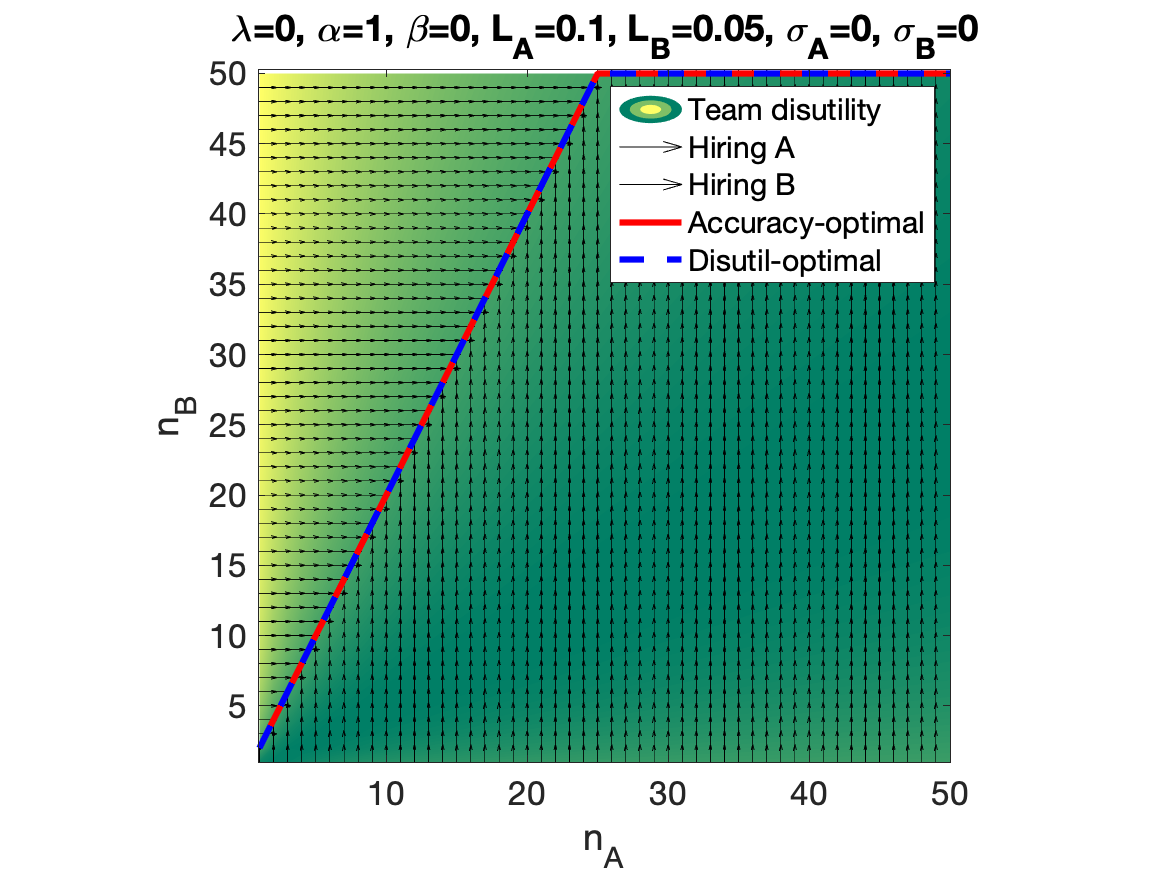}
    \end{subfigure}
    \begin{subfigure}[b]{0.32\textwidth}
        \includegraphics[width=\textwidth]{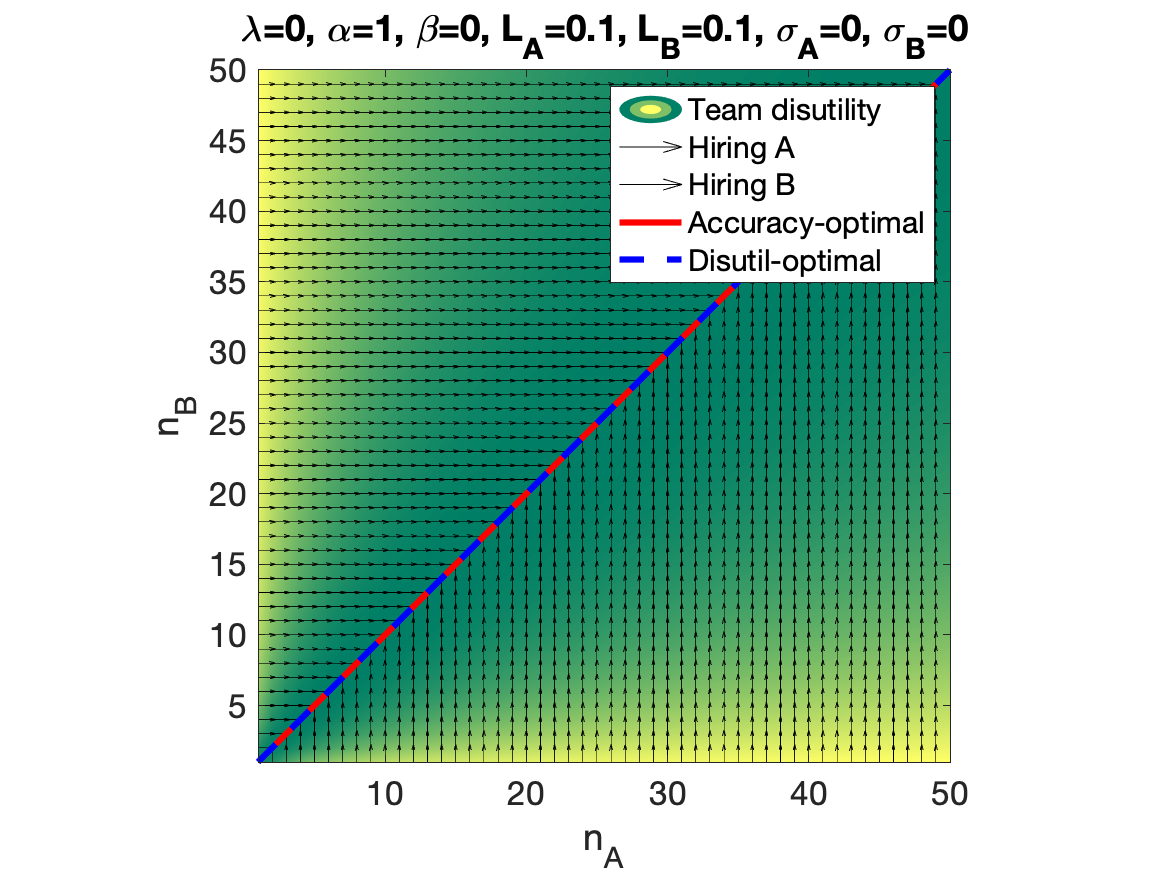}
    \end{subfigure}
    \begin{subfigure}[b]{0.32\textwidth}
        \includegraphics[width=\textwidth]{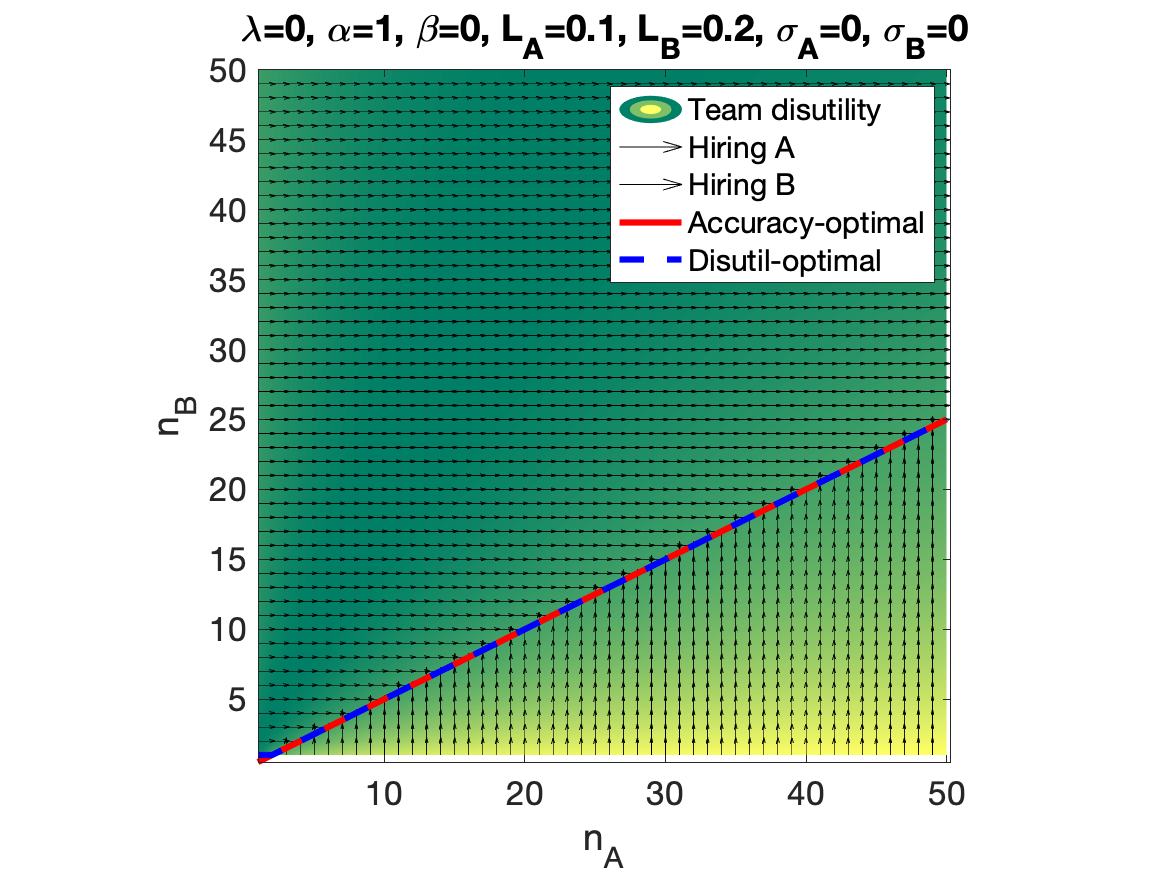}
    \end{subfigure}
    \caption{Visualization of team formation dynamics for $\lambda=0$ when $\alpha=1$ and $\sigma_A=\sigma_B=0$. (Note that in this setting, $n^*_B = \left(\frac{\cL^A}{\cL^B}\right)^{1/\alpha} n_A$.) The team formation dynamics converge to the accuracy/utility-optimal compositions. 
    }\label{fig:plain_lambda_0_alpha_1}
\end{figure*}

\begin{figure*}[t!]
    \centering
    \begin{subfigure}[b]{0.32\textwidth}
        \includegraphics[width=\textwidth]{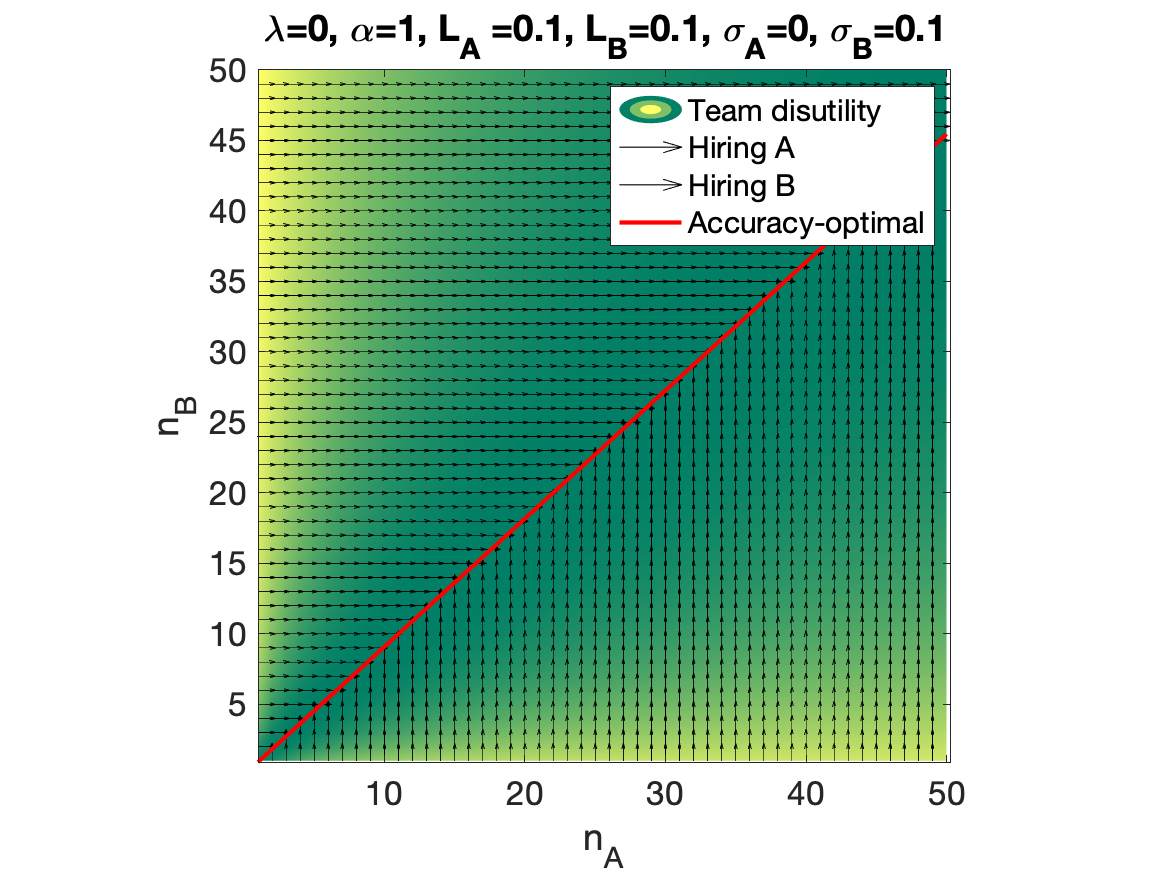}
    \end{subfigure}
    \begin{subfigure}[b]{0.32\textwidth}
        \includegraphics[width=\textwidth]{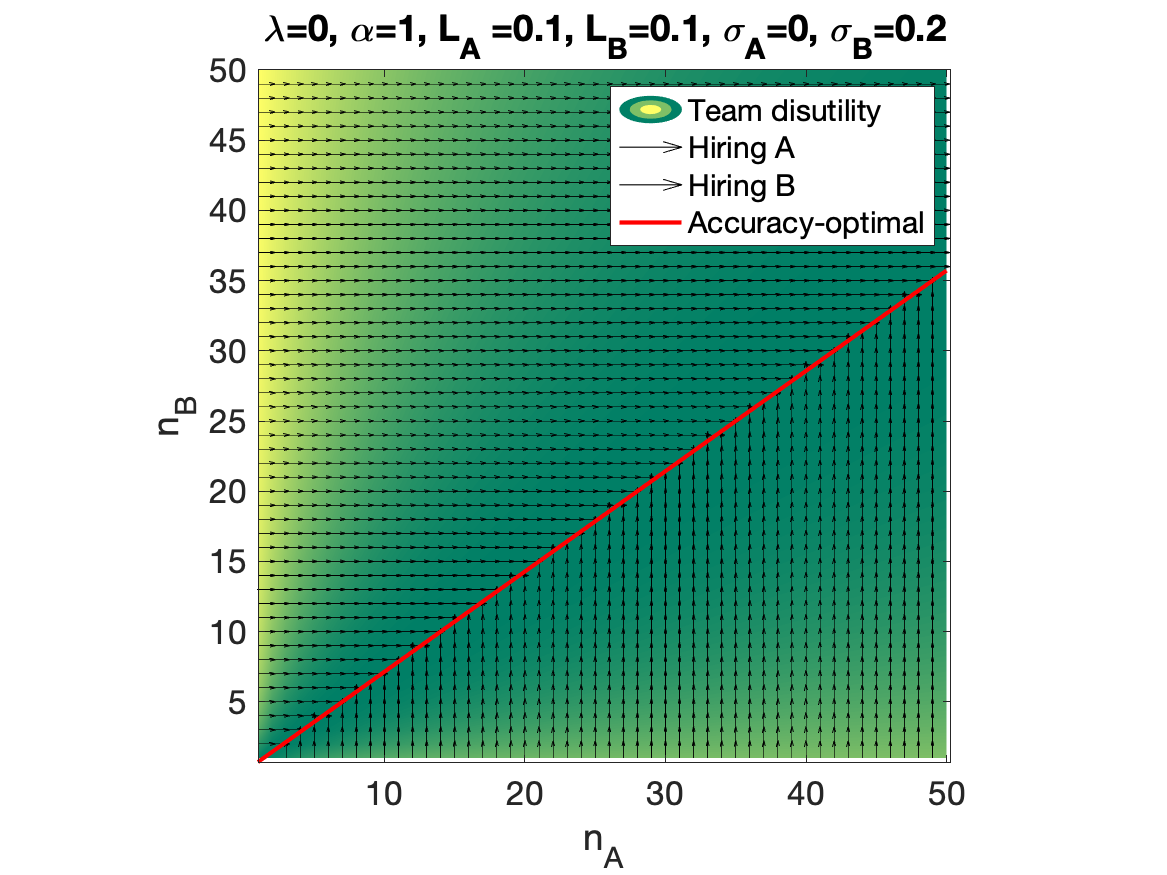}
    \end{subfigure}
    \begin{subfigure}[b]{0.32\textwidth}
        \includegraphics[width=\textwidth]{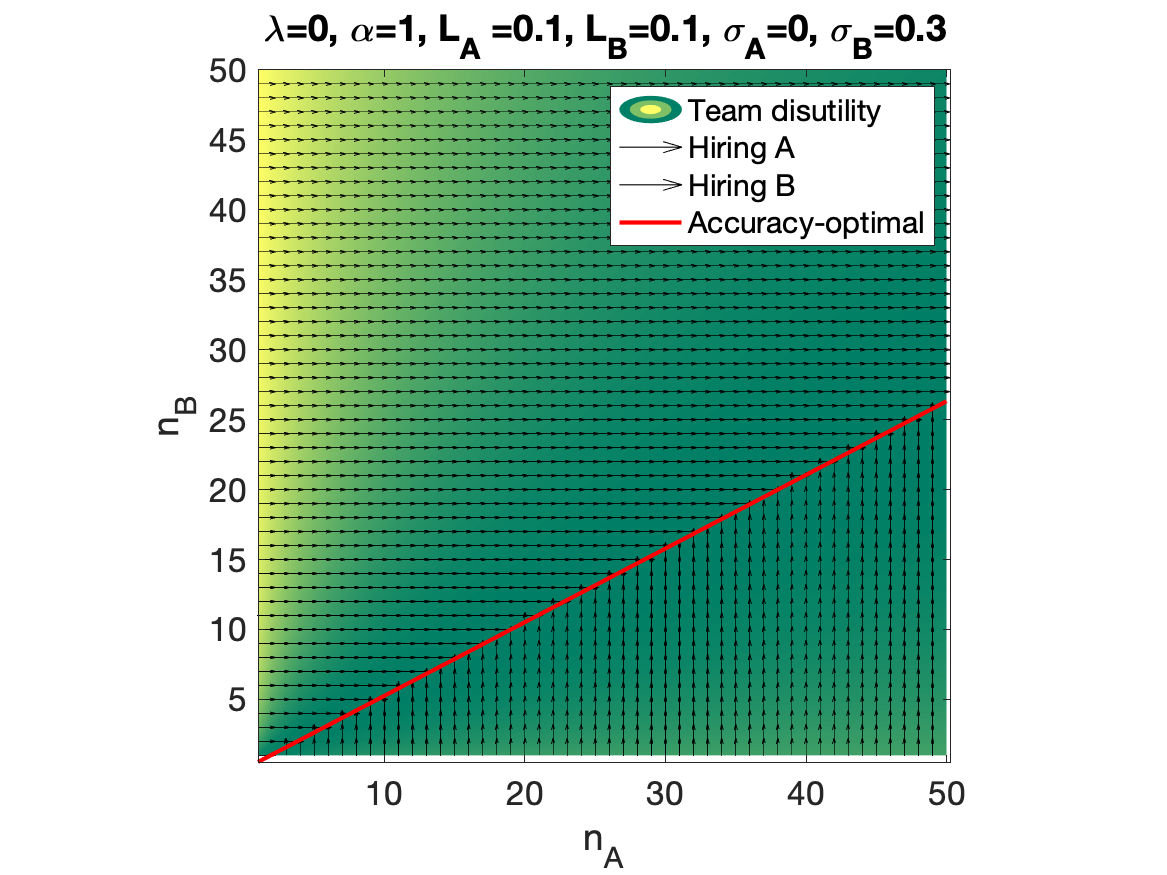}
    \end{subfigure}
    \caption{Visualization of team formation dynamics for $\lambda=0$ when $\alpha=1$ and $\sigma_A=0$ and $\sigma_B \in \{0.1, 0.2, 0.3\}$. Even though individual members of each type are noisy, it is never simultaneously beneficial to add a new member of type A and a new member of type $B$.
    }\label{fig:plain_lambda_0_alpha_1_noise}
\end{figure*}

\begin{figure*}
    \centering
    \begin{subfigure}{0.32\textwidth}
        \includegraphics[width=\textwidth]{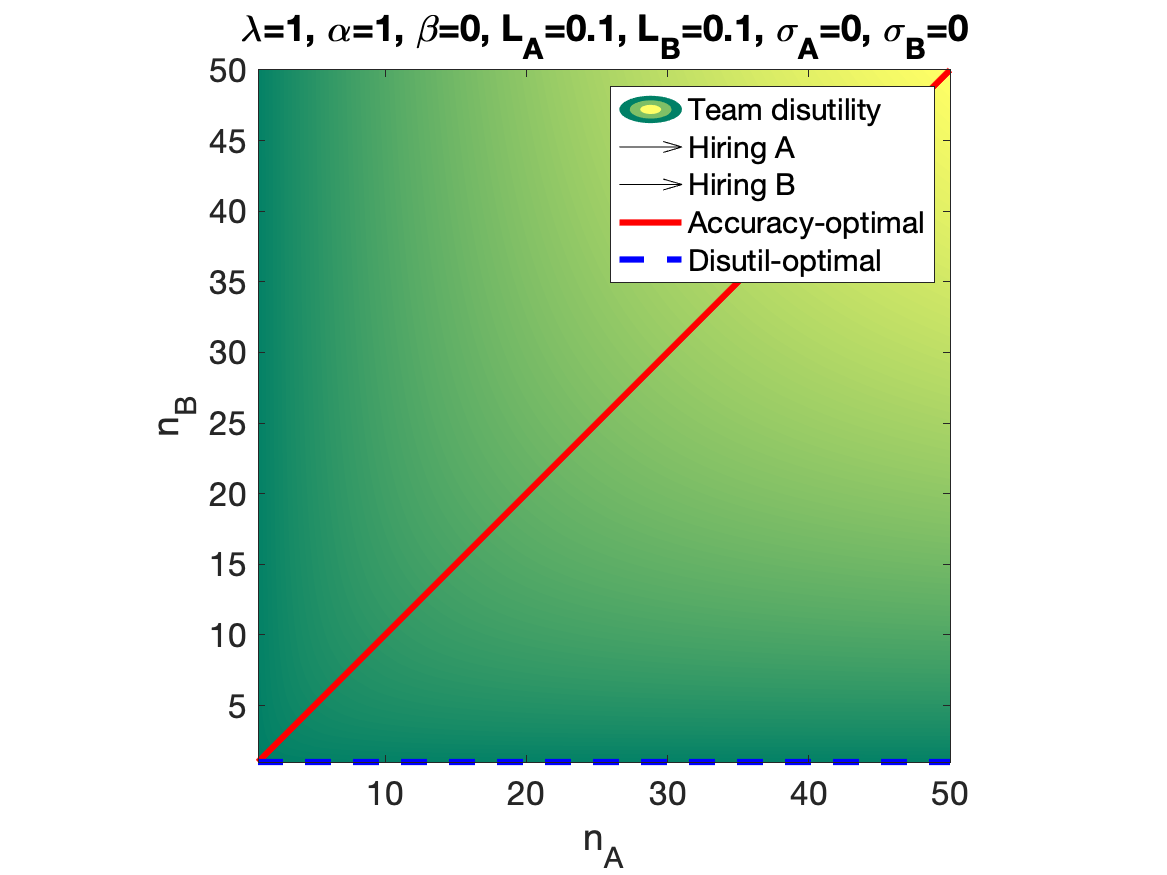}
    \end{subfigure}
    \begin{subfigure}{0.32\textwidth}
        \includegraphics[width=\textwidth]{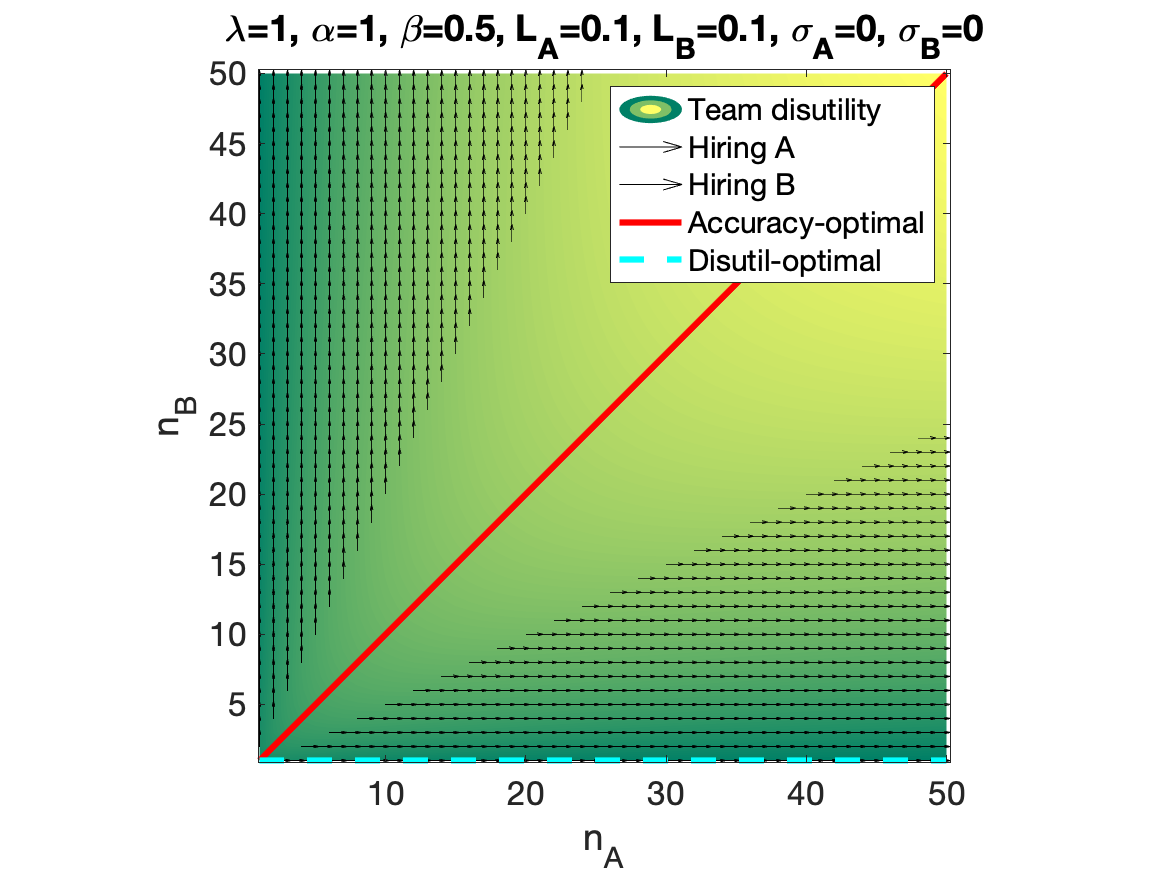}
    \end{subfigure}
    \begin{subfigure}{0.32\textwidth}
        \includegraphics[width=\textwidth]{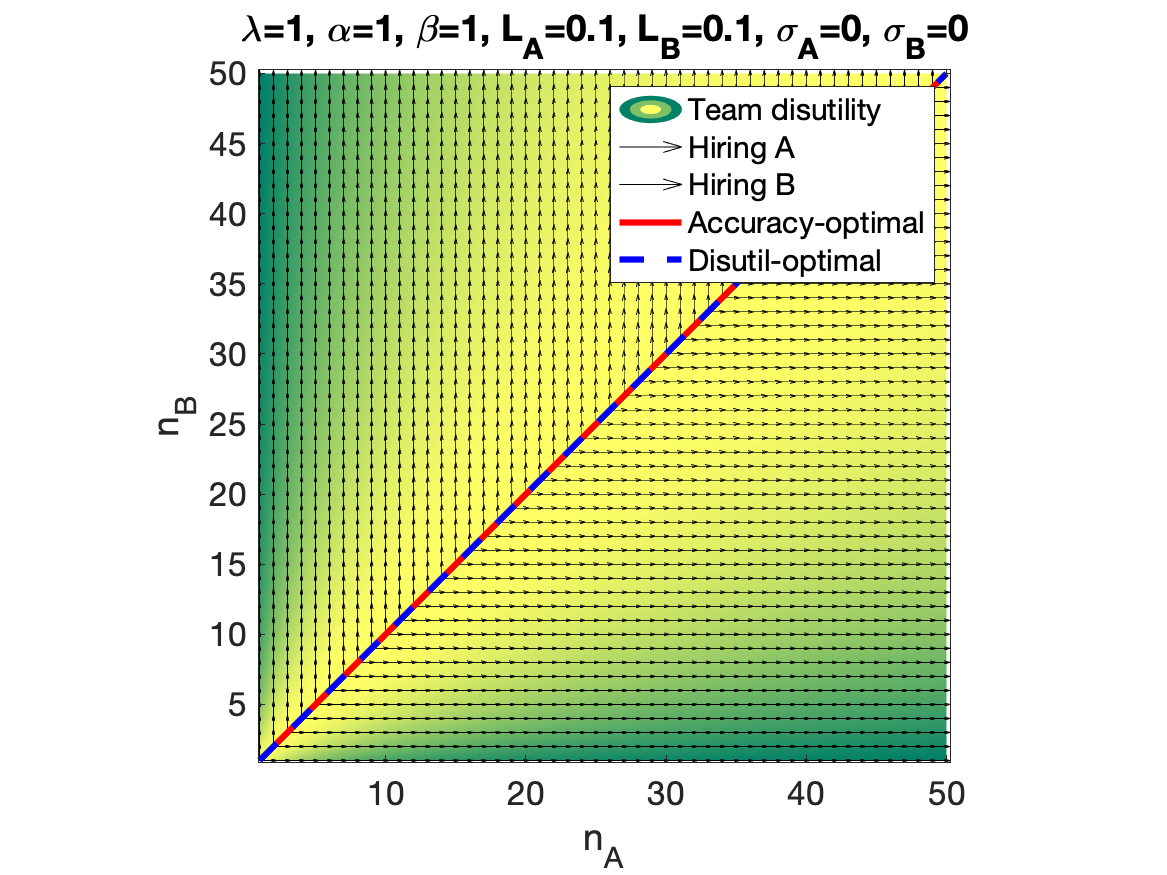}
    \end{subfigure}
    \caption{Visualization of team formation dynamics when $\lambda=1$ for several $\beta$ values (trends are similar for other values of $\cL^A, \cL^B$). Adding a new team member of the less-represented type is never beneficial. Adding a new member of the majority type only improves the team's disutility if $\beta$ is sufficiently large. 
    }
    \label{fig:plain_lambda_1}
\end{figure*}

\begin{figure*}[t!]
    \centering
    \begin{subfigure}[b]{0.32\textwidth}
        \includegraphics[width=\textwidth]{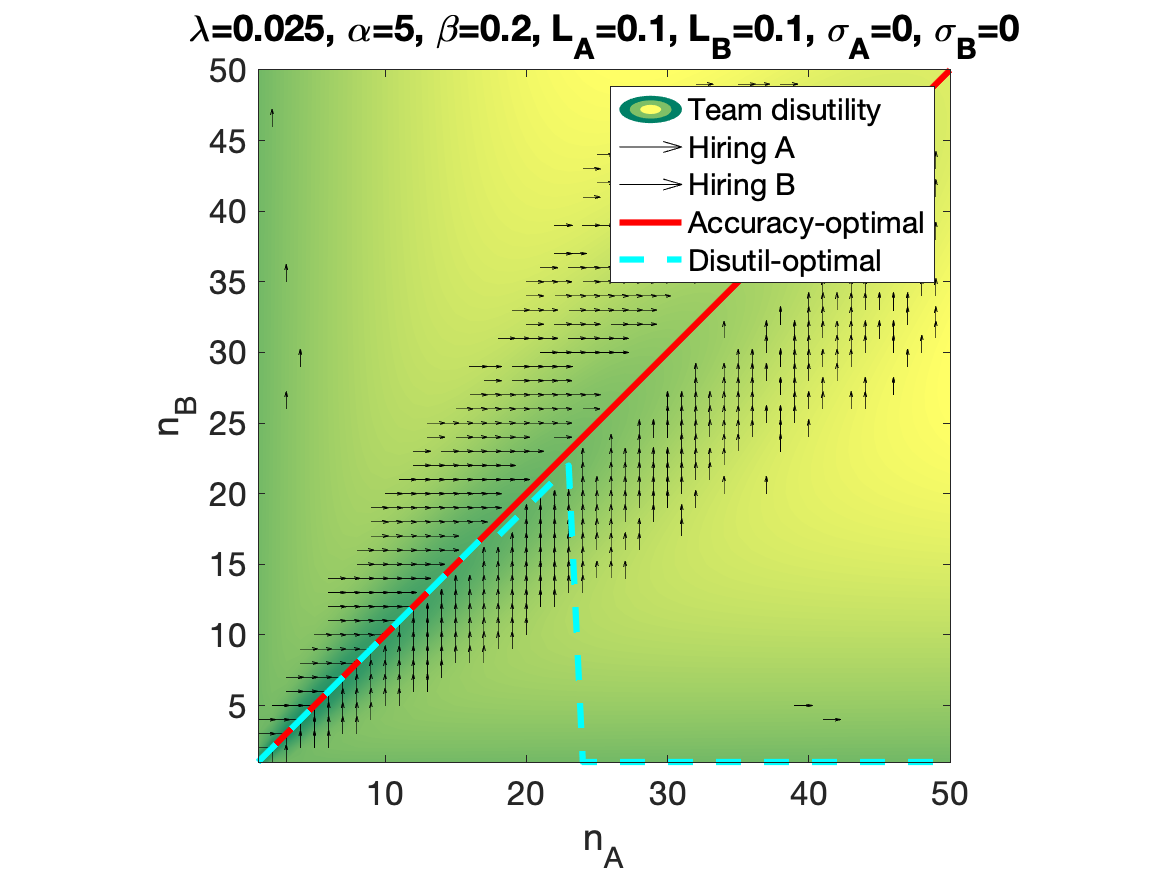}
    \end{subfigure}
    \begin{subfigure}[b]{0.32\textwidth}
        \includegraphics[width=\textwidth]{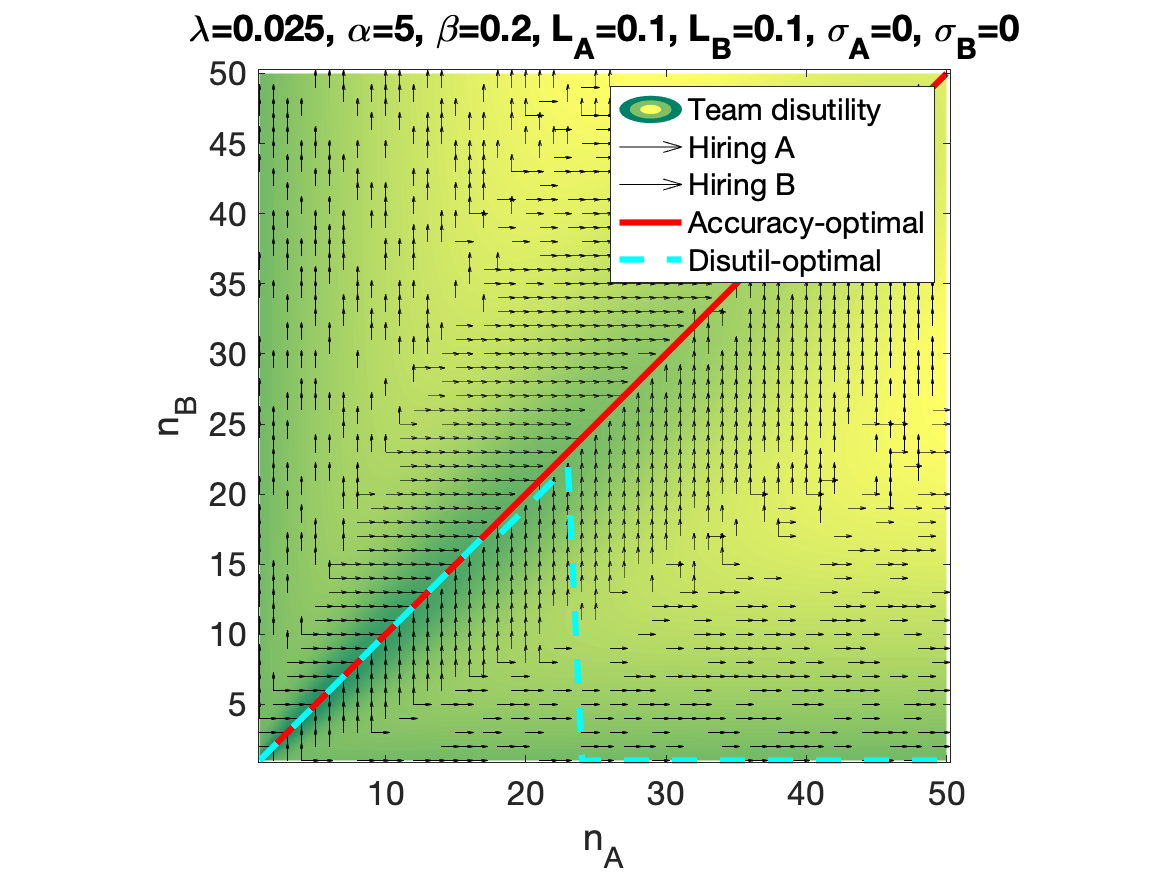}
    \end{subfigure}
    \begin{subfigure}[b]{0.32\textwidth}
        \includegraphics[width=\textwidth]{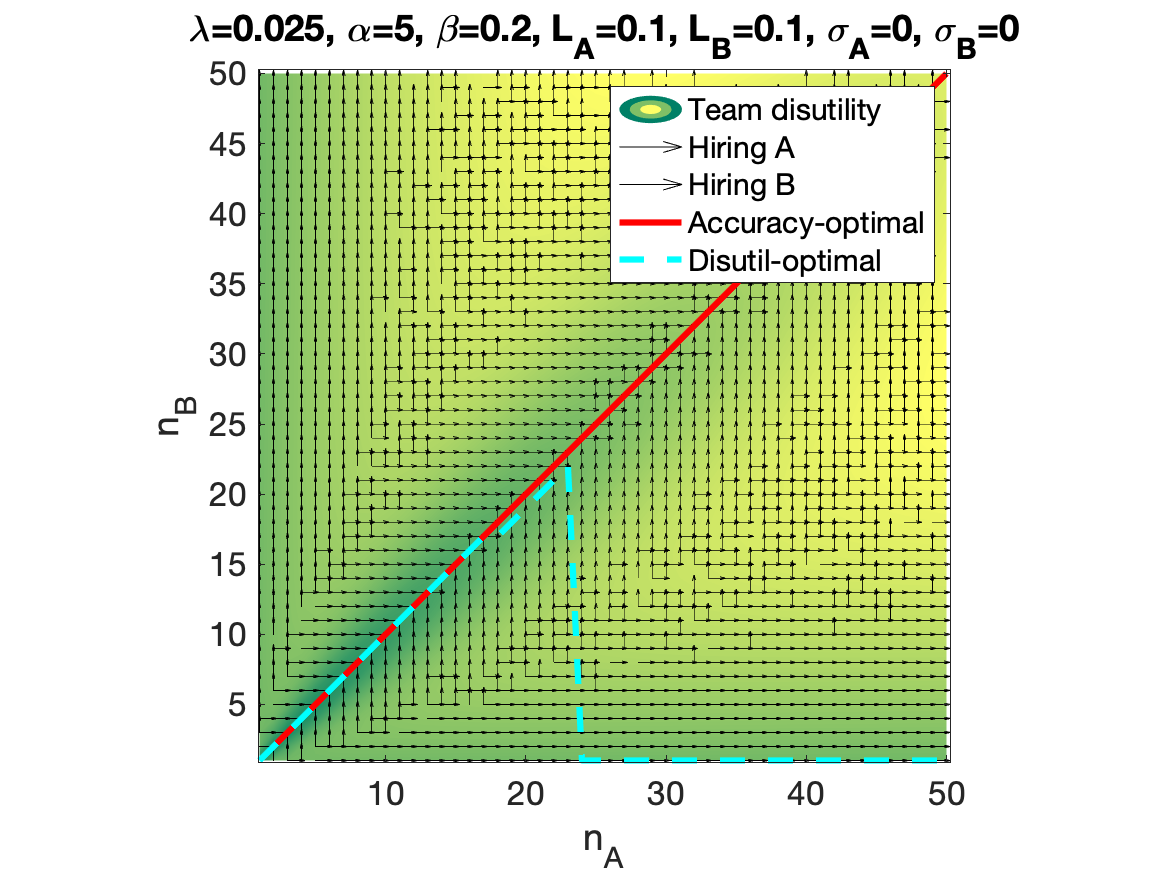}
    \end{subfigure}
    \caption{Visualization of team growth dynamics when assessments of accuracy gains are biased. When deciding whether to add a new member, the team's assessment of accuracy gains is corrupted by a bias term equal to  (a) $0.003$ (under-estimation of accuracy gains); (b) $0$ unbiased estimation of accuracy gains; (c) $-0.003$ (over-estimation of accuracy gains). When the estimation of accuracy gains are biased, the team may grow beyond accuracy optimal compositions, and adding a member of any type may be beneficial in certain compositions.}\label{fig:noisy_biased_assessment}
    \vspace{-3mm}
\end{figure*}

\hhedit{
\subsection{Extension to Three Types}\label{sec:three_types}

\xhdr{Error decomposition for three types} When $\lambda=0$, the team's mean-squared error can be decomposed into bias and variance terms as follows. All expectations are with respect to $(\vx,y) \sim P$ and $\epsilon_g \sim \cN(0,\sigma^2_g)$ for $g \in \{A,B,C\}$). For simplicity we will assume that all features are normalized such that $\Exp \left[ x_i^2\right] = 1$ for all $i$.

{\scriptsize
\begin{eqnarray*}
&&\Exp \left[ \left( \cG_T(\vx) - y \right)^2 \right]  =  \Exp \left[ \left( \cG_T(\vx) - f^*(\vx) \right)^2 \right] \\
&= & \Exp \left[ \left(\frac{n^\alpha_A}{n^\alpha_A + n^\alpha_B + n^\alpha_C} \theta^A(\vx) + \frac{n^\alpha_B}{n^\alpha_A + n^\alpha_B + n^\alpha_C} \theta^B(\vx) + \frac{n^\alpha_C}{n^\alpha_A + n^\alpha_B + n^\alpha_C} \theta^C(\vx) + \frac{n^\alpha_A \epsilon_A + n^\alpha_B\epsilon_B + n^\alpha_C\epsilon_C}{n^\alpha_A+n^\alpha_B +n^\alpha_C} - f^*(\vx) \right)^2 \right] \\
&= & \Exp \left[ \left(\frac{n^\alpha_A}{n^\alpha_A + n^\alpha_B + n^\alpha_C} \left(\theta^A(\vx) - f^*(\vx)\right) + \frac{n^\alpha_B}{n^\alpha_A + n^\alpha_B + n^\alpha_C} \left(\theta^B(\vx) - f^*(\vx)\right) + \frac{n^\alpha_C}{n^\alpha_A + n^\alpha_B + n^\alpha_C} \left(\theta^C(\vx) - f^*(\vx)\right) + \frac{n^\alpha_A \epsilon_A + n^\alpha_B\epsilon_B + n^\alpha_C\epsilon_C}{n^\alpha_A+n^\alpha_B +n^\alpha_C}  \right)^2 \right] \\
&=& \frac{n_A^{2\alpha}}{(n^\alpha_A + n^\alpha_B + n^\alpha_C)^2}  (\cL^A + \sigma^2_A)
+ \frac{n_B^{2\alpha}}{(n^\alpha_A + n^\alpha_B + n^\alpha_C)^2} (\cL^B + \sigma^2_B)
+ \frac{n_C^{2\alpha}}{(n^\alpha_A + n^\alpha_B + n^\alpha_C)^2} (\cL^C + \sigma^2_C)\\
&&+ \frac{n^\alpha_B n^\alpha_C}{(n^\alpha_A + n^\alpha_B + n^\alpha_C)^2} \Exp \left[\theta^A(\vx)^2\right]
+ \frac{n^\alpha_A n^\alpha_C}{(n^\alpha_A + n^\alpha_B + n^\alpha_C)^2} \Exp \left[\theta^B(\vx)^2\right]
+ \frac{n^\alpha_A n^\alpha_B}{(n^\alpha_A + n^\alpha_B + n^\alpha_C)^2} \Exp \left[\theta^C(\vx)^2\right],
\end{eqnarray*}
}
where in the last line we used the fact that $\text{cov}(x_i, x_j) = 0$ for all $j \neq i$, we know $\Exp \left[  \theta^A(\vx) \theta^B(\vx)   \right] = \Exp \left[  \theta^A(\vx) \theta^C(\vx)   \right] =\Exp \left[  \theta^C(\vx) \theta^B(\vx)   \right] =0$. Additionally, since noise terms are independent, we have $\Exp \left[\epsilon_A \epsilon_B \right] = \Exp \left[\epsilon_A \epsilon_C \right] =\Exp \left[\epsilon_C \epsilon_B \right] =0$.
Next, we translate the terms, $\Exp \left[\theta^g(\vx)^2\right]$, into a combination of loss terms, $\cL^g$'s, as follows.
\begin{eqnarray}\label{eq:three_types_A_loss}
    \cL^A &=& \Exp \left[ \left( f^*(\vx) - \theta^A(\vx)\right)^2 \right] \nonumber\\
    &=& \Exp \left[ \left( \theta^B(\vx) + \theta^C(\vx) \right)^2 \right] \nonumber\\
    &=& \Exp \left[ \theta^B(\vx) ^2 \right] + \Exp \left[ \theta^C(\vx)^2 \right]
\end{eqnarray}
where in the last line we used the fact that $\Exp \left[ x_i^2\right] = 1$ for all $i$.
With a similar logic, we obtain that:
\begin{equation}\label{eq:three_types_B_loss}
    \cL^B = \Exp \left[ \theta^A(\vx) ^2 \right] + \Exp \left[ \theta^C(\vx) ^2 \right]
\end{equation}
\begin{equation}\label{eq:three_types_C_loss}
    \cL^C = \Exp \left[ \theta^A(\vx) ^2 \right] + \Exp \left[ \theta^B(\vx) ^2 \right]
\end{equation}
Combing (\ref{eq:three_types_A_loss}), (\ref{eq:three_types_B_loss}), and (\ref{eq:three_types_C_loss}), we obtain:
\begin{equation}
    \Exp \left[ \theta^A(\vx) ^2 \right] = \cL^B + \cL^C - \cL^A
\end{equation}
\begin{equation}
    \Exp \left[ \theta^B(\vx) ^2 \right] = \cL^A + \cL^C - \cL^B
\end{equation}
\begin{equation}
    \Exp \left[ \theta^C(\vx) ^2 \right] = \cL^A + \cL^B - \cL^C
\end{equation}
Plugging in the above three equations into the error decomposition expression (), we obtain:
{\scriptsize
\begin{eqnarray*}
    \Exp \left[ \left( \cG_T(\vx) - y \right)^2 \right]  &=&  \frac{n_A^{2\alpha}}{(n^\alpha_A + n^\alpha_B + n^\alpha_C)^2}  (\cL^A + \sigma^2_A)
+ \frac{n_B^{2\alpha}}{(n^\alpha_A + n^\alpha_B + n^\alpha_C)^2} (\cL^B + \sigma^2_B)
+ \frac{n_C^{2\alpha}}{(n^\alpha_A + n^\alpha_B + n^\alpha_C)^2} (\cL^C + \sigma^2_C)\\
&&+ \frac{n^\alpha_B n^\alpha_C}{(n^\alpha_A + n^\alpha_B + n^\alpha_C)^2} \left(\cL^B + \cL^C - \cL^A\right)\\
&&+ \frac{n^\alpha_A n^\alpha_C}{(n^\alpha_A + n^\alpha_B + n^\alpha_C)^2} \left(\cL^A + \cL^C - \cL^B\right)\\
&&+ \frac{n^\alpha_A n^\alpha_B}{(n^\alpha_A + n^\alpha_B + n^\alpha_C)^2} \left(\cL^A + \cL^B - \cL^C\right),
\end{eqnarray*}
}

\xhdr{Derivation of disagreement term for three types}
When $\lambda=1$, the rate of disagreement can be computed as follows (all expectations are with respect to $(\vx,y) \sim P$ and $\epsilon_g, \epsilon'_g \sim \cN(0,\sigma^2_g)$ for $g \in \{A,B,C\}$):
\vspace{-2mm}

{\scriptsize
\begin{eqnarray*}
&&\frac{1}{(n_A + n_B + n_C)^{1+\beta}} \sum_{i,j \in T} d(i,j) = \frac{1}{(n_A + n_B + n_C)^{1+\beta}} \sum_{i,j \in T} \Exp \left[ (\theta_i(\vx) + \epsilon_i - \theta_j(\vx) - \epsilon_j)^2 \right]\\
&=&  \frac{2n_A n_B}{(n_A + n_B  + n_C)^{1+\beta}} \Exp\left[ (\theta^A(\vx) + \epsilon^A - \theta^B(\vx) - \epsilon^B)^2 \right] \\
&& + \frac{2n_A n_C}{(n_A + n_B  + n_C)^{1+\beta}} \Exp\left[ (\theta^A(\vx) + \epsilon^A - \theta^C(\vx) - \epsilon^C)^2 \right] \\
&& + \frac{2n_C n_B}{(n_A + n_B  + n_C)^{1+\beta}} \Exp\left[ (\theta^C(\vx) + \epsilon^C - \theta^B(\vx) - \epsilon^B)^2 \right] \\
&& + \frac{n_A (n_A-1)}{(n_A + n_B + n_C)^{1+\beta}} \Exp\left[ (\epsilon_A - \epsilon'_A)^2 \right] 
+ \frac{n_B (n_B-1)}{(n_A + n_B + n_C)^{1+\beta}} \Exp\left[ (\epsilon_B - \epsilon'_B)^2 \right] 
+ \frac{n_C (n_C-1)}{(n_A + n_B + n_C)^{1+\beta}} \Exp\left[ (\epsilon_C - \epsilon'_C)^2 \right]\\
&=&  \frac{2n_A n_B}{(n_A + n_B + n_C)^{1+\beta}} \Exp\left[(\theta^A(\vx)- \theta^B(\vx))^2 \right] 
+ \frac{2n_A n_B}{(n_A + n_B + n_C)^{1+\beta}} \Exp\left[ (\epsilon^A - \epsilon^B)^2 \right] \\
&&+\frac{2n_A n_C}{(n_A + n_B + n_C)^{1+\beta}} \Exp\left[(\theta^A(\vx)- \theta^C(\vx))^2 \right] 
+ \frac{2n_A n_C}{(n_A + n_B + n_C)^{1+\beta}} \Exp\left[ (\epsilon^A - \epsilon^C)^2 \right] \\
&&+ \frac{2n_C n_B}{(n_A + n_B + n_C)^{1+\beta}} \Exp\left[(\theta^C(\vx)- \theta^B(\vx))^2 \right] 
+ \frac{2n_C n_B}{(n_A + n_B + n_C)^{1+\beta}} \Exp\left[ (\epsilon^C - \epsilon^B)^2 \right] \\
&&+ \frac{n_A (n_A-1)}{(n_A + n_B + n_C)^{1+\beta}} \Exp\left[ (\epsilon_A - \epsilon'_A)^2 \right] 
+ \frac{n_B (n_B-1)}{(n_A + n_B + n_C)^{1+\beta}} \Exp\left[ (\epsilon_B - \epsilon'_B)^2 \right] 
+ \frac{n_C (n_C-1)}{(n_A + n_B + n_C)^{1+\beta}} \Exp\left[ (\epsilon_C - \epsilon'_C)^2 \right]\\
&=&  \frac{2n_A n_B}{(n_A + n_B + n_C)^{1+\beta}} \Exp\left[(\theta^A(\vx)- \theta^B(\vx) + f^*(\vx) - f^*(\vx))^2 \right]  \\
&& +\frac{2n_A n_C}{(n_A + n_B + n_C)^{1+\beta}} \Exp\left[(\theta^A(\vx)- \theta^C(\vx) + f^*(\vx) - f^*(\vx))^2 \right]  \\
&& +\frac{2n_C n_B}{(n_A + n_B + n_C)^{1+\beta}} \Exp\left[(\theta^C(\vx)- \theta^B(\vx) + f^*(\vx) - f^*(\vx))^2 \right]  \\
&& + \frac{ 2 n_A (n_B+n_C) + 2n_A (n_A-1)}{(n_A + n_B)^{1+\beta}} \Exp\left[ \epsilon_A^2 \right] + \frac{2 n_B (n_A + n_C) + 2n_B (n_B-1)}{(n_A + n_B)^{1+\beta}} \Exp\left[ \epsilon_B^2 \right]+ \frac{2 n_C (n_A + n_B) + 2n_C (n_C-1)}{(n_A + n_B)^{1+\beta}} \Exp\left[ \epsilon_B^2 \right]\\
&=&  \frac{2n_A n_B}{(n_A + n_B + n_C)^{1+\beta}} \left( \cL^B + \cL^A + \Exp \left[\theta^C(\vx)^2\right] \right)  \\
&&+ \frac{2n_A n_C}{(n_A + n_B + n_C)^{1+\beta}} \left( \cL^C + \cL^A + \Exp \left[\theta^B(\vx)^2\right] \right)\\
&&+ \frac{2n_C n_B}{(n_A + n_B + n_C)^{1+\beta}} \left( \cL^B + \cL^C + \Exp \left[\theta^A(\vx)^2\right] \right)
\\
&& + \text{noise terms as above}\\
&=&  \frac{2n_A n_B}{(n_A + n_B + n_C)^{1+\beta}} \left( 2\cL^A + 2\cL^B - \cL^C \right)  \\
&&+ \frac{2n_A n_C}{(n_A + n_B + n_C)^{1+\beta}} \left( 2\cL^A + 2\cL^C -\cL^B  \right)\\
&&+ \frac{2n_C n_B}{(n_A + n_B + n_C)^{1+\beta}} \left( 2\cL^A + 2\cL^B - \cL^C \right)
\\
&& + \text{noise terms as above}\\
\end{eqnarray*}
}

}

\end{document}